\definecolor{myblue}{RGB}{20,100,160}
\definecolor{mygreen}{RGB}{80,160,80}
\definecolor{myorange}{RGB}{130,80,50}
\newlength{\figurewidth}
\if@twocolumn\setlength{\figurewidth}{\columnwidth}
\else\setlength{\figurewidth}{0.618\columnwidth}
\if@twocolumn\newcommand{\eqbr}[1][]{\\&\  \ifx\relax#1\relax\else #1\fi}
\else\newcommand{\eqbr}[1][]{}
\if@twocolumn\newcommand{\eqbrl}{\Bigg\{}
\else\newcommand{\eqbrl}{}
\if@twocolumn\newcommand{\eqbrr}{\Bigg\}}
\else\newcommand{\eqbrr}{}
\if@twocolumn\newcommand{\lefteqbr}[2]{\lefteqn{#1}\\& #2}
\else\newcommand{\lefteqbr}[2]{#1 & #2}
		\newcommand{\longonly}[1]{\ifmmode #1\else {\color{red}#1}\fi}
		\newcommand{\shortonly}[1]{\ifmmode #1\else {\color{blue}#1}\fi}
		\newcommand{\shortlong}[2]{\ifmmode #1 #2\else {\color{red}#1} {\color{blue}{#2}}\fi}
				\newcommand{\longonly}[1]{#1}
				\newcommand{\shortonly}[1]{}
				\newcommand{\shortlong}[2]{#2}
				\newcommand{\longonly}[1]{}
				\newcommand{\shortonly}[1]{#1}
				\newcommand{\shortlong}[2]{#1}
\newcommand{\numberthis}{\addtocounter{equation}{1}\tag{\theequation}}
\newcommand{\suppress}[1]{}
\newcommand\xqed[1]{%
  \leavevmode\unskip\penalty9999 \hbox{}\nobreak\hfill
  \quad\hbox{#1}}
\newcommand\tqed{\xqed{$ \triangleleft$}}
\theoremstyle{definition}
\providecommand{\corollaryname}{Corollary}
\providecommand{\theoremname}{Theorem}
\newtheorem{definition}{Definition}
\newtheorem{example}{Example}
\theoremstyle{definition}
\newtheorem{theorem}{Theorem}
\newtheorem{corollary}{Corollary}
\newtheorem{lemma}{Lemma}
\newtheorem{proposition}{Proposition}
\theoremstyle{remark}
\newtheorem{remark}{Remark}
\DeclareMathOperator*{\argmax}{argmax}
\DeclareSymbolFont{lettersA}{U}{txmia}{m}{it}
 \DeclareMathSymbol{\bbr}{\mathbb}{lettersA}{"92}
 \DeclareMathSymbol{\bbc}{\mathord}{lettersA}{"83}
 \DeclareMathSymbol{\expect}{\mathord}{lettersA}{"85}
 \DeclareMathSymbol{\ent}{\mathord}{lettersA}{"88}
 \DeclareMathSymbol{\MI}{\mathord}{lettersA}{"89}
 \DeclareMathSymbol{\KL}{\mathord}{lettersA}{"84}
\DeclareMathSymbol{\EL}{\mathord}{lettersA}{"8C}
\DeclareMathSymbol{\EM}{\mathord}{lettersA}{"8D}
  \DeclareMathSymbol{\bbf}{\mathord}{lettersA}{"86}
   \DeclareMathSymbol{\bbn}{\mathbb}{lettersA}{"8E}
 \DeclareMathSymbol{\bbg}{\mathord}{lettersA}{"87}
\renewenvironment{proof}[1][\proofname]{\par
  \pushQED{\qed}%
  \normalfont \topsep6\p@\@plus6\p@\relax
  \trivlist
  \item[\hskip\labelsep
        \itshape
    #1\@addpunct{:}]\mbox{}\\*
}{%
  \popQED\endtrivlist\@endpefalse
}
\newcommand{\cA}{\mathscr{A}}
\newcommand{\cB}{\mathscr{B}}
\newcommand{\cC}{\mathscr{C}}
\newcommand{\cG}{\mathcal{G}}
\newcommand{\cK}{\mathscr{K}}
\newcommand{\cM}{\mathscr{M}}
\newcommand{\cR}{\mathscr{R}}
\newcommand{\cS}{\mathscr{S}}
\newcommand{\cT}{\mathscr{T}}
\newcommand{\cU}{\mathscr{U}}
\newcommand{\cV}{\mathscr{V}}
\newcommand{\cW}{\mathscr{W}}
\newcommand{\cX}{\mathscr{X}}
\newcommand{\cY}{\mathscr{Y}}
\newcommand{\cZ}{\mathscr{Z}}
\newcommand{\bB}{\mathbf{{B}}}
\newcommand{\bk}{k}
\newcommand{\bK}{K}
\newcommand{\bkA}{k_A}
\newcommand{\bKA}{K_A}
\newcommand{\bkB}{k_B}
\newcommand{\bKB}{K_B}
\newcommand{\bkC}{k_C}
\newcommand{\bKC}{K_C}
\newcommand{\bm}{m}
\newcommand{\bM}{M}
\newcommand{\bmh}{\hat{m}}
\newcommand{\bMh}{\hat{M}}
\newcommand{\bs}{s}
\newcommand{\bS}{S}
\newcommand{\bt}{t}
\newcommand{\bT}{T}
\newcommand{\bu}{{\mathbf{u}}}
\newcommand{\bU}{\mathbf{U}}
\newcommand{\bv}{\mathbf{v}}
\newcommand{\bV}{\mathbf{V}}
\newcommand{\bw}{\mathbf{w}}
\newcommand{\bW}{\mathbf{W}}
\newcommand{\bbW}{\mathbf{W}}
\newcommand{\bx}{\mathbf{x}}
\newcommand{\bX}{\mathbf{X}}
\newcommand{\by}{\mathbf{y}}
\newcommand{\bY}{\mathbf{Y}}
\newcommand{\bz}{\mathbf{z}}
\newcommand{\bZ}{\mathbf{Z}}
\newcommand{\Prob}{{\normalfont\textsf{{P}}}}
\newcommand{\Probc}{{\normalfont\textsf{{Q}}}}
\newcommand{\share}{T}
\newcommand{\REQ}{\mathscr{R}_{\tiny\mbox{\normalfont
equiv}}}
\newcommand{\RBCC}{\mathscr{R}_{\tiny\mbox{\normalfont
bcc}}}
\newcommand{\Func}[1]{\mbox{\scshape #1}}
\newcommand{\Enc}{\Func{Enc}}
\newcommand{\Dec}{\Func{Dec}}
\newcommand{\Fake}{\Func{Fake}}
\newcommand{\Msg}{\Func{Msg}}
\newcommand{\Fak}[1]{{#1}^{{\mbox{\tiny\sc (f)}}}}
\newcommand{\converseeps}{\gamma}
\newcommand{\AEN}[1]{{\cA_{\epsilon}^{(n)}(#1)}}
\newcommand{\set}[1]{\left\{#1\right\}}
\begin{document}
\bstctlcite{IEEEexample:BSTcontrol}
\title{Plausible Deniability over Broadcast Channels}
\author{Mayank~Bakshi\IEEEauthorrefmark{1}~\IEEEmembership{Member,~IEEE,} and Vinod~Prabhakaran\IEEEauthorrefmark{2}~\IEEEmembership{Member,~IEEE}
\thanks{\IEEEauthorrefmark{1}Mayank Bakshi (\emph{mayank@inc.cuhk.edu.hk}) is with the Institute of Network Coding, The Chinese University of Hong Kong. The work described in this paper was partially supported by a grant from University Grants Committee of the Hong Kong Special Administrative Region, China (Project No. AoE/E-02/08).}
\thanks{\IEEEauthorrefmark{2}Vinod Prabhakaran (\emph{vinodmp@tifr.res.in}) is with the Tata Institute of Fundamental Research, India. Vinod Prabhakaran's research was funded in part by a Ramanujan fellowship from the Department of Science and Technology, Government of India and in part by Information Technology Research Academy (ITRA), Government of India under ITRA-Mobile grant ITRA/15(64)/Mobile/USEAADWN/01.}
\thanks{A preliminary version of this work was presented at the 2016 IEEE International Symposium on Information Theory, Barcelona, Spain.}}
\maketitle
\begin{abstract}
In this paper, we introduce the notion of Plausible Deniability in an information theoretic framework.  We consider a scenario where an entity that eavesdrops through a broadcast channel summons one of the parties in a communication protocol to reveal their message (or signal vector). It is desirable that the summoned party have enough freedom to produce a fake output that is likely plausible given the eavesdropper's observation.  We examine three variants of this problem -- Message Deniability, Transmitter Deniability, and Receiver Deniability. In the first setting, the message sender is summoned to produce the sent message. Similarly, in the second and third settings, the transmitter and the receiver are required to produce the transmitted codeword, and the received vector respectively. For each of these settings, we examine the maximum communication rate that allows a given minimum rate of plausible fake outputs. For the Message and Transmitter Deniability problems, we fully characterise the capacity region for general broadcast channels, while for the Receiver Deniability problem, we give an achievable rate region for physically degraded broadcast channels. 
\end{abstract}
\section{Introduction}
The explosive growth in information technologies in recent years is not without its pitfalls. On one hand, advances in communications have enabled ground-breaking applications that have arguably  been instrumental in improving the general quality of life. On the other hand, the naturally connected nature of these technologies also presents a wide variety of security and privacy concerns. To counter these, much recent attention has also focused on designing and analyzing algorithms and protocols that guarantee security or privacy. It is worth noting that the security requirement often varies greatly with the application. Indeed, the consequences of security failure as well as the nature of eavesdropping parties differ from application to application. For example, for a user posting on a social network, the implication is often limited to loss of personal information to a potentially malicious party. On the other hand, for an whistleblower posting sensitive information to an accomplice, any security failure has potentially life-altering consequences. The nature of the eavesdropper is also different in these situations. In the first example, an eavesdropper is typically a passive party that simply listens to an ongoing transmission, and it is desirable that the content of the communication be kept hidden from the eavesdropper. On the other hand, in the second example, the eavesdropper may often be an authority that has the power to \emph{coerce} the whistleblower to reveal the transmitted message. In this case, it is important that the whistleblower is able to \emph{deny} the fact that any sensitive communication has taken place by producing a fake message that appears plausible to the coercing party.

We argue that while much of the work in secure communication is well suited to the first scenario, {\em i.e.}, the ability to \emph{hide} data, there is relatively little work that applies to the second scenario. For the first scenario, by now, there is are well developed theoretical results as well as practical algorithms both in the \emph{cryptographic}~\cite{KatzL:07} as well as \emph{information theoretic}~\cite{Wyner:75,CsiszarK:78,BlochB:11} settings. However, there is limited understanding of both fundamental limits and algorithms for the second setting. In this paper, we propose an information theoretic framework for \emph{Plausibly Deniable} communication in the sense just described. In the following, we begin with an overview of some related notions of security and contrast these with our notion of {Plausibly Deniable communication}. 
  
\subsection{Related notions}
\subsubsection{Information theoretic secrecy}
\label{sec:infosecrecy}
Usually secure protocols aim to hide data from an eavesdropper by taking advantage of  some \emph{asymmetry} between the legitimate receiver and the eavesdropper -- the eavesdropper should be ``\emph{less powerful}'' than the legitimate receiver. The framework of  information theoretic secrecy relies on the eavesdropper having ``less information'' than the intended receiver and provides guarantees that hold irrespective of the eavesdropper's computational ability. For example, in the wiretap channel setting~\cite{Wyner:75,CsiszarK:78} (See Figure~\ref{fig:WTC}) the eavesdropper may observe Alice's transmission through a noisier channel than Bob does. On similar lines, in the secure network coding setting~\cite{CaiY:02}, the eavesdropper may observe a smaller subset of the transmission than legitimate nodes. In each of these settings, the information-theoretic approach allows characterizing the ``capacity'', which is defined as the maximum code rate such that \emph{(a)} the intended receiver can decode the secret message $\bm$ {\em reliably} given her received vector $\by$, {\em i.e.}, $\Prob(\hat{\bm}(\by)\neq\bm)\approx 0$, and \emph{(b)} the eavesdropper can gain very little statistical information about the secret message $\bm$ given her observation $\bz$, \emph{i.e.}, $\Prob(\bm|\bz)\approx\Prob(\bm)$. Note here that there is \emph{no} restriction placed on the computational power of the eavesdropper. As a result, schemes that guarantee information theoretic security are free of computational assumptions and as a result are guaranteed to be secure against any future developments in fast computing.  

We argue that even though information theoretic secrecy is perfectly suited when the goal is to only hide the data against a passive eavesdropper, it does not guarantee any protection against eavesdroppers that have the ability to summon one of the communicating parties. The reason for this is as follows. At a high level, information theoretic secrecy is achieved by ensuring that the eavesdropper has a large enough list of candidate messages that appear roughly equiprobable. On the other hand, plausible deniability requires the summoned party to produce one such candidate message \emph{without knowing} the eavesdropper's channel realisation. The following example illustrates this difference more concretely.
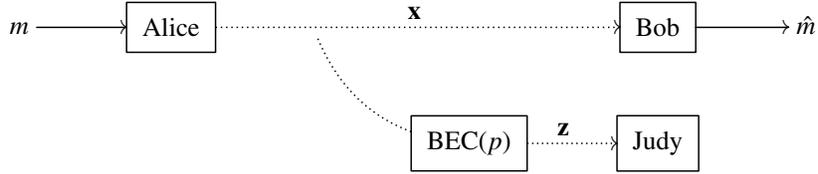
\begin{figure}[t]\begin{displaymath}
\renewcommand{\labelstyle}{\textstyle}
\scalebox{1}{
\xymatrix@C+10pt{
\bm\ar@{->}[r]&
 *++[F]{\mbox{Alice}}\ar@{.>}[rrr]^{ \bx}&\ar@{.}@/_1pc/[dr]&
 &*++[F]{\mbox{Bob}}\ar[r]&\bmh \\
  &&&*++[F]{\mbox{BEC($p$)}}\ar@{.>}[r]^{ \bz}&*++[F]{\mbox{Judy}}\\
}}
\end{displaymath}\caption{Alice wishes to communicate a message $\bm$ to Bob by sending a codeword $\bx$ over a noiseless binary channel while an eavesdropper Judy observes $\bx$ through a binary erasure channel with erasure probability $p>0$. Note that, in order to avoid being detected as lying, the summoned party's output should appear plausible to Judy given her side information $\bz$. In particular, for the channel in this example, both Alice and Bob are forced to reveal their true codewords ({\em i.e.}, $\bx$) to Judy.  This example also shows a contrast between the standard notion of secrecy and the plausible deniability requirement.}\label{fig:examplebec}
\end{figure}\begin{example}[Secrecy does not guarantee plausible deniability]\label{eg:pdmotivation}  Consider the setting of Figure~\ref{fig:examplebec}. Since the channel to Bob is noiseless, the secrecy capacity~\cite{CsiszarK:78} is $p$. On the other hand, even if Alice and Bob operate a code equipped with an information-theoretic secrecy guarantee and Judy {\em demands} that Alice provide the transmitted codeword $\bx$, Alice has no choice but to provide exactly what was transmitted (and hence, also reveal the message). If Alice chooses to provide a vector $\bx'$ different from $\bx$, then Judy would be able to detect with a constant probability that Alice is lying since the transmitted symbol for any coordinate where $\bx'$ and $\bx$ differ would be received correctly by Judy with probability $1-p$.\end{example}
\subsubsection{Cryptographic security}\label{sec:crypto}
In the cryptographic setting, the asymmetry between the legitimate receiver and the eavesdropepr usually manifests itself through complexity theoretic notions. For example, in a \emph{public key cryptosystem}, the receiver holds a pair of carefully chosen keys $(\bk_{\mbox{\small public}},\bk_{\mbox{\small private}})$. The public key $\bk_{\mbox{\small public}}$ is known to all parties including the eavesdropper, while the private key $\bk_{\mbox{\small private}}$  is known only to the eavesdropper. This allows the sender to encrypt the message $\bm$ to the ciphertext $\bx=\Enc(\bm,\bk_{\mbox{\small public}})$. The encryption algorithm is chosen such that the receiver can use his private key to decrypt the ciphertext to obtain the message as $\bm=\Dec(\bx,\bk_{\mbox{\small public}},\bk_{\mbox{\small private}})$ in polynomial time. On the other hand, without knowing $\bk_{\mbox{\small private}}$, the eavesdropper cannot efficiently compute $\Enc^{-1}(\bx,\bk_{\mbox{\small public}})$ (under reasonable computational assumptions). However, even if the eavesdropper is unable to invert the ciphertext on their own, if they have the ability to summon the receiver to produce the private key, the receiver may have no choice but to respond truthfully by revealing the true private key, else the ciphertext and the public key may not be consistent with it.


\subsubsection{Deniable Encryption}\label{sec:DE}
The notion of Deniable Encryption was first introduced by Canetti \emph{et al.} in~\cite{CanettiDNO:97} recognizing  the above problem of lack of plausible deniability in the cryptographic setting.\footnote{Also related is the notion of \emph{uncoercible communication} introduced by Benaloh et al.~\cite{BenT:94}.} Here, the typical setting is as follows. Consider a public key setting as described in Section~\ref{sec:crypto}. Unlike the setting of Section~\ref{sec:crypto} the eavedropper Judy who has bounded computational power both observes the ciphertext and can issue a summon to Bob coercing him to revealing the message. The framework of Deniable Encryption allows for encryption schemes such that upon receiving Judy's summon, Bob is able to produce a fake private key $\Fak{\bk}_{\mbox{\small public}}$ which decrypts the ciphertext to a fake message $\Fak{\bm}$ while appearing plausible to Judy. In other words, there is no polynomial time algorithm, using which Judy is able to determine whether Bob has responded with the true public key or a fake public key. Note that usual public key protocols such as RSA do not allow Bob to produce a fake key for every pair of $(\bm,\bk_{\mbox{\small public}})$. This notion has received much attention in recent years. By now, there are fairly extensive theoretical and practical developments along this line (\emph{c.f.}~\cite{ONeillPW:11,Truecrypt,SahaiW:14} and the references therein).

\subsubsection{Covert Communication}\label{sec:coco}
In both the secrecy and the plausible deniability problems considered above, while the goal is to be able to hide the message that is being transmitted, the implicit assumption is that it is permissible for some form of communication to take place. However, in the setting of \emph{covert commmunication}~\cite{BashGT:13,CheBJ:13,Bloch:16,WangWZ:16}, even the fact that any communication is taking place is objectionable from the eavesdropper's point of view. For example, the communicating parties may be two  prisoners in adjacent cells that wish to communicate without the warden knowing that they are doing so. In this setting, the goal is to ensure that from the warden's point of view, the output distribution induced by non-zero transmissions appear close to that under zero transmission. The capacity for this problem is now well understood and follows the so called \emph{square-root law} -- in $n$ channel uses, only $O(\sqrt{n})$ message bits can possibly be transmitted without being detected. Note that the notion of covertness only guarantees that the eavesdropper be unable to distinguish no transmission from a non-zero transmission; it does not necessarily prevent the eavesdropper from gaining any information about the potential message, if she assumes that something was transmitted.\footnote{One can also demand both covertness and secrecy simultaneously. By operating at even lower rates (though still $O(\sqrt{n})$ bits per $n$ channel uses), it is possible to be covert about the transmission status and secret about the message being potentially transmitted.~\cite{CheBCJ:14,Bloch:16}.}  Therefore, the covertness requirement only implies a weak form of plausible deniability -- the transmitter can claim that no transmission took place when something was transmitted. However, it does not necessarily allow the communicating parties to claim the transmission of a message different from the true message. 
\subsection{Our work} \label{sec:itpd}
Taking inspiration from the formulation of Deniable Encryption discussed in Section~\ref{sec:DE},  we propose an information theoretic approach to plausible deniability. While the approach in Section~\ref{sec:DE} relies on cryptographic assumptions, \emph{i.e.}, the assumption that the eavesdropper is computationally limited without access to the receiver's private key, we assume that the eavesdropper has potentially unlimited computational power, but the eavesdropper and the legitimate receiver have different channels statistics. In this setting, the sender can leverage this difference by careful encoding that allows the receiver to decode the message correctly while leaving enough room for confusion such that, if summoned, transmitter and the receiver are able produce fake messages or codewords that appear statistically indistinguishable from the true message or codeword to the eavesdropper given his channel observation. 
\subsubsection{Our setup} Our general setup is as follows. Alice, Bob, and Charlie are three participants in a potentially secretive communication setup. Charlie  wishes to send a message $\bm\in\cM$ to Bob through Alice. Alice and Bob are at two ends of a noisy channel and operate the physical layer with Alice being the transmitter and Bob being the receiver, while Charlie interacts directly with  Alice and knows the message but does not  partake in the physical layer transmission and reception. The nature of the message may either be an innocuous  or a secretive one -- this is known to Alice, Bob, and Charlie, but not to any eavesdroppers. 

Judy is an eavesdropper who observes a noisy version of Alice's transmission. In this work,  we assume that the statistics of Judy's observation are known to the above three parties, but the exact observation is unknown. We consider three settings for this problem. In the Transmitter Deniability problem, Judy may summon Alice and ask her to produce the transmitted codeword. Similarly, in the Receiver Deniability, and the Message Deniability problems, Judy may  summon Bob, and Charlie, to produce the received vector, and the message, respectively. In each of these settings, depending on whether the communication is innocuous or secretive, the summoned party may either respond truthfully or use a {\em Faking Procedure} to produce a fake output that reveals as little information about the true message as possible while still maintaining plausibility with respect to Judy's observation. 

We quantify the efficacy of a communication scheme in terms of its two properties -- the {\em reliability} of the code  and the {\em plausible deniability} of the faking procedure. The first property {\em i.e.}, the reliability is measured in a  standard fashion in terms of the {\em message rate} and the {\em error probability} at the decoder. Plausible deniability is also measured in terms of two metrics -- the {\em plausibility} and the {\em rate of deniability}. Roughly speaking, plausibility measures the closeness between two distributions -- the joint distribution of the fake output with the eavesdropper's observation and that of the true  message or signal vector with the eavesdropper's observation. We measure this distance in terms of the Kullback-Leibler (K-L) divergence.\footnote{Although, in this paper, we measure the plausibility in terms of K-L divergence, one is also well justified to instead use other measures of distance such as the variational distance. We argue that K-L divergence is a stronger measure for our problem as requiring that the K-L divergence be small also implies that the variational distance is small (by invoking Pinsker's inequality). Further, using K-L divergence instead of variational distance considerably simplifies our converse proofs. It is worth noting that  the variational distance has a natural interpretation in terms of Hypothesis Testing -- the  variational distance between two probability measures $\Prob_1$ and $\Prob_2$ equals $1-\Pr(\textrm{test outputs }\Prob_2|\textrm{true distribution is }\Prob_1)-\Pr(\textrm{test outputs }\Prob_1|\textrm{true distribution is }\Prob_2)$ for an optimal hypothesis test for distinguishing $\Prob_1$ and $\Prob_2$. }  The rate of deniability is measured as the conditional entropy of the fake message given the summoned party's observations. This attempts to capture the amount of freedom the summoned party has while responding to the summons. The rate of deniability may also be roughly interpreted as a measure of equivocation at the eavesdropper after the summoned party is forced to respond. Strictly speaking, the rate of deniability is a purely operational characteristic of the faking procedure and our formal definition of the rate of deniability does not appear to be related to equivocation. However, when the faking procedure satisfies the plausibility requirement, we establish an asymptotic equivalence between these two notions in Propositions~\ref{prop:msgDEQ} and~\ref{prop:cweq}. We also emphasise here that demanding a rate of deniability $D$ is a stronger requirement than demanding an equivocation $D$ in the usual information theoretic secrecy setting -- this naturally extends similar observations in the cryptographic setting where, a plausibly deniable protocol trivially also satisfies the security requirement.

\subsubsection{Organization of this paper}The rest of this paper is organised as follows. In Section~\ref{sec:formulation}, we formally describe our notation and problem formulation and state the main results in Section~\ref{sec:mainresults}. In Sections~\ref{sec:message} and~\ref{sec:codeword}, we give proof sketches for our theorems, and discuss some examples and key properties of our capacity regions. Finally, in Section~\ref{sec:discussions}, we provide concluding remarks.

\section{Problem Formulation}\label{sec:formulation}

\begin{figure*}\begin{center}
\begin{subfigure}{\textwidth}\begin{displaymath}
\renewcommand{\labelstyle}{\textstyle}
 \scalebox{1}{
\xymatrix@C+10pt{
&\bkC\in\cK\ar[d]&\bkA\in\cK\ar[d]&&&*++[F]{\mbox{Bob}}\ar[r] &\bmh=\Dec(\by)\\
{\bm\in\cM}\ar[r]&*++[F]{\mbox{Charlie}}\ar@{->}[r]^{\bm}\ar@/_2pc/[drrrr]_(.5){\Fak{\bm}={\textrm{\sc Fake}}(\bm,\bkC)}&
 *++[F]{\mbox{Alice}}\ar@{.>}[rr]^{ \bx={\textrm{\sc Enc}}(\bm,\bkA)}&&*++[F]{p(y,z|x)} \ar@{.>}@/^1pc/[ur]^{\by} \ar@{.>}@/_1pc/[dr]^{\bz}&
 & &\\
  &&&
&&*++[F]{\mbox{Judy}}\\
}}
\end{displaymath}\caption{Message deniability}\label{fig:messagedeniability}
\end{subfigure}\\*[4em]
\begin{subfigure}{\textwidth}\begin{displaymath}
\renewcommand{\labelstyle}{\textstyle}
 \scalebox{1}{
\xymatrix@C+10pt{
&&\bkA\in\cK\ar[d]&&&*++[F]{\mbox{Bob}}\ar[r] &\bmh=\Dec(\by)\\
{\bm\in\cM}\ar[r]&*++[F]{\mbox{Charlie}}\ar@{->}[r]^{\bm}&
 *++[F]{\mbox{Alice}}\ar@{.>}[rr]^{ \bx={\textrm{\sc Enc}}(\bm,\bkA)}\ar@/_2pc/[drrr]_(.5){\Fak{\bx}={\textrm{\sc Fake}}(\bx,\bkA)}&&*++[F]{p(y,z|x)} \ar@{.>}@/^1pc/[ur]^{\by} \ar@{.>}@/_1pc/[dr]^{\bz}&
 & &\\
  &&&
&&*++[F]{\mbox{Judy}}\\
}}
\end{displaymath}\caption{Transmitter Deniability}\label{fig:transmitterdeniability}
\end{subfigure}\\*[4em]
\begin{subfigure}{\textwidth}\begin{displaymath}
\renewcommand{\labelstyle}{\textstyle}
 \scalebox{1}{
\xymatrix@C+10pt{
&&&&&\bkB\in\cK\ar[d]\\
&&\bkA\in\cK\ar[d]&&&*++[F]{\mbox{Bob}}\ar[r]\ar@/^2pc/[dd]^(.5){\Fak{\by}={\textrm{\sc Fake}}(\by,\bkB)} &\bmh=\Dec(\by)\\
{\bm\in\cM}\ar[r]&*++[F]{\mbox{Charlie}}\ar@{->}[r]^{\bm}&
 *++[F]{\mbox{Alice}}\ar@{.>}[rr]^{ \bx={\textrm{\sc Enc}}(\bm,\bkA)}&&*++[F]{p(y,z|x)} \ar@{.>}@/^1pc/[ur]^{\by} \ar@{.>}@/_1pc/[dr]^{\bz}&
 & &\\
  &&&&&*++[F]{\mbox{Judy}}\\
}}
\end{displaymath}\caption{Receiver Deniability}\label{fig:receiverdeniability}
\end{subfigure}\end{center}
\caption{The above figure shows the three different problem settings considered in this paper. These settings have the following commonalities: Charlie knows only the message $\bm$ and may have access to an independently generated private random string $\bkC$; Alice knows the message $\bm$, an independently generated  private random string $\bkA$ and the transmitted codeword $\bx$; Bob observes the channel output $\by$ and potentially has an independently generated  private random string $\bkB$, and is required to reconstruct $\bm$;  Judy observes the channel output $\bz$. However, depending on the setting we consider, Judy summons Charlie, Alice, or Bob to produce $\bm$, $\bx$, or $\by$ respectively. The summoned party responds with a fake output $\Fake(\cdot)$ that has roughly the same distribution as the variable Judy demands to know. In each setting, the fake output is a function of the true value of variable demanded and the independent private randomness available to the summoned party.}\label{fig:setup}
\end{figure*}
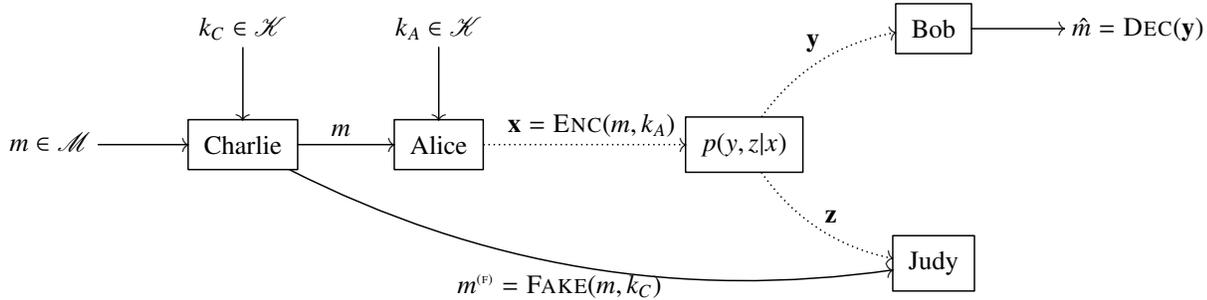
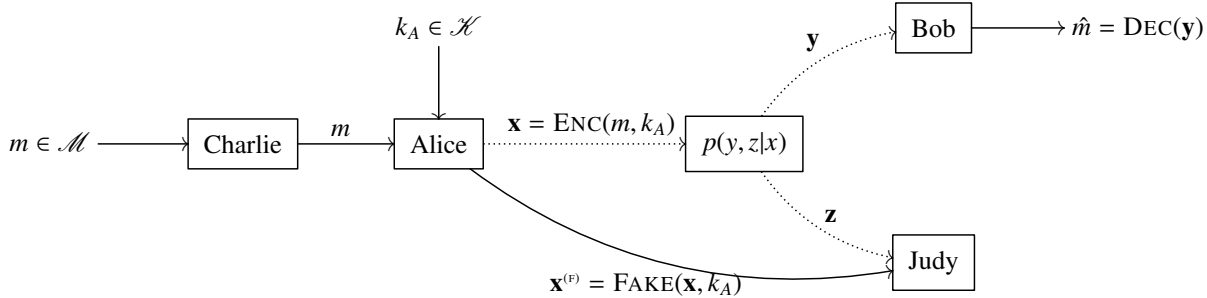
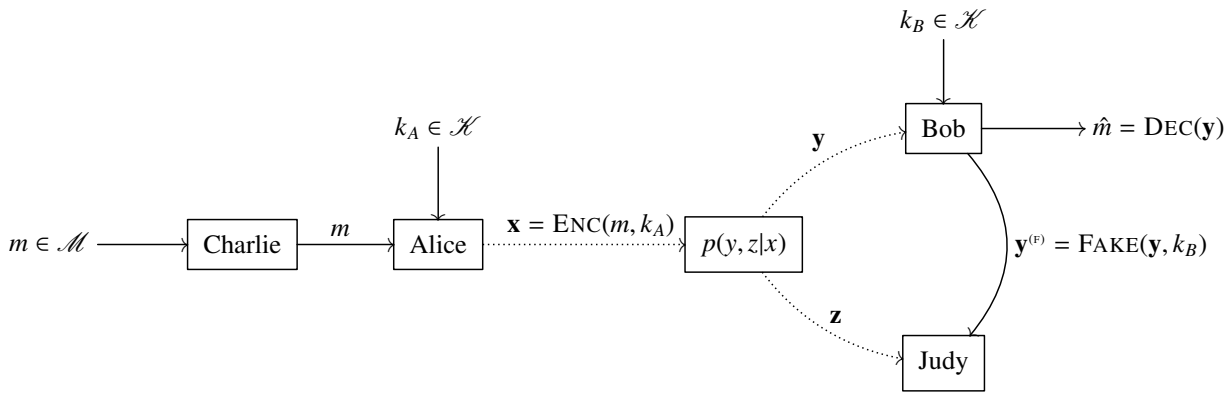

\subsection{Notation} Throughout this paper, we typically adopt the following notation. Upper case math and lower case symbols such as $X$ and $x$ denote random variables and their specific values respectively. Boldface symbols such as $\bX$ and $\bx$ denote random vectors and their specific values respectively, while calligraphic symbols such as $\cX$ denote sets. Probability distributions of generic random variables is typically written as $\Prob$ (e.g. $\Prob_X$, $\Prob_{Y|X}$), while  probability distributions imposed by the specific codebook are typically written as $\Probc$ (e.g. $\Probc_{\bX}$). All logarithms in this paper are assumed to base $2$. For some random variables $X$ and $Y$ following distributions $\Prob_X$ and $\Prob_Y$ on alphabets $\cX$ and $\cY$ respectively, we define the entropy, conditional entropy, and the mutual information respectively as $\ent(X)\triangleq \sum_{x\in\cX:\Prob_X(x)>0}\Prob_X(x)\log\left(1/\Prob_X(x)\right)$, $\ent(Y|X)\triangleq \sum_{x\in\cX:\Prob_{X,Y}(x,y)>0}\Prob_{X,Y}(x,y)\log\left(1/\Prob_{Y|X}(y|x)\right)$,  and $\MI(X;Y)=\ent(X)-\ent(X|Y)$. The Kullback-Leibler divergence between two probability measures $\Prob_1$ and $\Prob_2$ over a set $\cX$ is defined as $\KL(\Prob_1||\Prob_2)\triangleq\sum_{x\in\cX:\Prob_1(x)>0}\Prob_1(x)\log\left(\Prob_1(x)/\Prob_2(x)\right)$. Throughout this paper, we employ strong typicality in our analysis, and define the strongly typical set for a random variable $X$ as $$\AEN{X}\triangleq\left\{\bx\in\cX^n: \max_{x\in\cX}\left|\frac{|\{i:x_i=x\}|}{n}-\Prob_X(x)\right|\leq\frac{\epsilon}{|\cX|}\right\}.$$
\subsection{Channel model} Consider the problem settings shown in Figure~\ref{fig:setup}. Alice, Bob, and Judy are connected through  the following memoryless broadcast channel -- at each discrete time instant, Alice's transmission $X\in\cX$, Bob's reception $Y\in\cY$,  and Judy's observation $Z\in\cZ$ follow the conditional distribution $\Prob_{Y,Z|X}$ over finite alphabets $\cX\times\cY\times\cZ$. Initially, only Charlie knows the message $\bm\in\cM$ and passes it onto Alice to be transmitted to Bob over the broadcast channel. Charlie only knows the value of the message, but does not see the channel inputs or outputs. Throughout this paper, we assume that the message $\bM$ is uniformly distributed over $
\cM$. There is no shared randomness, but Alice, Bob, and Charlie have private randomness $\bKA\in\cK$, $\bKB\in\cK$, and $\bKC\in\cK$ respectively. In addition, the code and the faking procedure (defined in the following) are known to all parties.

\subsection{Codes and Faking Procedures} A code of block-length $n$ is a pair of maps $\Enc:\cM\times\cK\to\cX^n$ and $\Dec:\cY^n\to\cM$. These maps are applied by Alice and Bob to generate the codeword $\bx\triangleq x^n=\Enc(\bm,\bkA)$ and the reconstruction $\bmh=\Dec(\by)$ respectively.  When there is no private randomness at Alice, we denote the codeword for message $\bm$ by $\bx(\bm)$. To simplify notation, we represent a code $(\Enc,\Dec)$ through its codebook $\cC\triangleq\{\Enc(\bm,\bkA):\bm\in\cM,\bkA\in\cK\}$. Note that $\cC$ is a multi-set with possible repetitions as we do not require that $\Enc(\cdot)$ be an injective map.  

Judy may summon Alice, Bob, or Charlie to provide a variable $\bw\in\cW$ that can be used to reconstruct the message using a map $\Msg:\cW\to\cM$. Depending on whether or not the transmission is an innocuous, the summoned party may either reveal the true value of $\bw$ or use a (possibly stochastic) {\em faking procedure}  $\Fake:\cW\times\cK\to\cW$ to output a fake  value $\Fak{\bw}\in\cW$.
In this paper, we consider three settings that are specified by the choice of the variable $\bw$. In particular, we consider the following special cases:
\paragraph{Message deniability} This setting is shown in Figure~\ref{fig:messagedeniability}. Charlie is the summoned party, $\bw=\bm$, $\cW=\cM$, and $\Msg(\bw)=\bw$.
\paragraph{Transmitter deniability}  This setting is shown in Figure~\ref{fig:transmitterdeniability}. Here, Alice is the summoned party, $\bw=\bx$, $\cW=\cX^n$, and $\Msg(\bw)$ is the most likely message given that $\bx=\bw$, {\em i.e.}, $\Msg(\bw)\triangleq\argmax_{\bm\in\cM}\Probc_{\bM|\bX}(\bm|\bw)$ if the maximum is attained at a unique value of $\bm$. If there are multiple values of $\bm$ achieving the above maximum, then $\Msg(\bw)$ selects one of them arbitrarily.
\paragraph{Receiver deniability} This setting is shown in Figure~\ref{fig:receiverdeniability}. Bob is the summoned party, $\bw=\by$, $\cW=\cY^n$, and $\Msg(\bw)=\Dec(\bw)$.

\subsection{Reliability} We say that $\cC$ is  $(\epsilon,R)$-reliable if $\frac{1}{n}\log|\cM|=R$, and there exists an encoder and decoder pair $(\Enc,\Dec)$ such that the average error probability $\sum_{(\bm,\by):{\small\Dec}(\by)\neq\bm}\Probc_{\bM,\bY}(\bm,\by)$  is no larger than $\epsilon$.  Here, $\Probc_{\bM}$ is the uniform distribution on $\cM$ and $\Probc_{\bY,\bM}$ is the joint distribution of the message $\bM$ and Bob's received vector $\bY$ that induced by the specific code $(\Enc,\Dec)$ and the channel transition probability $\Prob_{YZ|X}$. 

\subsection{Plausible deniability} We first define our notion of plausible deniability for general random variables, and subsequently, specialise it to our setting. Let $\Fak{\bbW}$, $\bbW$, and $\bZ$ be random variables distributed according to a distribution $\Probc_{\Fak{\bbW},\bbW,\bZ}$. Let  $\Probc_{\bZ,\bbW}$ and $\Probc_{\bZ,\Fak{\bbW}}$ be marginals of the distribution $\Probc_{\bZ,\bbW,\Fak{\bbW}}$.  We say that $\Fak{\bbW}$ is $(\delta,D)$-\emph{plausibly deniable} for $\bbW$ given observation $\bZ$ if 
\begin{enumerate}[(i)]
\item $\KL(\Probc_{\bZ,\Fak{\bbW}}||\Probc_{\bZ,\bbW})\leq\delta$, and 
\item $\frac{1}{n}\ent(\Msg(\Fak{\bbW})|\bbW)= D$.
\end{enumerate}

In this paper, we are interested in settings where $\bbW$ is the random variable whose value is demanded by Judy through her summon, $\Fak{\bbW}$ is the random variable denoting the output of the faking procedure $\Fake(\cdot)$  employed by the summoned party, and $\bZ$ is Judy's observation. The parameters $\delta$ and $D$ respectively measure the \emph{plausibility} and the \emph{rate of deniability} of $\Fake(\cdot)$. We say that a faking procedure $\Fake(\cdot)$ is $(\delta,D)$-plausibly deniable for $\bbW$ given observation $\bZ$ is its output $\Fak{\bbW}$ is $(\delta,D)$-plausibly deniable for $\bbW$ given observation $\bZ$.
\begin{remark} Note that since we assume that the output of the faking procedure depends only value of variable $\bbW$ (that is known to the summoned party) and the summoned party's independently distributed private randomness, the random variables $\Fak{\bbW}$, $\bbW$, $\bZ$ satisfy the Markov chain $\Fak{\bbW}-\bbW-\bZ$. \end{remark}\begin{remark} Note that the joint distribution  $\Probc_{\bZ,\bbW,\Fak{\bbW}}$  depends on both the code $(\Enc,\Dec)$ and the faking procedure, $\Fake(\cdot)$ and takes into account the (uniform) message distribution $\Probc_\bM$, the channel conditional probability $\Prob_{YZ|X}$, and the distribution of independent private randomness variables $\bKA$, $\bKB$, and $\bKC$.\end{remark}

\subsection{Capacity regions}
For each setting $\bw\in\{\bm,\bx,\by\}$, we say that a rate-deniability pair $(R,D)$ is achievable if for any $\epsilon,\delta>0$, for some $R'\geq R$ and $D'\geq D$,  and for large enough $n$, there exists a blocklength-$n$ code $\cC$ that is $(\epsilon,R')$-reliable and a faking procedure $\Fake(\cdot)$ that is $(\delta,D')$-plausibly deniable for $\bW$ given $\bZ$. The capacity region $\cR_{\bw}$ is the closure of the set of all achievable rate-deniability pairs.

\section{Main Results}\label{sec:mainresults}
For the message deniability problem, we give a characterisation the capacity region $\cR_{\bm}$ for general broadcast channels in Theorem~\ref{thm:message}. The proof of this theorem is presented in Section~\ref{sec:message}.
\begin{theorem}[Message Deniability]\label{thm:message}\normalfont$\cR_{\bm}$ is the set of all $(R,D)$ pairs such that \begin{align*}
&0\leq R\leq \MI(Y;V)+\MI(U;Y|V)-\MI(U;Z|V),\mbox{ and}\\
&0\leq D\leq \min\set{R,\MI(U;Y|V)-\MI(U;Z|V)}
\end{align*}
for some random variables  $U$ and $V$ which take values in sets $\cU$ and $\cV$, respectively, with $|\cU|\leq \left(|\cX|+1\right)\left(|\cX|+2\right)$ and $|\cV|\leq |\cX|+2$, and  satisfy the Markov chain $V - U - X- (Y,Z)$. 
\end{theorem}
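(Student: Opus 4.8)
The plan is to prove the two directions separately, following the standard template for broadcast-channel-with-confidential-messages results but with the crucial twist that the ``fake message'' structure must be built into the codebook. For achievability, I would use a superposition code with an inner binning layer. Generate $\approx 2^{n\MI(Y;V)}$ cloud centers $\bv$ i.i.d.\ from $\Prob_V$; above each $\bv$, generate $\approx 2^{n\MI(U;Y|V)}$ codewords $\bu$ i.i.d.\ from $\Prob_{U|V}$, and partition them into $\approx 2^{n(\MI(U;Y|V)-\MI(U;Z|V))}$ bins of size $\approx 2^{n\MI(U;Z|V)}$ each. The message $\bm$ of rate $R \le \MI(Y;V)+\MI(U;Y|V)-\MI(U;Z|V)$ is split into a ``public'' part selecting $\bv$ together with a bin index within the cloud, and within each bin Alice uses her private randomness $\bkA$ to pick a uniformly random codeword $\bu$; she then transmits $\bx$ generated through the channel $\Prob_{X|U}$. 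Bob decodes $(\bv,\bu)$ jointly by strong typicality, which works because $R$ plus the binning rate $\MI(U;Z|V)$ stays below $\MI(Y;V)+\MI(U;Y|V)$; he then recovers $\bm$. The key soft-covering / resolvability argument: since each bin has $\approx 2^{n\MI(U;Z|V)}$ codewords drawn i.i.d.\ from $\Prob_{U|V}$, conditioned on the public part (hence on $\bv$ and on the bin), the induced distribution of $\bZ$ is $\approx \Prob_{Z|V}^{\otimes n}$ in K-L divergence (this is where I would invoke a resolvability lemma giving exponentially small expected divergence, then fix a good codebook).

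For the faking procedure: given the true message $\bm$ with its public part $p(\bm)$ and true $\bu$, Charlie — who knows only $\bm$, not $\bu$ or $\bkA$ — produces a fake message $\Fak{\bm}$ by keeping the \emph{same} public part $p$ but re-drawing the within-bin index uniformly using $\bkC$. Because the within-bin index carries $\approx \MI(U;Z|V)$ bits that are statistically (almost) invisible to Judy (by the resolvability step above), the joint distribution of $(\bZ,\Fak{\bm})$ is close to that of $(\bZ,\bM)$ in K-L divergence, giving plausibility $\delta \to 0$; one must be careful that the public part of $\Fak{\bm}$ must equal that of $\bm$ for the conditional-on-public distributions to match, and that $\Msg(\Fak{\bbW}) = \Fak{\bm}$ trivially here. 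The rate of deniability is $\frac1n \ent(\Fak{\bm}\mid \bm)$, which equals the conditional entropy of the re-drawn within-bin index given $\bm$; this is $\approx \MI(U;Y|V)-\MI(U;Z|V)$ when the bin size in message units is that large, but capped at $R$ when $R$ is the binding constraint (one then only re-randomizes over a sub-range), yielding $D \le \min\{R,\,\MI(U;Y|V)-\MI(U;Z|V)\}$. A standard time-sharing / convexification argument at the end closes the achievable region, and the cardinality bounds on $\cU,\cV$ come from the usual Fenchel–Eggleston–Carathéodory argument applied to the functionals appearing in the region.

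For the converse, I would take any $(\epsilon,R')$-reliable code with a $(\delta,D')$-plausibly deniable faking procedure and identify the auxiliaries. Set $V_i$ to include a suitable prefix of Judy's and Bob's past/future observations together with the fake-message randomness, and $U_i$ to additionally include $\bM$ (or the relevant part of it); the single-letterization uses a time-sharing random variable absorbed into $V$. The message-rate bound $nR' \le \ent(\bM)$ is expanded via Fano's inequality into $\MI(\bM;\bY)$ plus an $n\epsilon_n$ term, and then split across the superposition structure using the telescoping (Csisz\'ar-sum) identity to produce $\MI(Y;V)+\MI(U;Y|V)$, while the $-\MI(U;Z|V)$ term must be \emph{created} — and this is where the deniability hypothesis does the work rather than a secrecy hypothesis. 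Concretely, I would first prove the asymptotic equivalence of rate of deniability and equivocation (the statement cited as Proposition~\ref{prop:msgDEQ}): plausibility $\KL(\Probc_{\bZ,\Fak{\bM}}\|\Probc_{\bZ,\bM})\le\delta$ together with the Markov chain $\Fak{\bM}-\bM-\bZ$ forces $\frac1n\ent(\bM\mid\bZ) \ge \frac1n\ent(\Fak{\bM}\mid\bM) - o(1) = D' - o(1)$, essentially because a small joint divergence means $\Fak{\bM}$ looks like a fresh independent copy of $\bM$ to Judy, so its entropy given $\bM$ lower-bounds Judy's residual uncertainty. Armed with $\frac1n\ent(\bM|\bZ)\gtrsim D'$, the chain $n D' \lesssim \ent(\bM|\bZ) = \ent(\bM|\bY) + \MI(\bM;\bY) - \MI(\bM;\bZ) \le n\epsilon_n + \MI(\bM;\bY|\bZ) \le n\epsilon_n + \sum_i \MI(U_i;Y_i|V_i) - \MI(U_i;Z_i|V_i)$ (again by the Csisz\'ar-sum identity, discarding the nonnegative $\MI(Y;V)$ piece appropriately) gives the deniability constraint, and $D' \le R'$ is immediate since $\ent(\Fak{\bM}|\bM)\le\ent(\bM)=nR'$.

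I expect the main obstacle to be the converse coupling between \emph{plausibility in K-L divergence} and \emph{equivocation} — i.e.\ making Proposition~\ref{prop:msgDEQ} rigorous with the right error terms — and, on the achievability side, simultaneously satisfying reliability for Bob and the K-L (not just variational) soft-covering bound for Judy with a \emph{single} fixed codebook; getting exponentially small \emph{expected} K-L divergence from the resolvability lemma and then extracting one codebook good for all of reliability, covering, and the conditional-entropy estimate (uniformly over the public part) is the delicate part. The cap $D\le\min\{R,\cdot\}$ also requires a small but careful bookkeeping step: when $R$ is the binding constraint, the faking procedure can only shuffle within a range of size $2^{nR}$ rather than the full bin, so one must verify plausibility still holds with the smaller amount of re-randomization (it does, since a sub-uniform mixture over codewords in a bin still soft-covers $\Prob_{Z|V}$ as long as that sub-range has size $\gtrsim 2^{n\MI(U;Z|V)}$, which is exactly the regime where $R \ge \MI(U;Y|V)-\MI(U;Z|V)$ fails to hold).
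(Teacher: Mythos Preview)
Your achievability is essentially the paper's: split $\bm=(\bt,\bs)$ into a ``public'' part carried by the cloud/$V$-layer and a ``confidential'' part carried by the bin index, and have Charlie fake by redrawing the confidential part. One subtlety you flag but do not resolve: the plausibility metric is $\KL(\Probc_{\bZ,\Fak{\bM}}\|\Probc_{\bZ,\bM})=\KL(\Probc_{\bS}\Probc_{\bT,\bZ}\|\Probc_{\bS,\bT,\bZ})$, i.e.\ the \emph{reverse} of the divergence that strong secrecy/resolvability controls. The paper handles this with an extra construction (Corollary~\ref{cor:sideinfo} via Lemma~\ref{lem:reversekl}) that perturbs the code so the reverse divergence is also small; your proposal would need something equivalent.

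The converse, however, has a real gap. You propose to go through the equivocation bound $\tfrac1n\ent(\bM|\bZ)\gtrsim D$ (which is Proposition~\ref{prop:REQ}, not Proposition~\ref{prop:msgDEQ}) and then run a Csisz\'ar--K\"orner style single-letterization on $\MI(\bM;\bY)-\MI(\bM;\bZ)$. That route yields only the rate--equivocation outer bound $\REQ$, and Example~\ref{ex:BEC} shows $\cR_{\bm}\subsetneq\REQ$ in general, so this converse is not tight. The paper's converse uses a strictly stronger consequence of plausibility, Lemma~\ref{lem:convmessage}: $\MI(\bM;\bZ|\Fak{\bM})\le\delta+n\lambda\sqrt{\delta}$ and $|\ent(\bM)-\ent(\Fak{\bM})|\le\delta+n\lambda\sqrt{\delta}$. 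This lets one condition on $\Fak{\bM}$ throughout: $nD\le\ent(\bM|\Fak{\bM})+o(n)\le\MI(\bM;\bY|\Fak{\bM})-\MI(\bM;\bZ|\Fak{\bM})+o(n)$ and $nR\le\MI(\Fak{\bM};\bY)+\MI(\bM;\bY|\Fak{\bM})-\MI(\bM;\bZ|\Fak{\bM})+o(n)$, where the $-\MI(\bM;\bZ|\Fak{\bM})$ term is ``created'' precisely because it is nearly zero. Single-letterizing with $V=(\Fak{\bM},Y^{\share-1},Z_{\share+1}^n,\share)$ and $U=(V,\bM)$ then gives the claimed region. Your outline mentions putting $\Fak{\bM}$ into $V_i$, but the chain you actually write, starting from $\ent(\bM|\bZ)$, never conditions on $\Fak{\bM}$ and therefore cannot recover the $-\MI(U;Z|V)$ term in the $R$ bound; without Lemma~\ref{lem:convmessage} there is no mechanism to do so.
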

Next, we characterise the capacity region $\cR_{\bx}$ for the transmitter deniability problem for general broadcast channels and given an achievable region for the receiver deniability problem for physically degraded broadcast channels. These results are stated in Theorems~\ref{thm:transmitter} and~\ref{thm:rxachievability} below and are proved in Section~\ref{sec:codeword}.
\begin{theorem}[Transmitter Deniability]\label{thm:transmitter} \normalfont$\cR_{\bx}$ is the set of all $(R,D)$ pairs  such that  
\begin{align*}
&0\leq R\leq \MI(X;Y),\mbox{ and} \\
&0\leq D\leq \min\set{R,\MI(X;Y|U)}
\end{align*}
for some random variable $U$ which takes values in a set $\cU$, with $|\cU|\leq|\cX|$, and satisfes the Markov chains $U-X-(Y,Z)$ and $X-U-Z$.
\end{theorem}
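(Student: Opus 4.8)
\emph{Overall plan.} I would prove the two inclusions separately; the achievability is short and essentially constructive, and the converse is where the real work lies.

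\emph{Achievability.} For every $U$ satisfying the two Markov chains, $X-U-Z$ forces $\Prob_{X|U}(\cdot|u)$ to be supported on a single ``channel-to-Judy'' class $\{x:\Prob_{Z|X}(\cdot|x)=s\}$, so $U$ determines $U^{*}\triangleq\Prob_{Z|X}(\cdot|X)$ and hence (using $U-X-Y$) $\MI(X;Y|U)\le\MI(X;Y|U^{*})$; since $U^{*}$ itself satisfies both Markov chains and takes at most $|\cX|$ values, it suffices to achieve, for each input law $\Prob_X$, every pair with $0\le R\le\MI(X;Y)$ and $0\le D\le\min\{R,\MI(X;Y|U^{*})\}$. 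Fix $R_0\le\MI(U^{*};Y)$ and $R_1\le\MI(X;Y|U^{*})$ and build a superposition codebook: $2^{nR_0}$ cloud centres $\bu(\ell)\sim\Prob_{U^{*}}^{n}$, and inside cloud $\ell$ the $2^{nR_1}$ codewords $\bx(\ell,k)\sim\Prob_{X|U^{*}}^{n}(\cdot|\bu(\ell))$, with message $(\ell,k)$ and $R=R_0+R_1$. A standard superposition-coding analysis makes this code $(\epsilon,R)$-reliable for Bob, and after discarding a vanishing fraction of messages the encoder is injective. Alice's faking procedure, given the transmitted $\bx(\ell,k)$, reads off $\ell$ and outputs $\Fak{\bX}=\bx(\ell,K')$ with $K'$ uniform on $[2^{nR_1}]$ drawn from her private randomness. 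The key structural fact is that every codeword in cloud $\ell$ has $\Prob_{Z|X}(\cdot|x)=\bu(\ell)$ coordinatewise, so $\Prob_{Z|X}^{n}(\cdot|\bx(\ell,k))=\Prob_{Z|U^{*}}^{n}(\cdot|\bu(\ell))$ for \emph{every} $k$; a one-line computation then gives $\Probc_{\bZ,\Fak{\bX}}=\Probc_{\bZ,\bX}$ \emph{exactly}, so plausibility holds with $\delta=0$. Since the encoder is injective and $K'$ is independent of the true index, $\ent(\Msg(\Fak{\bX})|\bX)=\ent(K')=nR_1$, so the rate of deniability is $R_1$. Letting $(R_0,R_1)$ range over the rectangle above (and allowing $D$ to drop below $R_1$) sweeps out precisely $\{(R,D):0\le R\le\MI(X;Y),\ 0\le D\le\min\{R,\MI(X;Y|U^{*})\}\}$, and the union over $\Prob_X$ establishes the achievability direction of Theorem~\ref{thm:transmitter}.

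\emph{Converse.} Take a sequence of $(\epsilon_n,\delta_n)$-good codes of rate $R$ and deniability $D$ with $\epsilon_n,\delta_n\to0$, write $\Fak{M}=\Msg(\Fak{\bX})$, and let $M$ be the true uniform message. Fano's inequality gives $nR\le\MI(\bX;\bY)+o(n)\le\sum_i\MI(X_i;Y_i)+o(n)$, hence $R\le\MI(X;Y)$ in the limit, while $\ent(\Fak{M}|\bX)\le\log|\cM|$ gives $D\le R$. For the bound $D\le\MI(X;Y|U)$ I would begin from $nD=\ent(\Fak{M}|\bX)=\ent(\Fak{M}|\bX,\bZ)$ (valid because $\Fak{M}-\bX-\bZ$) and split $\ent(\Fak{M}|\bX,\bZ)=\ent(\Fak{M}|\bZ)-\MI(\Fak{M};\bX|\bZ)$; the plausibility constraint --- via Pinsker's inequality and continuity of entropy, together with $\Pr[\Msg(\bX)\neq M]\le\epsilon_n$ --- lets one trade the marginal and leakage quantities of $\Fak{M}$ for those of $M$ up to $o(n)$, after which Fano controls $\ent(M|\bY)$ and the remaining expression is single-letterised using the Csisz\'ar sum identity. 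The decisive step is to pick a single-letter auxiliary obeying \emph{both} Markov chains; a natural candidate is $U_i=\bigl(\Prob_{Z|X}(\cdot|X_i),\,Y^{i-1},\,Z_{i+1}^{n}\bigr)$, for which memorylessness makes $U_i-X_i-(Y_i,Z_i)$ automatic and whose first coordinate makes $X_i-U_i-Z_i$ hold \emph{exactly} (Judy's symbol depends on $X_i$ only through its channel row, so $\MI(X_i;Z_i|U_i)=0$), leading to $nD\le\sum_i\MI(X_i;Y_i|U_i)+o(n)$. Introducing a uniform time-sharing variable $Q$, setting $U=(U_Q,Q)$, $X=X_Q$, $Y=Y_Q$, $Z=Z_Q$, performing the Fenchel--Eggleston--Carath\'eodory cardinality reduction within each channel-row class (so as to preserve $\Prob_X$, $\MI(X;Y|U)$ and both Markov relations, giving $|\cU|\le|\cX|$), and passing to the limit by compactness then yield the converse direction.

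\emph{Main obstacle.} The crux is the converse bound on $D$, and within it securing the second Markov chain $X-U-Z$ for the single-letter auxiliary: dropping it would degenerate the claim to the much larger region $\{0\le D\le R\le\MI(X;Y)\}$, which Example~\ref{eg:pdmotivation} already shows to be wrong. The function of the plausibility hypothesis is exactly to pin the ``confusion set'' --- the true codewords from which the faking map could plausibly have produced a given fake codeword --- inside a single channel-to-Judy class, so that that class already accounts for everything Judy learns about $X$; turning this qualitative statement into an estimate that survives both the single-letterisation and the $\delta_n\to0$ limit is the technical heart of the argument.
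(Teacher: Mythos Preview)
Your achievability is essentially the paper's: both reduce to the zero-information variable $U_0$ (your $U^{*}$), build a superposition code on $(U_0,X)$, and fake by drawing uniformly within the cloud; the plausibility parameter is exactly $0$ by construction.

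The converse sketch, however, has a genuine gap. The ``trading'' step is the problem: plausibility gives $\Probc_{\Fak{\bX},\bZ}\approx\Probc_{\bX,\bZ}$, hence $\Probc_{\Fak{M},\bZ}\approx\Probc_{M,\bZ}$, so $\ent(\Fak{M}|\bZ)\approx\ent(M|\bZ)$ is fine. But you cannot also trade $\MI(\Fak{M};\bX|\bZ)$ for $\MI(M;\bX|\bZ)$, because this involves the \emph{joint} law of $(\Fak{M},\bX,\bZ)$, about which plausibility says nothing. Indeed $\MI(M;\bX|\bZ)=\ent(M|\bZ)-\ent(M|\bX,\bZ)\approx\ent(M|\bZ)$ (since $M$ is essentially a function of $\bX$), so substituting both terms collapses your expression to $nD\approx 0$, which is absurd. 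Relatedly, your proposed auxiliary $U_i=(\Prob_{Z|X}(\cdot|X_i),Y^{i-1},Z_{i+1}^{n})$ does satisfy both single-letter Markov chains, but it never appears in your chain of inequalities: it is not derived from the split $\ent(\Fak{M}|\bZ)-\MI(\Fak{M};\bX|\bZ)$, and there is no step that produces $nD\le\sum_i\MI(X_i;Y_i|U_i)$ with that $U_i$. You have chosen an auxiliary with the right Markov structure but no link to $D$.

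The paper's converse runs through a different mechanism. The key estimate (Lemma~\ref{lem:xzgivenf}) extracts directly from $\KL(\Probc_{\bZ,\Fak{\bX}}\|\Probc_{\bZ,\bX})\le\delta$ the bound $\MI(\bX;\bZ|\Fak{\bX})\le n\kappa\sqrt{\delta}$, together with approximate interchangeability of $\bX$ and $\Fak{\bX}$ in several entropy expressions. One then obtains $nD\le\MI(\bX;\bY|\Fak{\bX})-\MI(\bX;\bZ|\Fak{\bX})+o(n)$ and single-letterises via Csisz\'ar's identity with the auxiliary $U_i=(\Fak{\bX},Y^{i-1},Z_{i+1}^{n})$. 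The second Markov chain $X_i-U_i-Z_i$ is not exact here; it holds only up to $\kappa\sqrt{\delta}$ \emph{because of} the lemma, and is made exact in the limit $\delta\to 0$ by a compactness/continuity argument (Lemma~\ref{lem:continuity}), after which the reduction to $U_0$ gives the cardinality bound. In short: the auxiliary that carries the argument is $\Fak{\bX}$---the object that simultaneously ties to $D$ (by definition) and to $Z$ (by plausibility)---not the zero-information symbol of $X_i$; the latter enters only \emph{after} the converse inequalities are in hand, via Lemma~\ref{lem:zeroinformationvariable}.
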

\begin{theorem}[Achievability for Receiver Deniability]\label{thm:rxachievability} \normalfont Let $\Prob_{Y,Z|X}$ be a physically degraded broadcast channel, {\em i.e.}, $\Prob_{Z|X}(z|x)=\sum_{y\in\cY}\Prob_{Z|Y}(z|y)\Prob_{Y|X}(y|x)$ for some distribution $\Prob_{Z|Y}$. Then, $\cR_{\by}$ includes all $(R,D)$ pairs such that 
\begin{align*}
&0\leq R\leq \MI(X;Y),\mbox{ and}\\
&0\leq D\leq \min\set{R,\MI(X;Y|V)}
	\end{align*}
	for some random variable $V$ which takes values in a finite set $ \cV$  and satisfies the Markov chains $V - Y- (X,Z)$ and $Y-V-Z$.
\end{theorem}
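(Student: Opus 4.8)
The plan is to build an ordinary i.i.d.\ channel code for $\Prob_{Y|X}$ at a rate just below $\MI(X;Y)$, and to make Bob's faking procedure \emph{throw away} the true received vector $\by$ and redraw a fresh one from the codebook-induced posterior consistent with its $V$-sequence $v^n(\by)$. Plausibility will then hold \emph{exactly} ($\delta=0$), so the only real content is the rate-of-deniability estimate. As a first step I would reduce to the case where $V$ is a deterministic function of $Y$: if $V$ satisfies $V-Y-(X,Z)$ and $Y-V-Z$, then $\Prob_{Z|Y,V}=\Prob_{Z|Y}=\Prob_{Z|V}$, so $\Prob_{Z|Y}(\cdot\mid y)$ depends on $y$ only through $V$; hence $V^{\star}:=\Prob_{Z|Y}(\cdot\mid Y)$ is a deterministic function of both $Y$ and $V$, still satisfies $Y-V^{\star}-Z$, and by the data-processing inequality obeys $\MI(X;V^{\star})\le\MI(X;V)$. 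Since $V-Y-X$ gives $\MI(X;Y\mid V)=\MI(X;Y)-\MI(X;V)$ (and likewise for $V^{\star}$), we get $\MI(X;Y\mid V^{\star})\ge\MI(X;Y\mid V)$, so it suffices to treat deterministic maps $v(\cdot)\colon\cY\to\cV$ with $\Prob_{Z|Y}(z\mid y)=\Prob_{Z|V}(z\mid v(y))$. Fix such a $v(\cdot)$, a choice of $\Prob_X$ and a small $\epsilon>0$, and assume $\MI(X;Y\mid V)>0$ (else $D=0$, trivially achieved by revealing $\bY$ truthfully). I will show every $(R,D)$ with $R\le\MI(X;Y)$ and $D<\min\set{R,\MI(X;Y\mid V)}$ is achievable; the theorem follows by closing the region.

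Pick $R'=\max\set{R,\MI(X;Y)-\epsilon}$ (so $R'\ge R$, $R'<\MI(X;Y)$, $\MI(X;Y)-R'\le\epsilon$), draw $2^{nR'}$ codewords i.i.d.\ from $\Prob_X^{\otimes n}$, and decode by joint typicality; the average error is below $\epsilon$ for large $n$. Let $\Probc$ denote the induced law of $(\bM,\bX,\bY,\bZ)$ (with $\bM$ uniform and $\bZ$ obtained from $\bY$ through the degraded link $\Prob_{Z|Y}^{\otimes n}$), set $\bV:=v^n(\bY)$, and let $\Probc_{\bV},\Probc_{\bY|\bV}$ be the (publicly computable) induced marginals. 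The faking procedure is: on input $\by$, Bob uses his private randomness to output $\Fak{\bY}\sim\Probc_{\bY|\bV}(\cdot\mid v^n(\by))$. Plausibility is exact: physical degradedness gives $\Probc_{\bZ\mid\bY}=\Prob_{Z|Y}^{\otimes n}(\cdot\mid\bY)$, and the identity $\Prob_{Z|Y}(\cdot\mid y)=\Prob_{Z|V}(\cdot\mid v(y))$ shows this depends on $\bY$ only through $\bV$, so $\bZ-\bV-\bY$ under $\Probc$. Since $\Fak{\bY}\mid\bV$ has the same law $\Probc_{\bY|\bV}$ as $\bY\mid\bV$, and $\Fak{\bY}$ is conditionally independent of $\bZ$ given $\bV$ (both come from $\bV$ via independent fresh randomness), both $(\bZ,\bY)$ and $(\bZ,\Fak{\bY})$ are generated as ``draw $\bV\sim\Probc_{\bV}$, then the two coordinates independently from $\Probc_{\bZ|\bV}(\cdot\mid\bV)$ and $\Probc_{\bY|\bV}(\cdot\mid\bV)$''. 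Hence $\Probc_{\bZ,\Fak{\bY}}=\Probc_{\bZ,\bY}$ exactly, i.e.\ $\KL(\Probc_{\bZ,\Fak{\bY}}\|\Probc_{\bZ,\bY})=0$; a one-line summation over $\by$ confirms this.

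For the rate of deniability: the conditional law of $\Fak{\bY}$ given $\bY=\by$ depends on $\by$ only through $\bV=v^n(\by)$, and $\Fak{\bY}\mid\bV\stackrel{d}{=}\bY\mid\bV$, so
\begin{align*}
\ent\bigl(\Msg(\Fak{\bY})\mid\bY\bigr)=\ent\bigl(\Dec(\Fak{\bY})\mid\bY\bigr)=\ent\bigl(\Dec(\Fak{\bY})\mid\bV\bigr)=\ent\bigl(\Dec(\bY)\mid\bV\bigr)=\ent\bigl(\hat{\bM}\mid\bV\bigr),
\end{align*}
and reliability together with Fano's inequality give $\ent(\hat{\bM}\mid\bV)=\ent(\bM\mid\bV)\pm O(\epsilon n)$. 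Now $\bV$ is exactly the output of the memoryless ``channel'' $\Prob_{V|X}(v\mid x):=\sum_{y\colon v(y)=v}\Prob_{Y|X}(y\mid x)$ driven by the random codeword $\bX=\bx(\bM)$; since $R'>\MI(X;V)$ (because $R'\ge\MI(X;Y)-\epsilon$ while $\MI(X;Y)-\MI(X;V)=\MI(X;Y\mid V)>0$), the soft-covering / output-statistics lemma gives $\tfrac1n\MI(\bM;\bV)\to\MI(X;V)$, hence $\tfrac1n\ent(\bM\mid\bV)\to R'-\MI(X;V)=\MI(X;Y\mid V)-(\MI(X;Y)-R')\ge\MI(X;Y\mid V)-\epsilon$. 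Thus, for small $\epsilon$ and large $n$ and a suitable deterministic codebook -- which exists by the usual random-coding/expurgation argument, since plausibility holds for \emph{every} codebook -- the code is $(\epsilon,R')$-reliable with $R'\ge R$ and the faking procedure is $(0,D')$-plausibly deniable with $D'\ge\MI(X;Y\mid V)-O(\epsilon)\ge D$; letting $\epsilon\to0$ and closing the region finishes the proof.

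The main obstacle is the rate-of-deniability step: one must show that the codebook-induced equivocation $\tfrac1n\ent(\bM\mid\bV)$ genuinely attains $\MI(X;Y\mid V)$ as $R'\uparrow\MI(X;Y)$. This is precisely a wiretap-type equivocation computation for the fictitious eavesdropper channel $X\to V$, and it is where the soft-covering lemma is needed (equivalently, a concentration bound on the number of codewords jointly typical with a given $\bv$). By contrast, the plausibility requirement is exact by construction once the physically-degraded structure and the Markov chain $\bZ-\bV-\bY$ are used, and the reliability part is standard point-to-point channel coding.
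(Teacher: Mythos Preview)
Your approach is essentially the paper's: both reduce to the case where $V$ is the zero-information variable of $Y$ w.r.t.\ $\Prob_{Z|Y}$ (your $V^\star$ is exactly the paper's $U_0$, and your data-processing argument is the content of Lemma~\ref{lem:zeroinformationvariable}), both use an i.i.d.\ random code at rate just below $\MI(X;Y)$, and both have Bob resample $\Fak{\bY}\sim\Probc_{\bY|\bV}(\cdot\mid v^n(\by))$, which gives $\Probc_{\bZ,\Fak{\bY}}=\Probc_{\bZ,\bY}$ exactly.

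The only real difference is how the equivocation $\ent(\Dec(\bY)\mid\bV)$ is lower-bounded. The paper routes this through $\ent(\bX\mid\bV)$ and does explicit typicality-set counting (the events $\cG_2,\cG_3,\cG_4$: enough codewords jointly typical with each $\bv$, and near-injectivity of the encoder), while you route it through $\ent(\bM\mid\bV)=nR'-\MI(\bM;\bV)$ and appeal to soft covering. These are two packagings of the same estimate; the paper's event $\cG_3$ is precisely the hands-on counting behind resolvability. One small remark: the direction you actually need is only the \emph{upper} bound $\tfrac1n\MI(\bM;\bV)\le\MI(X;V)+o(1)$, and this already follows from subadditivity of entropy together with the law of large numbers over the i.i.d.\ codebook columns (so that each coordinate marginal of $V_i$ is close to $\Prob_V$ and $\tfrac1n\ent(\bV\mid\bM)\to\ent(V\mid X)$); the genuine soft-covering direction (forcing $\Probc_\bV\approx\Prob_V^{\otimes n}$) is not required. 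Either way, a union bound over the random codebook yields one that is simultaneously reliable and has the claimed rate of deniability, exactly as in the paper.
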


\section{Message Deniability}
\label{sec:message}
In this section, we outline the proof of Theorem~\ref{thm:message} and discuss connections of the message deniability problem with standard information theoretic  secrecy problems. Our achievability argument relies on reducing our problem to the following variant of the information theoretic secrecy problem.
\subsection{Broadcast channel with confidential and leaked messages}
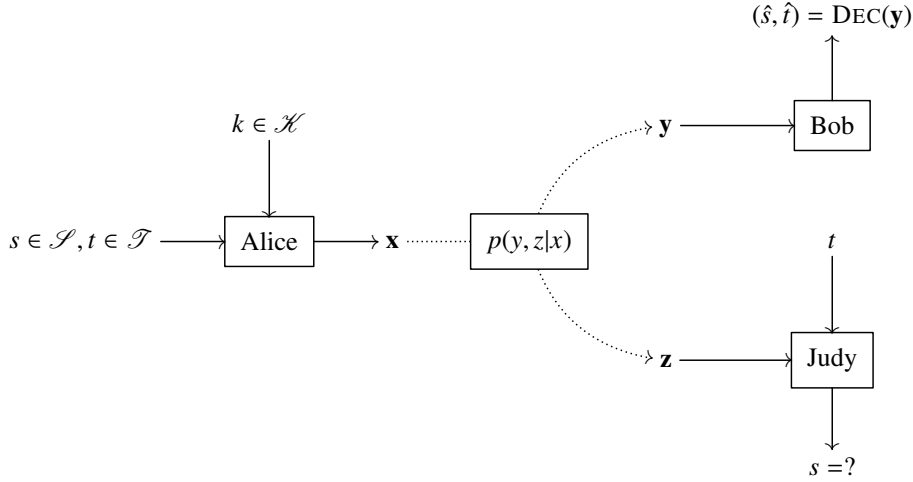
\begin{figure}
\begin{displaymath}
 \scalebox{1}{
\xymatrix{
&&&&& (\hat{\bs},\hat{\bt})=\Dec(\by)\\
&\bk\in\cK\ar[d] &&&\by \ar[r]&
 *++[F]{\mbox{Bob}}\ar[u] &
 \\
\bs\in\cS,\bt\in\cT\ar[r] &
 *++[F]{\mbox{Alice}}\ar[r]& \bx  \ar@{.}[r]&*++[F]{p(y,z|x)}\ar@/^1pc/@{.>}[ru]\ar@/_1pc/@{.>}[rd] &
 & \bt\ar[d] \\
 &&&&\bz\ar[r]&*++[F]{\mbox{Judy}}\ar@{->}[d]\\
&&&&&\bs=?}}\end{displaymath}\caption{Any code for the above secrecy problem can be operated as a code for the Message Deniability problem by treating $\bs$ as the part of the message that the faking algorithm randomizes over and $\bt$ as the part of the message that is unchanged by it.}\label{fig:secrecy}
\end{figure}
 Consider the setup shown in Figure~\ref{fig:secrecy}. Alice observes sources $\bs\in\cS$ and $\bt\in\cT$ and wishes to transmit them reliably to Bob over $n$ uses of the channel. Judy observes a noisy version of the transmission and knows the source $\bt$ as side information. The goal for the transmission is to ensure that the leakage $\MI(\bS;\bZ|\bT)$ is small. At first sight, the setting here is similar to the public message and confidential message setting of \cite{CsiszarK:78} in that secrecy is only required for the private message $\bs$. However, in contrast to~\cite{CsiszarK:78}, Judy is not interested in estimating $\bt$ based on $\bz$, but is instead provided with $\bt$ as side-information. This allows us to operate at potentially higher rates than~\cite{CsiszarK:78}. We define the capacity region for this problem in the following.
 \begin{definition}\label{def:bccsi} The capacity region $\cR_{s}$ for broadcast channel with confidential and side-information messages is the set of $(R_\bs,R_\bt)$ pairs such that, given $\epsilon,\delta>0$, a large enough blocklength $n$, and sources $\bS$ and $\bT$ drawn independently and uniformly from $\cS$ and $\cT$ respectively, there exists a code $\cC$, consisting of an encoder $\Enc: \cS\times\cT\times\cK\to\cX^n$, a decoder $\Dec:\cY^n\to\cS\times\cT$, and Alice's private randomness $\bK\in\cK$, that satisfies the following properties:
\begin{enumerate} 
	\item $|\cS|\geq 2^{nR_\bs}$ and $|\cT|\geq 2^{nR_\bt}$. 
	\item \label{cond:bccsierror}$\Probc_{\bS,\bT,\bY}(\Dec(\bY)\neq(\bS,\bT))\leq \epsilon$.
	\item \label{cond:strongsecrey} $\MI\left(\bS;\bT,\bZ\right)<\delta$.
\end{enumerate}
\end{definition}
 The following lemma provides an inner bound on $\cR_{s}$. 
 \begin{lemma}\label{lem:sideinfo} $\cR_{s}$ includes the set of all $(R_{\bs},R_{\bt})$ pairs such that there exist random variables $U$ and $V$ satisfying $V - U - X - (Y,Z)$, 
 \begin{align}
 	 R_{\bs}&\leq \MI(U;Y|V) - \MI(U;Z|V),\mbox{  and}\label{eq:boundrs}\\
 	 R_{\bt}& \leq \MI(V;Y).\label{eq:boundrt}
 \end{align}  
 \end{lemma}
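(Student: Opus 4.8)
The plan is to use a superposition random coding argument with a secrecy-binning layer for the confidential source $\bs$, exactly along the lines of the Csisz\'ar--K\"orner broadcast channel with confidential messages, but exploiting the fact that Judy already holds $\bt$ as side information. First I would fix a test distribution $\Prob_V\Prob_{U|V}\Prob_{X|U}$ realizing the Markov chain $V-U-X-(Y,Z)$, with $\MI(U;Y|V)-\MI(U;Z|V)>0$ (else \eqref{eq:boundrs} is vacuous). Generate $2^{nR_\bt}$ cloud centers $\bv(\bt)$ i.i.d.\ according to $\prod_i\Prob_V$. For each $\bt$, generate a codebook of $2^{n(\MI(U;Y|V)-\tau)}$ sequences $\bu(\bt,\bs,l)$ i.i.d.\ $\prod_i\Prob_{U|V}(\cdot|v_i(\bt))$, indexed by the message $\bs\in[2^{nR_\bs}]$ and a ``confusion'' index $l\in[2^{n(\MI(U;Z|V)+\tau')}]$. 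The transmitted codeword is $\bx$ generated i.i.d.\ $\prod_i\Prob_{X|U}(\cdot|u_i(\bt,\bs,L))$ where Alice draws $L$ uniformly using her private randomness $\bK$; this is the standard wiretap superposition structure.

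Next I would check the three requirements of Definition~\ref{def:bccsi}. For reliability (condition~\ref{cond:bccsierror}): Bob decodes $\bv(\bt)$ first, which succeeds with high probability provided $R_\bt<\MI(V;Y)$, giving \eqref{eq:boundrt}; conditioned on the correct cloud center, Bob jointly-typicality decodes $(\bs,L)$, which succeeds since the total rate $R_\bs+\MI(U;Z|V)+\tau'+\ (\text{slack})\ \le\ \MI(U;Y|V)$ by the choice of the $\bu$-codebook size and \eqref{eq:boundrs}. For the secrecy/leakage condition~\ref{cond:strongsecrey}: I need $\MI(\bS;\bT,\bZ)<\delta$. Conditioned on $\bT=\bt$ (hence on $\bv(\bt)$, which Judy can also determine up to negligible error, or we simply give it to her), the confusion index $L$ carries rate $\MI(U;Z|V)+\tau'>\MI(U;Z|V)$, so by the standard soft-covering / resolvability lemma (e.g.\ the channel-resolvability argument of Devetak or Hayashi, or Csisz\'ar's equivocation bound), $\MI(\bS;\bZ|\bT=\bt)\to0$ averaged over the codebook, and averaging over $\bt$ plus the chain rule $\MI(\bS;\bT,\bZ)=\MI(\bS;\bT)+\MI(\bS;\bZ|\bT)=0+\MI(\bS;\bZ|\bT)$ (using $\bS\perp\bT$) gives the claim. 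Finally I would invoke the usual selection argument: since the expected error and expected leakage are both small, there exists a deterministic codebook meeting both simultaneously (splitting $\epsilon,\delta$ and using Markov), and one can purge a fraction of bad $\bs$ indices if needed to convert average to maximal guarantees without changing the rate.

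The main obstacle I anticipate is making the leakage bound $\MI(\bS;\bT,\bZ)<\delta$ rigorous \emph{with the side information $\bt$ handled correctly} and simultaneously compatible with the chosen codebook: one must ensure the confusion-index rate strictly exceeds $\MI(U;Z|V)$ \emph{conditioned on the realized cloud center} rather than on average, which requires a conditional version of the soft-covering lemma and care that the typicality sets for $\bv(\bt)$ behave uniformly over $\bt$. A secondary technical point is the interplay between the two slack parameters $\tau,\tau'$ and the reliability constraint: one needs $R_\bs+\MI(U;Z|V)$ to sit strictly below $\MI(U;Y|V)$ with room for $\tau,\tau'$ and the typicality redundancy, which is exactly what \eqref{eq:boundrs} (with strict inequality in the interior) provides, and then close the region by taking $\tau,\tau'\to0$ and $n\to\infty$. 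The cardinality bounds on $\cU,\cV$ are not needed for this inner bound and can be deferred to the matching converse in Theorem~\ref{thm:message}.
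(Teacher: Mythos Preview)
Your proposal is correct and matches the paper's approach: the paper's own proof simply cites the strong-secrecy arguments of Csisz\'ar--K\"orner, Bloch--Laneman, and Hayashi--Matsumoto for the broadcast channel with confidential messages, noting that the only change is dropping the requirement that Judy decode $\bt$ (since she receives it as side information), which relaxes the cloud-center constraint from $\min\{\MI(V;Y),\MI(V;Z)\}$ to $\MI(V;Y)$. Your superposition-plus-resolvability sketch is exactly a fleshed-out version of that cited argument, and your observation that Judy knows $\bv(\bt)$ directly from $\bt$ (rather than decoding it from $\bz$) is precisely the point the paper is making.
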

 The above lemma gives an achievable region for this problem with strong secrecy (condition~\ref{cond:strongsecrey} of Definition~\ref{def:bccsi}). In the following corollary, we show that for every rate pair in this region, there exists a code for which the K-L divergence between the distributions $\Probc_{\bS}\Probc_{\bT,\bZ}$  and $\Probc_{\bS,\bT,\bZ}$ is small. This property is useful in the proof of Theorem~\ref{thm:message}, where we show that codes for the above secrecy problem lead to suitable codes and faking procedure for our message deniability problem. 
 \begin{corollary}\label{cor:sideinfo} Lemma~\ref{lem:sideinfo} continues to hold if the condition  $\KL(\Probc_{\bS}\Probc_{\bT,\bZ}||\Probc_{\bS,\bT,\bZ}) <\delta$ is added to Definition~\ref{def:bccsi} .
\end{corollary}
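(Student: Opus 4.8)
The plan is to revisit the resolvability-based achievability argument that proves Lemma~\ref{lem:sideinfo}, show that the same random code (after one innocuous modification of the encoder) also meets the new K-L requirement, and obtain that requirement from an elementary divergence inequality rather than from any new coding idea. Write $P\triangleq\Probc_{\bS}\,\Probc_{\bT,\bZ}$ and $Q\triangleq\Probc_{\bS,\bT,\bZ}$ for the product of the marginals and the joint distribution induced by a candidate code. A joint distribution is always dominated by the product of its marginals, so $Q\ll P$ automatically, and $\KL(Q\|P)=\MI(\bS;\bT,\bZ)$ by definition; thus the original secrecy condition~\ref{cond:strongsecrey} and the newly added one are precisely the two opposite directions of the divergence between the \emph{same} pair of distributions.

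Two inputs drive the argument. First, the achievability of Lemma~\ref{lem:sideinfo} is obtained by a superposition random code whose satellite ``bin'' attached to each message pair has rate strictly above $\MI(U;Z|V)$, so the soft-covering (channel resolvability) lemma applies bin-by-bin and yields not merely $\MI(\bS;\bT,\bZ)<\delta$ but $\MI(\bS;\bT,\bZ)\le 2^{-n\gamma}$ for some $\gamma>0$ (fix a positive rate backoff, let $n\to\infty$, and appeal to closedness of the region at the end, so restricting to such schemes is free). Second, for distributions with $P\ll Q$ the identity
\[
\KL(P\|Q)+\KL(Q\|P)=\sum_{\bs,\bt,\bz}\bigl(P(\bs,\bt,\bz)-Q(\bs,\bt,\bz)\bigr)\log\frac{P(\bs,\bt,\bz)}{Q(\bs,\bt,\bz)}\le\|P-Q\|_1\cdot L,
\]
with $L$ the maximum of $\bigl|\log(P/Q)\bigr|$ over the common support, together with Pinsker's inequality $\|P-Q\|_1\le\sqrt{2\ln 2\,\KL(Q\|P)}$, gives $\KL(P\|Q)\le L\,\sqrt{2\ln 2\,\MI(\bS;\bT,\bZ)}\le L\cdot 2^{-n\gamma/2+O(1)}$. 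Everything therefore reduces to keeping $L$ of polynomial order in $n$.

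This is where both the obstacle and the single modification live. If $\Prob_{Y,Z|X}$ lacks full support, some $\bz$ reachable under $\Probc_{\bZ|\bT=\bt}$ can be unreachable from the bin realizing $\Probc_{\bZ|\bT=\bt,\bS=\bs}$, so $Q(\bs,\bt,\bz)=0<P(\bs,\bt,\bz)$ and $\KL(P\|Q)=\infty$ — unlike variational distance or $\MI(\bS;\bT,\bZ)$, the reverse divergence is destroyed by any support mismatch, so the corollary is not a formal triviality and really does require touching the construction. I would repair this by dithering the encoder: with probability $1-2^{-n\alpha}$ (fixed small $\alpha>0$) Alice transmits $\Enc(\bm,\bkA)$ as before, and with probability $2^{-n\alpha}$ she transmits a symbol-wise i.i.d.\ uniform input over $\cX^n$. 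The dither perturbs the decoding error probability and $\MI(\bS;\bT,\bZ)$ by at most $\mathrm{poly}(n)\,2^{-n\alpha}$, so reliability and exponentially small leakage both survive; but now $Q$ is bounded below by $2^{-n\alpha}$ times a fixed product distribution on $\cZ^n$ at every point where $P>0$, and, since $|\cS|$ and $|\cT|$ are at most exponential in $n$, $P$ obeys a matching lower bound on the support of $Q$; hence $L=O(n)$. Combining, $\KL(\Probc_{\bS}\Probc_{\bT,\bZ}\|\Probc_{\bS,\bT,\bZ})\le O(n)\,2^{-n\gamma/2}+\mathrm{poly}(n)\,2^{-n\alpha}\to 0$, which is below $\delta$ for all large $n$, giving the added condition. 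The hard part of the whole argument is the single observation that the new condition is the \emph{reverse} of the strong-secrecy divergence and must be finitized by a vanishing dither; once that is recognized, the identity, Pinsker's inequality, and the dither bookkeeping are entirely routine.
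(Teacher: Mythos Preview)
Your proposal is correct, but the paper takes a different and somewhat cleaner route. Both arguments share the diagnosis: the added condition is the \emph{reverse} divergence $\KL(P\|Q)$ with $P=\Probc_{\bS}\Probc_{\bT,\bZ}$ and $Q=\Probc_{\bS,\bT,\bZ}$, the only obstruction is a possible support mismatch, and a vanishing perturbation of the code removes it. The difference is \emph{where} the perturbation is placed. You dither the \emph{channel input}: with probability $2^{-n\alpha}$ transmit an i.i.d.\ uniform $\bX$. This forces both $P$ and $Q$ to share the full reachable support of $\bZ^n$, but the resulting log-ratio bound is $L=O(n)$ (because $\Probc^{\mathrm{unif}}_{\bZ}(\bz)$ can be as small as $(|\cX|^{-1}p_{\min})^n$), so you must feed in the exponential secrecy $\MI(\bS;\bT,\bZ)\le 2^{-n\gamma}$ from resolvability to kill the $O(n)$ factor via Pinsker. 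The paper instead dithers the \emph{confidential message}: it keeps the original code for a pair $(\tilde{\bS},\bT)$, and precedes it by a stochastic map $\bS\mapsto\tilde{\bS}$ that equals $\bS$ with probability $1-\alpha$ and is an independent uniform draw with probability $\alpha$ (this is packaged as a separate Lemma). Because the randomisation is over $\bS$ itself, one obtains the pointwise inequality $Q\ge\alpha P$ \emph{directly}, hence $\log(P/Q)\le\log(1/\alpha)$ with no $n$-dependence, and the same Pinsker/H\"older step yields $\KL(P\|Q)\le\sqrt{2\MI(\bS;\bT,\bZ)}\,\log(1/\alpha)$. This works with merely $\MI(\bS;\bT,\bZ)<\delta$ rather than exponentially small leakage, and avoids the two-sided support bookkeeping you need. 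In short: your input-level dither is a valid alternative that leans harder on resolvability; the paper's message-level dither is more black-box and gives the sharper ratio bound $Q\ge\alpha P$ in one line.
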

 We discuss the proof of Lemma~\ref{lem:sideinfo} and Corollary~\ref{cor:sideinfo} in  Appendix~\ref{app:strongsecrecysideinfo}.
\subsection{Proof of achievability in Theorem~\ref{thm:message}}
 It suffices to prove the achievability of $(R,D)$ pairs satisfying $V-U-X-(Y,Z)$, 
 \begin{align*}
&0\leq R\leq \MI(Y;V)+D,\mbox{ and}\\
&0\leq D\leq \MI(U;Y|V)-\MI(U;Z|V).
\end{align*}
Note that such an $(R,D)$ pair may be expressed as $R=R_{\bs}+R_{\bt}$, and $D=R_{\bt}$, where the pair $(R_{\bs},R_{\bt})$ satisfies the inequalities~\eqref{eq:boundrs} and~\eqref{eq:boundrt} specified in Lemma~\ref{lem:sideinfo}. The crux of the achievability proof is the following reduction argument. Let $\epsilon,\delta>0$ be given. Choose $n$ large enough so that there exists a code $\cC$ of rate $(R_{\bs},R_{\bt})$ satisfying the achievability of Corollary~\ref{cor:sideinfo} with the chosen values of $\epsilon$ and $\delta$. For the message deniability problem, we decompose the $nR$-length message $\bm$ into two parts --  a confidential part $\bs$  of $nR_{\bs}$ bits, and a leaked part $\bt$ of $nR_{\bt}$ bits. Next, Alice and Bob encode and decode $(\bs,\bt)$ using the code $\cC=(\Enc,\Dec)$. The reliability guarantees for our code thus follow directly from the guarantees on $\cC$  proved in Corollary~\ref{cor:sideinfo}. The faking procedure draws $\bs'$ independently at random from the distribution $\Probc_{\bS}$ on $\{0,1\}^{nR_{\bs}}$ and outputs $\Fak{\bm}=(\bs',\bt)$. For the faking procedure thus constructed, 
\begin{align}
\KL(\Probc_{\Fak{\bM},\bZ}||\Probc_{\bM,\bZ})&=	\KL(\Probc_{\bS',\bT,\bZ}||\Probc_{\bS,\bT,\bZ})\\
&= \KL(\Probc_{\bS}\Probc_{\bT,\bZ}||\Probc_{\bS,\bT,\bZ})\\
&\overset{(a)}{\leq} \delta. 
\end{align}
In the above, the bound (a) follows from the guarantees provided in Corollary~\ref{cor:sideinfo}.  This shows that $(R,D)\in\cR_{\bm}$.
\subsection{Proof of converse in Theorem~\ref{thm:message}}
The scheme described in the previous section has the following property. Given the  part of the message that is revealed to Judy, the additional information learnt by Judy based on her channel observation is no larger than $\delta$. In particular this implies that for the scheme presented in our achievability proof, $\MI(\bM;\bZ|\Fak{\bM})<\delta$, \emph{i.e.}, given the fake message, the channel observation and the true message are nearly independent. In our converse proof, we start off by showing that this property must, in fact, be true for any faking procedure that satisfies the plausibility requirement.  Further, we also show that in order for a faking procedure to be plausible,  the entropy for the message and the fake message must be close each other. The following lemma makes these claims precise.
 \begin{lemma} \label{lem:convmessage}Let $\Fak{\bM}$ be $(\delta,D)$-plausibly deniable for $\bM$ given observation $\bZ$ and satisfy $\Fak{\bM}-\bM-\bZ$. Then, there exists a non-negative constant $\lambda$ depending only on $\Prob_{Z|X}$ and $|\cM|$ such that
\begin{align}
& \MI(\bM;\bZ|\Fak{\bM}) \leq \delta+n\lambda\sqrt{\delta},\mbox{ and}\label{eq:mzgivenmf}\\
& \lvert \ent(\bM) - \ent(\Fak{\bM})\rvert \leq \delta+n\lambda\sqrt{\delta}\label{eq:hmhmf}.
\end{align}
\end{lemma}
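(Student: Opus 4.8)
\textbf{Proof proposal for Lemma~\ref{lem:convmessage}.}

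The plan is to extract both bounds from the plausibility hypothesis $\KL(\Probc_{\bZ,\Fak{\bM}}||\Probc_{\bZ,\bM})\leq\delta$, using the fact that $\Fak{\bM}$ and $\bM$ take values in the same finite set $\cM$ and (crucially) are linked to $\bZ$ through the same memoryless channel $\Prob_{Z|X}$ composed with whatever encoder is in force. First I would unpack what $\Probc_{\bZ,\bM}$ and $\Probc_{\bZ,\Fak{\bM}}$ mean: since $\bM\to\bX\to\bZ$ and $\Fak{\bM}\to\bM\to\bZ$ (hence $\Fak{\bM}$ is also connected to $\bZ$ via the channel), both $\Probc_{\bZ|\bM}(\cdot|\bm)$ and $\Probc_{\bZ|\Fak{\bM}}(\cdot|\bm')$ are mixtures, over the encoder's randomness, of product-form channel outputs, and in particular $\Probc_{\bZ|\Fak{\bM}=\bm'} = \Probc_{\bZ|\bM=\bm'}$ because conditioned on the realized value $\bm'$ of the fake message, the conditional law of $\bZ$ is generated in exactly the same way as conditioned on the true message being $\bm'$ (this uses the Markov chain $\Fak{\bM}-\bM-\bZ$ together with the symmetry of how $\bZ$ is produced). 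Therefore $\Probc_{\bZ,\Fak{\bM}}$ and $\Probc_{\bZ,\bM}$ differ only through their $\bM$-marginals $\Probc_{\Fak{\bM}}$ versus $\Probc_{\bM}$, and the chain rule for K-L divergence gives
\begin{align}
\KL(\Probc_{\bZ,\Fak{\bM}}||\Probc_{\bZ,\bM}) = \KL(\Probc_{\Fak{\bM}}||\Probc_{\bM}) + \expect_{\Fak{\bM}}\!\left[\KL\!\left(\Probc_{\bZ|\Fak{\bM}}||\Probc_{\bZ|\bM}\right)\right],
\end{align}
but wait --- the second term is \emph{not} zero in general, because the inner divergence compares $\Probc_{\bZ|\bM=\bm'}$ against $\Probc_{\bZ|\bM}$ \emph{evaluated at the same argument}, not at $\bm'$; I'd need to be careful and instead write the divergence expansion conditioning consistently, which is where the real work lies.

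For~\eqref{eq:hmhmf}: once we have $\KL(\Probc_{\Fak{\bM}}||\Probc_{\bM})\leq\delta$ (or more precisely a bound of the form $\delta$ after discarding a nonnegative term), use the fact that $\Probc_{\bM}$ is uniform on $\cM$, so $\KL(\Probc_{\Fak{\bM}}||\Probc_{\bM}) = \log|\cM| - \ent(\Fak{\bM}) = \ent(\bM) - \ent(\Fak{\bM})$. Since divergence is nonnegative this already gives $\ent(\bM)-\ent(\Fak{\bM})\leq\delta$, i.e.\ one side of~\eqref{eq:hmhmf}; for the other side I'd bound $\ent(\Fak{\bM})\leq\log|\cM| = \ent(\bM)$, which is immediate since $\Fak{\bM}$ takes at most $|\cM|$ values --- so in fact the clean statement would be $0\le\ent(\bM)-\ent(\Fak{\bM})\le\delta\le\delta+n\lambda\sqrt\delta$. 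The extra slack $n\lambda\sqrt\delta$ in the lemma suggests the authors do not have the exact consistency $\Probc_{\bZ|\Fak{\bM}=\bm'}=\Probc_{\bZ|\bM=\bm'}$ for free (the encoder may use private randomness in a way that correlates with the message), so realistically one controls $\KL(\Probc_{\Fak{\bM}}||\Probc_{\bM})$ only up to a term that is bounded via Pinsker and the boundedness of $\log(1/\Prob_{Z|X})$ --- this is exactly the source of the $\sqrt\delta$.

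For~\eqref{eq:mzgivenmf}: the quantity $\MI(\bM;\bZ|\Fak{\bM})$ measures how far $\Probc_{\bM,\bZ|\Fak{\bM}}$ is from $\Probc_{\bM|\Fak{\bM}}\Probc_{\bZ|\Fak{\bM}}$; I would write $\MI(\bM;\bZ|\Fak{\bM}) = \expect_{\Fak{\bM}}\KL(\Probc_{\bM,\bZ|\Fak{\bM}}||\Probc_{\bM|\Fak{\bM}}\Probc_{\bZ|\Fak{\bM}})$ and relate this to $\KL(\Probc_{\bZ,\Fak{\bM}}||\Probc_{\bZ,\bM})$ by comparing both to the reference measure $\Probc_{\bM}\Probc_{\bZ|\bM}$ (the joint law that would hold if the summoned party answered truthfully with a uniform draw). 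Concretely, I expect the decomposition
\begin{align}
\KL(\Probc_{\bZ,\Fak{\bM}}||\Probc_{\bZ,\bM}) \;=\; \KL(\Probc_{\Fak{\bM}}||\Probc_{\bM}) \;+\; \big(\text{a term} \ge \MI(\bM;\bZ|\Fak{\bM}) - (\text{error})\big),
\end{align}
where the error is controlled by how much $\Probc_{\bZ|\bM,\Fak{\bM}}$ deviates from $\Probc_{\bZ|\bM}$; since $\Fak{\bM}-\bM-\bZ$, in fact $\Probc_{\bZ|\bM,\Fak{\bM}}=\Probc_{\bZ|\bM}$ exactly, which should make this clean, leaving $\KL(\Probc_{\bZ,\Fak{\bM}}||\Probc_{\bZ,\bM}) = \KL(\Probc_{\Fak{\bM}}||\Probc_{\bM}) + \MI(\bM;\bZ|\Fak{\bM}) - I(\bM;\bZ)_{\text{under }\Probc}$ up to signs I'd need to pin down --- and the stray $\sqrt\delta$ again enters when passing from the true channel-output law conditioned on the fake message to the law conditioned on the true message having that same value.

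\textbf{Main obstacle.} The hard part will be handling the fact that conditioning on $\Fak{\bM}=\bm'$ is \emph{not} the same as conditioning on $\bM=\bm'$ at the level of $\bZ$: although the Markov chain $\Fak{\bM}-\bM-\bZ$ holds, the encoder's stochastic map from $\bM$ to $\bZ$ need not be the same measure when we reweight by the event $\{\Fak{\bM}=\bm'\}$ versus by $\{\bM=\bm'\}$. Quantifying this mismatch is what forces the $n\lambda\sqrt\delta$ correction: I would bound the relevant total-variation distance between these conditional laws using Pinsker's inequality applied to the plausibility divergence $\delta$, then convert total variation back to a divergence/mutual-information gap at a cost of $n\lambda\sqrt\delta$, where $\lambda$ absorbs $\max_{x,z:\Prob_{Z|X}(z|x)>0}\log(1/\Prob_{Z|X}(z|x))$ (finite because the alphabets are finite) and the $\log|\cM|$ factor. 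Assembling these pieces --- chain rule for divergence, nonnegativity, uniformity of $\Probc_{\bM}$, the exact Markov relation for $\bZ$ given $\bM$, and one Pinsker-type perturbation bound --- yields both~\eqref{eq:mzgivenmf} and~\eqref{eq:hmhmf}.
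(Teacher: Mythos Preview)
Your treatment of~\eqref{eq:hmhmf} is correct and in fact cleaner than what the paper needs: from data processing, $\KL(\Probc_{\Fak{\bM}}\|\Probc_{\bM})\le\KL(\Probc_{\bZ,\Fak{\bM}}\|\Probc_{\bZ,\bM})\le\delta$, and since $\Probc_{\bM}$ is uniform this equals $\ent(\bM)-\ent(\Fak{\bM})$; the other direction is $\ent(\Fak{\bM})\le\log|\cM|=\ent(\bM)$. So that half is done.

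For~\eqref{eq:mzgivenmf}, however, you have correctly identified all the ingredients (uniformity of $\Probc_{\bM}$, the Markov chain $\Fak{\bM}-\bM-\bZ$, Pinsker, and the finiteness of $\max_{(x,z):\Prob_{Z|X}(z|x)>0}\log(1/\Prob_{Z|X}(z|x))$) but you have not found the decomposition that makes them fit together, and the two decompositions you sketch do not lead there. Your first chain-rule attempt gives $\KL(\Probc_{\bZ,\Fak{\bM}}\|\Probc_{\bZ,\bM})=\KL(\Probc_{\Fak{\bM}}\|\Probc_{\bM})+\expect_{\Fak{\bM}}\KL(\Probc_{\bZ|\Fak{\bM}}\|\Probc_{\bZ|\bM})$, but the second term is a divergence between two conditionals evaluated at the \emph{same} message value and has no clean relation to $\MI(\bM;\bZ|\Fak{\bM})$; your second attempt (``up to signs I'd need to pin down'') is not a proof. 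The paper instead starts from the identity $\MI(\bM;\bZ|\Fak{\bM})=\ent(\bZ|\Fak{\bM})-\ent(\bZ|\bM)$ (which uses the Markov chain once) and algebraically rewrites the right-hand side as
\[
\KL(\Probc_{\Fak{\bM}}\|\Probc_{\bM})\;+\;\sum_{\bm}\bigl[\Probc_{\Fak{\bM}}(\bm)-\Probc_{\bM}(\bm)\bigr]\log\Probc_{\bM}(\bm)\;-\;\KL(\Probc_{\bZ,\Fak{\bM}}\|\Probc_{\bZ,\bM})\;-\;\sum_{(\bz,\bm)}\bigl[\Probc_{\bZ,\Fak{\bM}}-\Probc_{\bZ,\bM}\bigr]\log\frac{1}{\Probc_{\bZ,\bM}(\bz,\bm)}.
\]
Now the first term is $\le\delta$, the second vanishes by uniformity, the third is $\le0$, and the fourth is the only place any real work happens: expand $\Probc_{\bZ,\bM}(\bz,\bm)=|\cM|^{-1}\sum_{\bx}\Probc_{\bX|\bM}(\bx|\bm)\Prob_{\bZ|\bX}(\bz|\bx)$, apply Jensen to pull the sum outside the log, bound $\log(1/\Prob_{\bZ|\bX}(\bz|\bx))\le n\log(1/\min_{(x,z)}\Prob_{Z|X}(z|x))$, and then Pinsker controls the remaining $\ell_1$ difference by $\sqrt{2\delta}$. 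That is where $n\lambda\sqrt{\delta}$ comes from, with $\lambda$ absorbing $\log|\cM|^{1/n}$ and the channel constant. Your ``main obstacle'' paragraph anticipates exactly this last step, but you are missing the add-and-subtract identity that isolates it.
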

\begin{proof}
We explicitly  prove only the first inequality. The second inequality follow from a similar reasoning. We first use the definition of mutual information and Kullback-Leibler Divergence to note that 
\begin{align*}
\lefteqn{\MI(\bM;\bZ|\Fak{\bM})=\ent(\bZ|\Fak{\bM})-\ent(\bZ|\bM)}\\
=&{\mathlarger\sum_{(\bz,\bm):\Probc_{\bZ,\Fak{\bM}}(\bz,\bm)>0}}\Probc_{\bZ,\Fak{\bM}}(\bz,\bm)\log{\frac{\Probc_{\Fak{\bM}}(\bm)}{\Probc_{\bZ,\Fak{\bM}}(\bz,\bm)}} \eqbr -{\mathlarger\sum_{(\bz,\bm):\Probc_{\bZ,\bM}(\bz,\bm)>0}}\Probc_{\bZ,\bM}(\bz,\bm)\log{\frac{\Probc_{\bM}(\bm)}{\Probc_{\bZ,\bM}(\bz,\bm)}}\\
=&{\mathlarger\sum_{\bm:\Probc_{\Fak{\bM}}(\bm)>0}}\Probc_{\Fak{\bM}}(\bm)\log{\Probc_{\Fak{\bM}}(\bm)} -{\mathlarger\sum_{\bm\in\cM}}\Probc_{\bM}(\bm)\log{\Probc_{\bM}(\bm)}\\
&\ -{\mathlarger\sum_{(\bz,\bm):\Probc_{\bZ,\Fak{\bM}}(\bz,\bm)>0}}\Probc_{\bZ,\Fak{\bM}}(\bz,\bm)\log{\frac{1}{\Probc_{\bZ,\Fak{\bM}}(\bz,\bm)}} \eqbr +{\mathlarger\sum_{(\bz,\bm):\Probc_{\bZ,\bM}(\bz,\bm)>0}}\Probc_{\bZ,\bM}(\bz,\bm)\log{\frac{1}{\Probc_{\bZ,\bM}(\bz,\bm)}}\\
= &\ {D}\left(\Probc_{\Fak{\bM}}||\Probc_{\bM}\right) +{\mathlarger\sum_{\bm\in\cM}}\left[\Probc_{\Fak{\bM}}(\bm)-\Probc_{\bM}(\bm)\right]\log\Probc_{\bM}(\bm)\\
&\  - {D}\left(\Probc_{\bZ,\Fak{\bM}}||\Probc_{\bZ,\bM}\right)  - \eqbr{\mathlarger\sum_{(\bz,\bm):\Probc_{\bZ,\Fak{\bM}}(\bz,\bm)>0}}\Probc_{\bZ,\Fak{\bM}}(\bz,\bm)\log{\frac{1}{\Probc_{\bZ,\bM}(\bz,\bm)}}\eqbr -{\mathlarger\sum_{(\bz,\bm):\Probc_{\bZ,\bM}(\bz,\bm)>0}}\Probc_{\bZ,\bM}(\bz,\bm)\log{\frac{1}{\Probc_{\bZ,\bM}(\bz,\bm)}} \\
= &\ {D}\left(\Probc_{\Fak{\bM}}||\Probc_{\bM}\right) +{\mathlarger\sum_{\bm\in\cM}}\left[\Probc_{\Fak{\bM}}(\bm)-\Probc_{\bM}(\bm)\right]\log\Probc_{\bM}(\bm)\\
&\  - {D}\left(\Probc_{\bZ,\Fak{\bM}}||\Probc_{\bZ,\bM}\right) \eqbr - {\mathlarger\sum_{(\bz,\bm):\Probc_{\bZ,\bM}(\bz,\bm)>0}}\left[\Probc_{\bZ,\Fak{\bM}}(\bz,\bm)-\Probc_{\bZ,\bM}(\bz,\bm)\right]\log{\frac{1}{\Probc_{\bZ,\bM}(\bz,\bm)}}.\eqbr\numberthis\label{eq:mzgivenf1}
\end{align*}
In the last step we use the fact that $\{(\bz,\bm):\Probc_{\bZ,\Fak{\bM}}(\bz,\bm)>0\}\subseteq \{(\bz,\bm):\Probc_{\bZ,\bM}(\bz,\bm)>0\}$ as $\KL(\Probc_{\bZ,\Fak{\bM}}||\Probc_{\bZ,\bM})<\delta<\infty$. Continuing further from Eq.~\eqref{eq:mzgivenf1} and again using the fact that $\KL(\Probc_{\bZ,\Fak{\bM}}||\Probc_{\bZ,\bM})<\delta$, we have
\begin{align*}
\lefteqbr{\MI(\bM;\bZ|\Fak{\bM})}{\overset{(a)}{\leq}\ \delta - {\mathlarger\sum_{(\bz,\bm):\Probc_{\bZ,\bM}(\bz,\bm)>0}}\left[\Probc_{\bZ,\Fak{\bM}}(\bz,\bm)-\Probc_{\bZ,\bM}(\bz,\bm)\right]\eqbr[\qquad\times] \log{\frac{1}{\Probc_{\bZ,\bM}(\bz,\bm)}}}\\
\leq &\ \delta + {\mathlarger\sum_{(\bz,\bm):\Probc_{\bZ,\bM}(\bz,\bm)>0}} \eqbrl \left|\Probc_{\bZ,\Fak{\bM}}(\bz,\bm)-\Probc_{\bZ,\bM}(\bz,\bm)\right| \eqbr[\qquad\times] \log{\frac{1}{\Probc_{\bZ,\bM}(\bz,\bm)}} \eqbrr\\
= &\ \delta + {\mathlarger\sum_{(\bz,\bm):\Probc_{\bZ,\bM}(\bz,\bm)>0}} \eqbrl \left|\Probc_{\bZ,\Fak{\bM}}(\bz,\bm)-\Probc_{\bZ,\bM}(\bz,\bm)\right| \eqbr[\qquad\times] \log{\frac{|\cM|}{\sum_{\bx:\Prob_{\bZ|\bX}(\bz|\bx)>0}\Prob_{\bZ|\bX}(\bz|\bx)\Probc_{\bX|\bM}(\bx|\bm)}} \eqbrr\\
\overset{(b)}{\leq} &\ \delta +{\mathlarger\sum_{(\bz,\bm):\Probc_{\bZ,\bM}(\bz,\bm)>0}} \eqbrl \left|\Probc_{\bZ,\Fak{\bM}}(\bz,\bm)-\Probc_{\bZ,\bM}(\bz,\bm)\right| \eqbr[\qquad\times]  \mathlarger{\sum}_{\substack{\bx\in\cX^n\\\Prob_{\bZ|\bX}(\bz|\bx)>0}}\Probc_{\bX|\bM}(\bx|\bm)\log{\frac{|\cM|}{\Prob_{\bZ|\bX}(\bz|\bx)}} \eqbrr\\
\leq &\ \delta +{\mathlarger\sum_{(\bz,\bm):\Probc_{\bZ,\bM}(\bz,\bm)>0}} \eqbrl \left|\Probc_{\bZ,\Fak{\bM}}(\bz,\bm)-\Probc_{\bZ,\bM}(\bz,\bm)\right| \eqbr[\qquad\times]  \mathlarger{\sum}_{\substack{\bx\in\cX^n\\ \Prob_{\bZ|\bX}(\bz|\bx)>0}}\Probc_{\bX|\bM}(\bx|\bm)\max_{\bx':\Prob_{\bZ|\bX}(\bz|\bx')>0}\log{\frac{|\cM|}{\Prob_{\bZ|\bX}(\bz|\bx')}} \eqbrr \\
\leq &\ \delta +{\mathlarger\sum_{(\bz,\bm):\Probc_{\bZ,\bM}(\bz,\bm)>0}}\eqbrl \left|\Probc_{\bZ,\Fak{\bM}}(\bz,\bm)-\Probc_{\bZ,\bM}(\bz,\bm)\right| \eqbr[\qquad\times]  \mathlarger{\sum}_{\substack{\bx\in\cX^n\\ \Prob_{\bZ|\bX}(\bz|\bx)>0}}\Probc_{\bX|\bM}(\bx|\bm)n\log\frac{(|\cM|)^{1/n}}{\min_{(z,x):\Prob_{Z|X}(z|x)>0}\Prob_{Z|X}(z|x)}\eqbrr\\
\overset{(c)}{\leq} &\ \delta+n\sqrt{2\delta}\left[\log{|\cM|}-\log{\frac{1}{\min_{(z,x):\Prob_{Z|X}(z|x)>0}\Prob_{Z|X}(z|x)}}\right].
\end{align*}
In the above, $(a)$ follows by using the fact that $\Fak{\bM}$ is $(\delta,D)$-plausibly deniable for $\bM$ given $\bZ$ to bound the first term in~\eqref{eq:mzgivenf1}, noting that $\Probc_\bM(\bm)$ equals $1/|\cM|$ to conclude that the second term is zero, and applying the non-negativity of the Kullback-Leibler divergence. The inequality $(b)$ is obtained by using Jensen's inequality. Finally, $(c)$ follows applying Pinsker's inequality to bound the variational distance between the distributions $\Probc_{\bZ,\Fak{\bM}}$ and $\Probc_{\bZ,\bM}$.\end{proof}

\begin{proof}[Proof of converse of Theorem~\ref{thm:message}] Let $\epsilon,\delta>0$. We begin by obtaining $n$-letter bounds on $D$ and $R$ for any $(\epsilon,R)$-reliable and $(\delta,D)$-plausibly deniable code. To this end, from the definition and Lemma~\ref{lem:convmessage}, there exists $\converseeps=\gamma(\epsilon,\delta)>0$ such that $\lim_{(\epsilon,\delta)\to (0,0)}\gamma=0$, and
\begin{align}
nD &\leq  \ent(\Fak{\bM}|\bM)  \notag\\
&=  \ent(\bM|\Fak{\bM}) + \ent(\Fak{\bM}) - \ent(\bM) \notag\\
&\leq \ent(\bM|\Fak{\bM}) + n\converseeps\notag\\
 &\overset{(a)}{\leq} \MI(\bM;\bY|\Fak{\bM}) + 2n\converseeps\notag\\
 &\leq \MI(\bM;\bY|\Fak{\bM}) - \MI(\bM;\bZ|\Fak{\bM}) + 3n\converseeps. \label{eq:converse_generic_D}\\
\intertext{In the above, $(a)$ follows by applying Fano's inequality and letting $\gamma$ be at least as large as $\epsilon$. Next, Applying we apply Fano's inequality to bound the rate $R$ as}
nR &\leq \MI(\bM;\bY) + n\converseeps\\
 & = \MI(\Fak{\bM},\bM;\bY) + n\converseeps\notag\\
 &= \MI(\Fak{\bM};\bY) + \MI(\bM;\bY|\Fak{\bM}) + n\converseeps\notag\\
 &\leq \MI(\Fak{\bM};\bY) + \MI(\bM;\bY|\Fak{\bM}) - \MI(\bM;\bZ|\Fak{\bM}) + 2n\converseeps, \label{eq:converse_generic_R}
\end{align}
where the second equality follows from the fact that $\Fak{\bM}-\bM-\bY$ is a Markov chain. Next, we obtain single-letter versions of the above expressions. Let $\share$ be uniformly distributed over $[1:n]$ and independent of $(\bM,\Fak{\bM},\bX,\bY,\bZ)$. From \eqref{eq:converse_generic_D},
\begin{align*}
\lefteqn{D \leq \frac{1}{n}\left[\MI(\bM;\bY|\Fak{\bM}) - \MI(\bM;\bZ|\Fak{\bM})\right]+3\gamma }\\
 &\overset{(a)}{=}\frac{1}{n} 
\sum_{i=1}^n \left[\MI(\bM;Y_i|Y^{i-1},Z_{i+1}^n,\Fak{\bM}) - \MI(\bM;Z_i|Y^{i-1},Z_{i+1}^n,\Fak{\bM})\right]+3\gamma \\
 &= \MI(\bM;Y_\share|Y^{\share-1},Z_{\share+1}^n, \Fak{\bM}, \share) - \MI(\bM;Z_\share|Y^{\share-1},Z_{\share+1}^n,\Fak{\bM},\share) +3\gamma
 \end{align*}
 where $(a)$ follows from Csisz\'ar's sum identity~\cite{ElGamalK:11}. Also,
 \begin{align*}
\MI(\Fak{\bM};\bY) &=\ \sum_{i=1}^n \MI(\Fak{\bM};Y_i|Y^{i-1}) \\
& \leq \sum_{i=1}^n \MI(\Fak{\bM},Y^{i-1},Z_{i+1}^n;Y_i)\\
 &= n\MI(\Fak{\bM},Y^{\share-1},Z_{\share+1}^n;Y_\share|\share)\\
 & \leq n\MI(\Fak{\bM},Y^{\share-1},Z_{\share+1}^n,\share;Y_\share).
 \end{align*}
Hence, from~\eqref{eq:converse_generic_R}, 
\begin{align*}
R&\leq  \MI(\Fak{\bM},Y^{\share-1},Z_{\share+1}^n,\share;Y_\share)+\MI(\bM;Y_\share|Y^{\share-1},Z_{\share+1}^n, \Fak{\bM}, \share) \eqbr -\MI(\bM;Z_\share|Y^{\share-1},Z_{\share+1}^n,\Fak{\bM}) + 2\converseeps.
\end{align*} 
Next, let $V=(\Fak{\bM},Y^{\share-1},Z_{\share+1}^n,\share)$, $U=(V,\bM)$, $X=X_\share$, $Y=Y_\share$ and $Z=Z_\share$. Then, clearly, $V - U - X - (Y,Z)$. Substituting above and letting $\epsilon$ and $\delta$ be arbitrarily small (but positive)  shows that  any achievable rate-deniability pair $(R,D)$ must satisfy
\begin{align*}
&0\leq R\leq \MI(Y;V)+\MI(U;Y|V)-\MI(U;Z|V),\mbox{ and}\\
&0\leq D\leq \min\set{R,\MI(U;Y|V)-\MI(U;Z|V)}
\end{align*}
for some random variables $U$ and $V$  satisfying the Markov chain $V - U - X - (Y,Z)$. 

Finally,  we argue that it suffices to consider random variables $U$ and $V$ such that  $|\cU|\leq \left(|\cX|+1\right)\left(|\cX|+2\right)$ and $|\cV|\leq |\cX|+2$. The proof  follows along the cardinality bounding  argument for the broadcast channel with confidential messages~\cite[pp.~347-348]{CsiszarK:78}. In particular, consider  auxiliary variables $V$ and $U$, that take values in sets $\cV$ and $\cU$ respectively, and are jointly distributed with $X,Y$, and $Z$ such that $\Prob_{VUXYZ}(v,u,x,y,z)=\Prob_V(v)\Prob_{U|V}(u|v)\Prob_{X|U}(x|u)\Prob_{YZ|X}(y,z|x)$ for every $(v,u,x,y,z)\in\cV\times\cU\times\cX\times\cY\times\cZ$. The first step in the proof is to show that there exist auxiliary variables $\tilde{V}$ and $\tilde{U}$, that take values in sets $\tilde{\cV}$ and $\cU$ respectively, are jointly distributed with $X,Y$, and $Z$ such that  $\Prob_{\tilde{V}\tilde{U}XYZ}(v,u,x,y,z)=\Prob_{\tilde{V}}(v)\Prob_{U|V}(u|v)\Prob_{X|U}(x|u)\Prob_{YZ|X}(y,z|x)$ for every $(v,u,x,y,z)\in\tilde{\cV}\times\cU\times\cX\times\cY\times\cZ$, where, $\Prob_{\tilde{V}}$ satisfy the following constraints:
\begin{align}
&\sum_{v\in\tilde{\cV}} \Prob_{\tilde{V}}(v)\sum_{u\in\cU}\Prob_{U|V}(u|v)\Prob_{X|U}(x|u)=	\sum_{v\in{\cV}} \Prob_{{V}}(v)\sum_{u\in\cU}\Prob_{U|V}(u|v)\Prob_{X|U}(x|u)=\Prob_X(x)\mbox{ for all $x\in\cX$},
\label{eq:card1}\\
&\sum_{v\in\tilde{\cV}} \Prob_{\tilde{V}}(v)\ent(Y|{V}=v)=\sum_{v\in\tilde{\cV}} \Prob_{{V}}(v)\ent(Y|V=v),\label{eq:card2}\\ 
&\sum_{v\in\tilde{\cV}} \Prob_{\tilde{V}}(v)\left(\ent(Y|V=v)-\sum_{u\in\cU}\Prob_{U|V}(u|v)\ent(Y|U=u)\right)=\sum_{v\in{\cV}} \Prob_{{V}}(v)\left(\ent(Y|V=v)-\sum_{u\in\cU}\Prob_{U|V}(u|v)\ent(Y|U=u)\right),\label{eq:card3}\\
&\sum_{v\in\tilde{\cV}} \Prob_{\tilde{V}}(v)\left(\ent(Z|V=v)-\sum_{u\in\cU}\Prob_{U|V}(u|v)\ent(Z|U=u)\right)=\sum_{v\in{\cV}} \Prob_{{V}}(v)\left(\ent(Z|V=v)-\sum_{u\in\cU}\Prob_{U|V}(u|v)\ent(Z|U=u)\right),\mbox{ and}\label{eq:card4}\\
&|\tilde{\cV}|\leq |\cX|+2.\label{eq:card5}
\end{align}
In the above, constraints.~\eqref{eq:card1} and~\eqref{eq:card2} ensure that $\MI(Y;\tilde{V})$ equals $\MI(Y;V)$,~\eqref{eq:card3} and~\eqref{eq:card4} ensure that $\MI(\tilde{U};Y|\tilde{V})-\MI(\tilde{U};Z|\tilde{V})$ equals $\MI(U;Y|V)-\MI(U;Z|V)$, and Eq.~\eqref{eq:card5} follows from Caratheodory's theorem (\emph{c.f.}~\cite[Lemma 3]{AhlswedeK:75}) as Eqs.~\eqref{eq:card1}-\eqref{eq:card4} imply at most $|\cX|+2$ constraints on $\Prob_{\tilde{V}}$. Note that the number of constraints in our setting is one less than that in~\cite{CsiszarK:78} as we do not require  $\MI(Z;\tilde{V})$ to equal  $\MI(Z;V)$. Next, using a similar reasoning, the next step is to show that there exists an auxiliary variable $\hat{U}$ that takes values in a set $\hat{\cU}$, is jointly distributed with $\tilde{V},X,Y$, and $Z$ such that $\Prob_{\tilde{V}\hat{U}XYZ}(v,u,x,y,z)=\Prob_{\tilde{V}}(v)\Prob_{\hat{U}|\tilde{V}}(u|v)\Prob_{X|U}(x|u)\Prob_{YZ|X}(y,z|x)$ for every $(v,u,x,y,z)\in\tilde{\cV}\times\hat{\cU}\times\cX\times\cY\times\cZ$, where, for each $v\in\tilde{\cV}$, $\Prob_{\hat{U}|\tilde{V}}$ satisfies the following constraints:

\begin{align}
&\sum_{u\in\hat{\cU}}\Prob_{\hat{U}|\tilde{V}}(u|v)\Prob_{X|U}(x|u)=	\sum_{u\in\cU}\Prob_{U|\tilde{V}}(u|v)\Prob_{X|U}(x|u)=\Prob_{X|\tilde{V}}(x|v)\label{eq:card6},\\
&\sum_{u\in\cU}\Prob_{\hat{U}|\tilde{V}}(u|v)\ent(Y|U=u)=\sum_{u\in\cU}\Prob_{U|\tilde{V}}(u|v)\ent(Y|U=u),\label{eq:card7}\\
&\sum_{u\in\cU}\Prob_{\hat{U}|\tilde{V}}(u|v)\ent(Z|U=u)=\sum_{u\in\cU}\Prob_{U|\tilde{V}}(u|v)\ent(Z|U=u),\label{eq:card8}\mbox{ and}\\
&\left|\set{u\in\hat{\cU}:\Prob_{\hat{U}|\tilde{V}}(u|v)>0}\right|\leq |\cX|+1.\label{eq:card9}
\end{align}
Here, constraint~\eqref{eq:card6} ensures consistency of the marginals of $\Prob_{\tilde{V}\hat{U}XYZ}$ and $\Prob_{\tilde{V}\tilde{U}XYZ}$ with respect to $(\tilde{V},X,Y,Z)$,  constraints~\eqref{eq:card7}  and~\eqref{eq:card8} (along with~\eqref{eq:card6}) ensure that $\MI(\hat{U};Y|\tilde{V})-\MI(\hat{U};Z|\tilde{V})$ equals $\MI(\tilde{U};Y|\tilde{V})-\MI(\tilde{U};Z|\tilde{V})$, and Eq.~\eqref{eq:card9} again follows from Caratheodory's theorem as Eqs.~\eqref{eq:card6}-\eqref{eq:card8} imply at most $|\cX|+1$ constraints on $\Prob_{\hat{U}|\tilde{V}}(\cdot|v)$. Finally, summing the bound from~\eqref{eq:card9} over all $v\in\tilde{\cV}$, we obtain that it suffices to let $|\hat{\cU}|$ be at most $\left(|\cX|+1\right)\left(|\cX|+2\right)$. This completes the proof of the converse.
\end{proof}

\subsection{Discussions}
\subsubsection{Plausible deniability vs Secrecy}
In the following discussion, we compare the capacity region $\cR_{\bm}$ to rate regions for two standard information-theoretic secrecy problems -- the Wire-Tap Channel~\cite{Wyner:75} and Broadcast Channel with Confidential messages~\cite{CsiszarK:78} (see Figure~\ref{fig:WTCBCC}). To this end, we first adapt the following definitions from~\cite{Wyner:75,CsiszarK:78}.

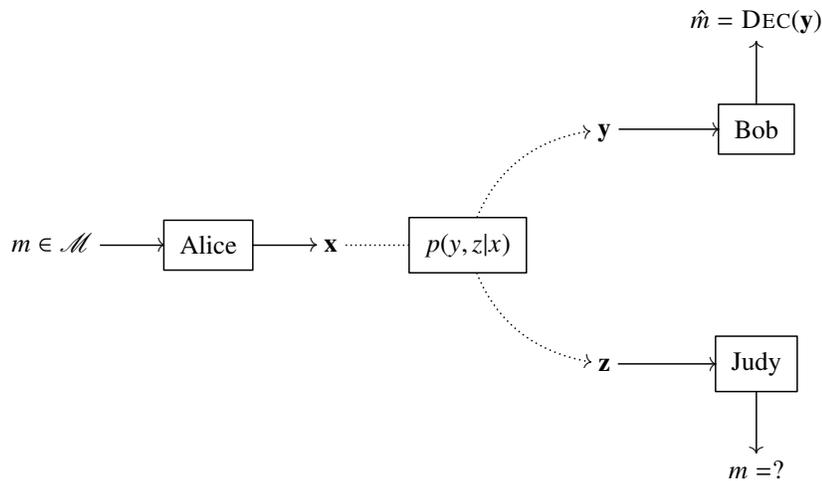
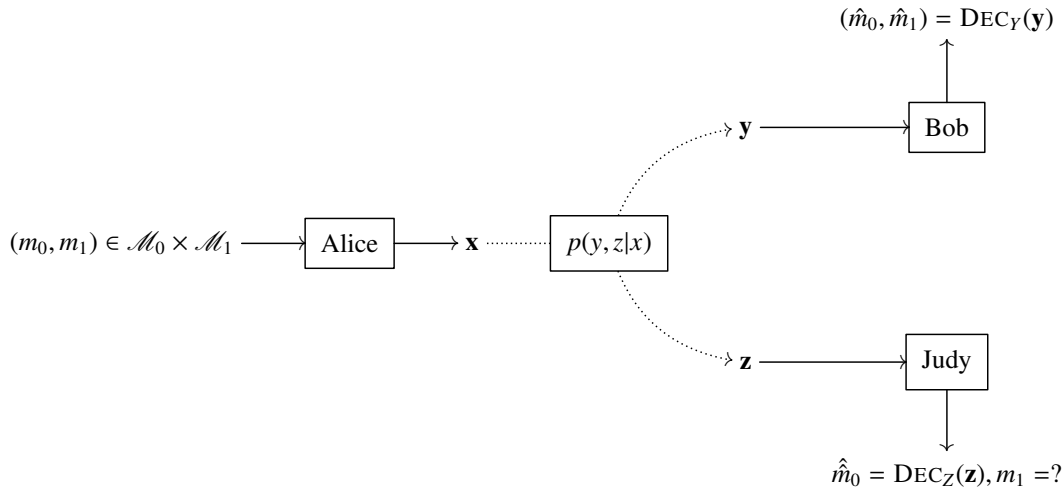
\begin{figure*}[h]
\centering
\begin{subfigure}{\textwidth}
\begin{displaymath}
\scalebox{1}{
\xymatrix{
&&&&& \hat{\bm}=\Dec(\by)\\
& &&&\by \ar[r]&
 *++[F]{\mbox{Bob}}\ar[u] &
 \\
\bm\in\cM\ar[r] &
 *++[F]{\mbox{Alice}}\ar[r]& \bx  \ar@{.}[r]&*++[F]{p(y,z|x)}\ar@/^1pc/@{.>}[ru]\ar@/_1pc/@{.>}[rd] && \\
 &&&&\bz\ar[r]&*++[F]{\mbox{Judy}}\ar@{->}[d]\\
&&&&&\bm=?}
}
\end{displaymath}\caption{The Wire-Tap Channel}\label{fig:WTC}
\end{subfigure}\\
\begin{subfigure}{\textwidth}
\begin{displaymath}
 \scalebox{1}{
\xymatrix{
&&&&& (\hat{\bm}_0,\hat{\bm}_1)=\Dec_Y(\by)\\
& &&&\by \ar[r]&
 *++[F]{\mbox{Bob}}\ar[u] &
 \\
(\bm_0,\bm_1)\in\cM_0\times\cM_1\ar[r] &
 *++[F]{\mbox{Alice}}\ar[r]& \bx  \ar@{.}[r]&*++[F]{p(y,z|x)}\ar@/^1pc/@{.>}[ru]\ar@/_1pc/@{.>}[rd] && \\
 &&&&\bz\ar[r]&*++[F]{\mbox{Judy}}\ar@{->}[d]\\
&&&&&\hat{\hat{\bm}}_0=\Dec_Z(\bz), \bm_1=?}
}
\end{displaymath}\caption{Broadcast Channel with Confidential Messages}\vspace{2em}\label{fig:BCC}
\end{subfigure}

\caption{In the Wire-Tap Channel problem (first introduced by~\cite{Wyner:75} and explored further in~\cite{CsiszarK:78}), the goal for Alice is to transmit a confidential message $\bm$ to the legitimate receiver Bob while ensuring that the ``leakage'' to the eavesdropper Judy (measured through the rate of equivocation) is smaller than a threshold. The capacity region for this problem (see Definition~\ref{def:REQ} exhibits a tradeoff between the message rate $R$ and the equivocation rate $R_e$. The Broadcast Channel with Confidential messages setup (introduced by~\cite{CsiszarK:78}) generalizes the Wire-Tap Channel model to include a ``public'' message $\bm_0$ that is meant to be decoded by both Bob and Judy. Similarly to the Wire-Tap Channel, this setup also includes a confidential message $\bm_1$ that is meant to be decoded by only Bob while ensuring that the leakage to Judy is smaller than a threshold. In general, the capacity region for this setup exhibits a tradeoff between three parameters -- the rate of the public message $R_0$, the rate of the confidential message $R_1$, and the equivocation rate. In our discussion, we only consider a two-dimensional projection of this region (see Definition~\ref{def:RBCC}) to the set of $(R_0,R_1)$ pairs that ensure that the equivocation about the message $\bm_1$ is arbitrarily close to the entropy of $\bm_1$. The reader is referred to~\cite{BlochB:11} for an excellent introduction to these and other information-theoretic security problems.}\label{fig:WTCBCC}
\end{figure*}

\begin{definition}[Rate-Equivocation Region]\label{def:REQ} For a channel $\Prob_{Y,Z|X}$, the rate-equivocation region $\REQ$ is the set of all non-negative $(R,R_e)$ pairs such for any $\epsilon>0$ and large enough block-length $n$, there exists a code for the Wire-Tap Channel problem (Figure~\ref{fig:WTC}) when the message $|\cM|\geq 2^{nR}$, $\Probc_{\bM(\bm)}=1/|\cM|$ for each $\bm\in\cM$, $\Probc_{\bM,\bX,\bY}(\bm\neq\hat{\bm})<\epsilon$, and $\ent(\bM|\bZ)\geq nR_e$.
\end{definition}

\begin{definition}[Sum Capacity with Confidential and Public messages]\label{def:RBCC} For a channel $\Prob_{Y,Z|X}$, the sum capacity region with confidential and public messages $\RBCC$ is the set of all non-negative $(R,R_1)$ pairs with $R\geq R_1$ for which, given any $\epsilon>0$, for a large enough blocklength $n$, there exists a code for the Broadcast Channel with Confidential Messages setup (Figure~\ref{fig:BCC}) with $|\cM_0|\geq 2^{n(R-R_1)}$, $|\cM_1|\geq 2^{nR_1}$, $\Probc_{\bM_0,\bM_1}(\bm_0,\bm_1)=1/{|\cM_0| |\cM_1|}$ for each $(\bm_0,\bm_1)\in\cM_0\times\cM_1$,  $\Probc_{\bM_0,\bM_1,\bX,\bY,\bZ}\left((\hat{\bM}_0,\hat{\hat{\bM}}_0,\hat{\bM}_1)\neq(\bM_0,\bM_0,\bM_1)\right)<\epsilon$ and $\ent(\bM_1|\bZ)\geq n R_1-\epsilon$. 
\end{definition}

 We note that in the Message Deniability setting, the existence of $(\delta,D)$-plausibly deniable faking procedure implies that the equivocation of $\bM$ given $\bZ$ is no smaller than $D-O(\sqrt{\delta})$.
 \begin{proposition} \label{prop:REQ} Let $\Fak{\bM}$ be $(\delta,D)$-plausibly deniable for $\bM$ given observation $\bZ$ and satisfy $\Fak{\bM}-\bM-\bZ$. Then, there exists $\mu$ depending only on $\Prob_{Z|X}$ such that $$\ent(\bM|\bZ)\geq nD-n\mu\sqrt{\delta}-2\delta.$$
\end{proposition}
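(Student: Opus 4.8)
The plan is to pass from $\ent(\bM|\bZ)$ to the quantity $\ent(\Fak{\bM}|\bM)$, which equals $nD$ in the message‑deniability setting, incurring only the controlled error terms supplied by Lemma~\ref{lem:convmessage}. The starting point is the exact identity
\[
\ent(\bM|\bZ) = \ent(\Fak{\bM}|\bZ) + \bigl[\ent(\bM) - \ent(\Fak{\bM})\bigr] - \MI(\bM;\bZ|\Fak{\bM}).
\]
To obtain it, I would write each conditional entropy as an unconditional entropy minus a mutual information and invoke the chain rule $\MI(\bM,\Fak{\bM};\bZ) = \MI(\Fak{\bM};\bZ) + \MI(\bM;\bZ|\Fak{\bM}) = \MI(\bM;\bZ) + \MI(\Fak{\bM};\bZ|\bM)$; the Markov chain $\Fak{\bM}-\bM-\bZ$ kills the last term, giving $\MI(\bM;\bZ) = \MI(\Fak{\bM};\bZ) + \MI(\bM;\bZ|\Fak{\bM})$, and substituting yields the displayed identity.

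Next I would lower bound the first term on the right. Since conditioning cannot increase entropy and, again by $\Fak{\bM}-\bM-\bZ$, the variables $\Fak{\bM}$ and $\bZ$ are conditionally independent given $\bM$, we have $\ent(\Fak{\bM}|\bZ) \ge \ent(\Fak{\bM}|\bZ,\bM) = \ent(\Fak{\bM}|\bM)$. In the message‑deniability problem $\Msg$ is the identity map, so condition (ii) of $(\delta,D)$‑plausible deniability reads exactly $\tfrac1n\ent(\Fak{\bM}|\bM) = D$, whence $\ent(\Fak{\bM}|\bZ) \ge nD$. For the bracketed term, because $\bM$ is uniform on $\cM$ while $\Fak{\bM}$ also takes values in $\cM$ we have $\ent(\Fak{\bM}) \le \log|\cM| = \ent(\bM)$, so $\ent(\bM) - \ent(\Fak{\bM}) \ge 0$ (alternatively one may simply invoke \eqref{eq:hmhmf}). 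Finally, \eqref{eq:mzgivenmf} of Lemma~\ref{lem:convmessage} gives $\MI(\bM;\bZ|\Fak{\bM}) \le \delta + n\lambda\sqrt{\delta}$ for the constant $\lambda$ of that lemma.

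Substituting these three estimates into the identity yields $\ent(\bM|\bZ) \ge nD - \delta - n\lambda\sqrt{\delta} \ge nD - n\mu\sqrt{\delta} - 2\delta$, so the proposition holds with $\mu = \lambda$ (and since for any reliable code the rate is controlled by a channel quantity, $\mu$ may be taken to depend only on $\Prob_{Z|X}$). I do not expect a genuine obstacle here: the whole argument is a few lines of entropy bookkeeping anchored on the Markov chain $\Fak{\bM}-\bM-\bZ$, with all the analytic content (Pinsker's inequality, uniformity of $\bM$) already packaged inside Lemma~\ref{lem:convmessage}. The only point needing care is reading condition (ii) with $\Msg$ equal to the identity, as is appropriate for message deniability; if one instead wanted the analogous statement in another setting, the term $\ent(\Msg(\Fak{\bM})|\bM)$ would have to be handled via the corresponding $\Msg$.
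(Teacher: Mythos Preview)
Your proof is correct and follows essentially the same route as the paper's: both reduce the claim to Lemma~\ref{lem:convmessage} and the Markov chain $\Fak{\bM}-\bM-\bZ$ via a few lines of entropy bookkeeping. The only difference is organizational --- the paper passes through $\ent(\bM|\Fak{\bM})$ (using $\ent(\bM|\bZ,\Fak{\bM})\le\ent(\bM|\bZ)$) and then invokes \eqref{eq:hmhmf} for the term $\ent(\bM)-\ent(\Fak{\bM})$, whereas you pass through $\ent(\Fak{\bM}|\bZ)$ and bound $\ent(\bM)-\ent(\Fak{\bM})\ge 0$ directly from uniformity of $\bM$; your version therefore yields the slightly sharper $\ent(\bM|\bZ)\ge nD-\delta-n\lambda\sqrt{\delta}$, so $\mu=\lambda$ suffices rather than $\mu=2\lambda$.
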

\begin{proof}The above proposition is a direct consequence of Lemma~\ref{lem:convmessage}. Specifically, note that there exists $\lambda=\lambda(\Prob_{Z|X})$ such that
\begin{align*}
\ent(\bM|\bZ)&\geq\ent(\bM|\bZ)+\MI(\bM;\bZ|\Fak{\bM})-\delta-n\lambda\sqrt{\delta}\\
&= \ent(\bM|\bZ)+\ent(\bM|\Fak{\bM})-\ent(\bM|\bZ,\Fak{\bM})-\delta-n\lambda\sqrt{\delta}\\
&\geq  \ent(\bM|\Fak{\bM}) -\delta-n\lambda\sqrt{\delta}\\
& = \ent(\Fak{\bM}|\bM)+ \ent(\bM)-\ent(\Fak{\bM})-\delta-n\lambda\sqrt{\delta}\\
&\geq nD -2\delta -2n\lambda\sqrt{\delta}.
\end{align*}	\end{proof}

The above proposition leads to the following corollary.
\begin{corollary}\label{cor:inclusion}$\RBCC\subseteq\cR_{\bm}\subseteq\REQ$. 
\end{corollary}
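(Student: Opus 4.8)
The plan is to establish the two inclusions $\RBCC\subseteq\cR_{\bm}$ and $\cR_{\bm}\subseteq\REQ$ separately, each by comparing the single-letter descriptions of the regions involved (using the characterisation of $\cR_{\bm}$ from Theorem~\ref{thm:message}, together with the standard single-letter characterisations of $\REQ$ and $\RBCC$ due to Wyner and Csisz\'ar--K\"orner). For the second inclusion $\cR_{\bm}\subseteq\REQ$, I would take an achievable $(R,D)\in\cR_{\bm}$ realised by a code and a $(\delta,D')$-plausibly deniable faking procedure with $D'\ge D$; Proposition~\ref{prop:REQ} then gives $\ent(\bM|\bZ)\ge nD'-n\mu\sqrt\delta-2\delta\ge nD-o(n)$, while reliability gives the message rate $R$ with vanishing error. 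Thus the very same code witnesses that $(R,D)$ (or a slightly perturbed pair) lies in $\REQ$; letting $\epsilon,\delta\to 0$ and taking closures finishes this direction. The point here is simply that a plausibly deniable scheme is in particular a good wiretap code with equivocation at least the rate of deniability.

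For the first inclusion $\RBCC\subseteq\cR_{\bm}$, the cleanest route is single-letter: recall that (a two-dimensional projection of) the Csisz\'ar--K\"orner region $\RBCC$ consists of pairs $(R,R_1)$ with $R\ge R_1$ expressible, for some $V-U-X-(Y,Z)$, as $R_1\le \MI(U;Y|V)-\MI(U;Z|V)$ and $R-R_1\le \min\{\MI(V;Y),\MI(V;Z)\}$ (the public-message rate being decodable by both Bob and Judy), with the confidential-message equivocation essentially equal to $R_1$. Given such a $(R,R_1)$, set $D=R_1$. Then $\MI(Y;V)\ge R-R_1$, so $\MI(Y;V)+\MI(U;Y|V)-\MI(U;Z|V)\ge (R-R_1)+R_1=R$, and $D=R_1\le\MI(U;Y|V)-\MI(U;Z|V)$ and $D=R_1\le R$; hence $(R,D)=(R,R_1)$ satisfies the defining inequalities of $\cR_{\bm}$ in Theorem~\ref{thm:message} with the \emph{same} auxiliaries $(U,V)$. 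This shows $\RBCC\subseteq\cR_{\bm}$ at the level of the rate regions. Alternatively, and perhaps more in the spirit of the paper, one can argue operationally: a code for the broadcast channel with confidential and public messages, reinterpreted by letting the faking procedure resample the confidential part $\bm_1$ from its (uniform) distribution and keep the public part $\bm_0$, is exactly the reduction already used in the achievability proof of Theorem~\ref{thm:message} (the confidential message plays the role of $\bs$ and the public message the role of $\bt$), and the strong-secrecy/divergence guarantee translates into the plausibility bound; this makes the inclusion transparent without re-deriving single-letter forms.

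I would present the single-letter argument as the main proof and mention the operational reduction as a remark. The step I expect to require the most care is lining up the \emph{definitions} precisely: $\RBCC$ as defined here (Definition~\ref{def:RBCC}) is the set of $(R,R_1)$ with $R\ge R_1$ for which there is a code with a \emph{public} message $\bm_0$ of rate $R-R_1$ decodable by both Bob and Judy and a confidential message $\bm_1$ of rate $R_1$ with $\ent(\bM_1|\bZ)\ge nR_1-\epsilon$; I must check that its single-letter inner bound really does force $\MI(V;Y)\ge R-R_1$ (not merely some weaker quantity) so that the substitution into the $\cR_{\bm}$ inequalities goes through, and that the closure operations and the $\epsilon,\delta\to 0$ limits are handled uniformly in both inclusions. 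Modulo these bookkeeping points, both inclusions reduce to the elementary inequality manipulations sketched above together with an appeal to Proposition~\ref{prop:REQ} for the upper inclusion.
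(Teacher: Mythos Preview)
Your proposal is correct and follows essentially the same approach as the paper: the first inclusion is obtained by comparing the single-letter description of $\RBCC$ from \cite{CsiszarK:78} with that of $\cR_{\bm}$ in Theorem~\ref{thm:message} (using $\min\{\MI(V;Y),\MI(V;Z)\}\le \MI(V;Y)$), and the second inclusion is obtained by applying Proposition~\ref{prop:REQ} to a sequence of codes with $\delta\to 0$.
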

\begin{proof}
	As proved in~\cite{CsiszarK:78}, $\RBCC$ is the set of all $(R,R_1)$ pairs such that there exist random variables $V$ and $U$ satisfying $V-U-X-(Y,Z)$ and 
	\begin{align*}
	0&\leq R\leq \min\{\MI(V;Y),\MI(V;Z)\}+\MI(U;Y|V)-\MI(U;Z|V)\\
	0&\leq R_0\leq \min\{\MI(V;Y),\MI(V;Z)\}	
	\end{align*}
 The first inclusion, $\RBCC\subseteq\cR_{\bm}$, follows directly by comparing our characterization of $\cR_{\bm}$ with the above capacity expression. Note that in the setting of~\cite{CsiszarK:78}, the public message of rate $R_0$ is intended to be decoded by both the receivers, while in our achievability proof of Theorem~\ref{thm:message}, we require that it be decoded only by Bob. This allows us to operate with public message rates as high as $\MI(V;Y)$, rather than $\min\set{\MI(V;Y),\MI(V;Z)}$ as in~\cite{CsiszarK:78}.
	Next, applying Proposition~\ref{prop:REQ} to a sequence of codes with $\delta$ approaching zero, we obtain that every $(R,D)\in\cR_{\bm}$ also lies in $\REQ$. \end{proof}

The following example illustrates that both inclusions in the above corollary may be strict.
\begin{example}[Binary Erasure Eavesdropper]\label{ex:BEC} Consider the example of Figure~\ref{fig:examplebec}. Let $\cX=\cY=\set{0,1}$, $\cZ=\set{0,\perp,1}$, and $$\Prob_{YZ|X}(yz|x)=\left\{\begin{array}{ll}1-p& \mbox{if } (y,z)=(x,x),\\ p &\mbox{if } (y,z)=(x,\perp),\mbox{ and}\\0&\mbox{otherwise.}\end{array}\right.$$ 
As this is a degraded channel, it suffices to let the variable $U$ in Theorem~\ref{thm:message} be equal to $X$. Further, using the fact that $Y=X$ and $H(X|Z,V)=pH(X|V)$ (as the channel from $X$ to $Z$ is a Binary Erasure Channel with erasure probability $p$), we obtain the following characterisation for $\cR_\bm$. $\cR_\bm$ is the set of $(R,D)$ pairs such that there exists a random variable $V$ with $V-X-Z$, 
\begin{align*}
&0\leq R\leq \ent(X)-(1-p)\ent(X|V),\mbox{ and}\\
&0\leq D\leq \min\{R, p\ent(X|V)\}.
\end{align*} 
Let $\alpha_{X,V}=H(X|V)/H(X)$. Thus, $\ent(X)-(1-p)\ent(X|V)=(1-\alpha_{X,V}(1-p))\ent(X)$, and $p\ent(X)=p\alpha_{X,V} \ent(X)$. Note that $\alpha_{X,V}$  may take any value in the interval $[0,1]$ and the maximum value of $\ent(X)$ equals $1$. Thus, $\cR_\bm$ consists of $(R,D)$ pairs such that for some $\alpha\in[0,1]$, $0\leq R\leq (1-\alpha(1-p))$ and $0\leq D\leq \min\{R,\alpha p\}$. Simplifying further, we conclude that the region $\cR_\bm$ consists of $(R,D)$ pairs such that  
\begin{align*}
&0\leq R\leq 1\\
&0\leq D\leq \min\set{\frac{p(1-R)}{1-p},R}.
\end{align*}
	We next compare this region with the regions $\RBCC$ and $\REQ$. 	For the channel considered in this example, the Rate-Equivocation region consists of all $(R,R_e)$ pairs satisfying
	\begin{align*}
&0\leq R\leq 1\\
&0\leq R_e\leq \min\{p,R\}.\\
\end{align*}
Next, the region $\RBCC$ consists of all $(R,R_1)$ pairs satisfying 
	\begin{align*}
&0\leq R\leq 1\\
&0\leq R_1\leq \min\set{\frac{p(1-p-R)}{1-2p},R}.\\
\end{align*}
Comparing the above regions, it is evident that the inclusion relation in Corollary~\ref{cor:inclusion} may be strict. The  plot shown in Figure~\ref{fig:examplecomparision} compares these  regions.
\begin{figure}[h]
\begin{center}
\scalebox{1}{\begin{tikzpicture}[baseline, scale=0.75]
\fill[draw opacity=0.4, pattern color = myorange!50, pattern=north east lines] (0,0) -- (4,4) -- (10,4) -- (10,0) -- cycle;
\fill[fill opacity=0.75, white] (0,0) -- (4,4) -- (10,0) -- cycle;
\fill[draw opacity=0.9, pattern color = mygreen!80, pattern=north west lines] (0,0) -- (4,4) -- (10,0) -- cycle;
\fill[fill opacity=.7,myblue!80] (0,0) -- (4,4) -- (6,0) -- cycle;

  \begin{scope}
\draw[->, thick] (0,0) -- (11,0) node[below right] {{$R$}};
\draw[->, thick] (0,0) -- (0,5) node[above left] {{ $D$, $R_1$, $R_e$}};

\draw[dashed] (4,0) -- (4,4) node[above] {};
\draw[dashed] (0,4) -- (4,4) node[above] {};
\draw[red,thick,-] (10.01,.01) -- (9.99,-.01) ;
\draw[red,thick,-] (9.99,.01) -- (10.01,-.01)  node[black,below] {{ $\ \ 1$}};
\draw[red,thick,-] (4.01,.01) -- (3.99,-.01) ;
\draw[red,thick,-] (3.99,.01) -- (4.01,-.01)  node[black,below] {{ $\ \ p$}};
\draw[red,thick,-] (.01,4.01) -- (-.01,3.99) ;
\draw[red,thick,-] (.01,3.99) -- (-.01,4.01)  node[black,left] {{ $\ \ p$}};
\draw[red,thick,-] (6.01,.01) -- (5.99,-.01) ;
\draw[red,thick,-] (5.99,.01) -- (6.01,-.01)  node[black,below] {{ $\ \ 1-p$}};

\draw[-, line width = 0.4mm, myorange] (0,0) -- (4,4);
\draw[-, line width = 0.4mm, myorange] (4,4) -- (10,4);
\draw[-, line width = 0.4mm, myorange] (10,4) -- (10,0);
\draw[-, line width = 0.4mm, mygreen] (0,0) -- (4,4);
\draw[-, line width = 0.4mm, mygreen] (4,4) -- (10,0);

\draw[-, line width = 0.2mm, myblue] (0,0) -- (4,4);
\draw[-, line width = 0.2mm, myblue] (4,4) -- (6,0);

\draw[line width =0.2mm, <-,myorange] (9.5,3.5) to[out=45,in=180]   (11,3) node[right] {$\REQ$};
\draw[line width =0.2mm, <-,myblue] (3,2.5) to[out=135,in=0]   (2.5,4.5) node[left] {$\RBCC$};
\draw[line width =0.2mm, <-,mygreen] (5,2.5) to[out=135,in=0]   (5,5) node[left] {$\cR_{\bm}$};

\end{scope}
\end{tikzpicture}}
\end{center}
\caption{Comparision of $\cR_{\bm}$ with $\RBCC$ and $\REQ$ in Example~\ref{ex:BEC}.}\label{fig:examplecomparision}
\end{figure} 
\end{example}
\subsubsection{Rate of deniability as the Equivocation rate} Even though we define the rate of deniability as an operational property of the faking procedure, surprisingly, it also has a rough interpretation as the rate of equivocation given  the eavesdropper channel output as well as the fake message. This is especially interesting in light of Example~\ref{ex:BEC} that shows that the rate of deniability may be strictly smaller than the equivocation rate at the eavesdropper in the Wire-Tap Channel setting. The following proposition states this property formally.
\begin{proposition}\label{prop:msgDEQ} Let $\Fak{\bM}$ be $(\delta,D)$-plausibly deniable for $\bM$ given observation $\bZ$ and satisfy $\Fak{\bM}-\bM-\bZ$. Then, there exists $\mu\geq 0$ depending only on $\Prob_{YZ|X}$ such that $$nD-\delta-n\mu\sqrt{\delta}\leq\ent(\bM|\Fak{\bM},\bZ)\leq nD+\delta+n\mu\sqrt{\delta}.$$
\end{proposition}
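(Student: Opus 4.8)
The plan is to obtain this proposition as a quantitative corollary of Lemma~\ref{lem:convmessage}; the two estimates proved there---on $\MI(\bM;\bZ|\Fak{\bM})$ and on $\lvert\ent(\bM)-\ent(\Fak{\bM})\rvert$---already do essentially all the work. The starting point is the chain-rule identity $\ent(\bM|\Fak{\bM},\bZ)=\ent(\bM|\Fak{\bM})-\MI(\bM;\bZ|\Fak{\bM})$ together with $\ent(\bM|\Fak{\bM})=\ent(\Fak{\bM}|\bM)+\ent(\bM)-\ent(\Fak{\bM})$, which is just the rearrangement $\MI(\bM;\Fak{\bM})=\ent(\bM)-\ent(\bM|\Fak{\bM})=\ent(\Fak{\bM})-\ent(\Fak{\bM}|\bM)$. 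In the message deniability setting $\Msg(\cdot)$ is the identity map, so condition~(ii) of $(\delta,D)$-plausible deniability reads precisely $\ent(\Fak{\bM}|\bM)=nD$; hence
\[
\ent(\bM|\Fak{\bM},\bZ)=nD+\bigl(\ent(\bM)-\ent(\Fak{\bM})\bigr)-\MI(\bM;\bZ|\Fak{\bM}).
\]

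Next I would apply Lemma~\ref{lem:convmessage}, whose hypotheses ($\Fak{\bM}$ is $(\delta,D)$-plausibly deniable for $\bM$ given $\bZ$, and $\Fak{\bM}-\bM-\bZ$) coincide with those of the present proposition. It produces a constant $\lambda\ge 0$, depending only on $\Prob_{Z|X}$ (a marginal of $\Prob_{YZ|X}$) and $|\cM|$, with $0\le\MI(\bM;\bZ|\Fak{\bM})\le\delta+n\lambda\sqrt{\delta}$ and $\lvert\ent(\bM)-\ent(\Fak{\bM})\rvert\le\delta+n\lambda\sqrt{\delta}$. Substituting into the displayed identity, the $\MI$-term lies in $[0,\delta+n\lambda\sqrt{\delta}]$ and the entropy-difference term lies in $[-(\delta+n\lambda\sqrt{\delta}),\,\delta+n\lambda\sqrt{\delta}]$, so $nD-2(\delta+n\lambda\sqrt{\delta})\le\ent(\bM|\Fak{\bM},\bZ)\le nD+(\delta+n\lambda\sqrt{\delta})$.

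Finally I would fold the constants into a single $\mu$. For $\delta\le 1$ and $n\ge 1$ one has $\delta\le\sqrt{\delta}\le n\sqrt{\delta}$, hence $2\delta+2n\lambda\sqrt{\delta}\le\delta+n(1+2\lambda)\sqrt{\delta}$; taking $\mu:=1+2\lambda$ therefore yields both $nD-\delta-n\mu\sqrt{\delta}\le\ent(\bM|\Fak{\bM},\bZ)$ and $\ent(\bM|\Fak{\bM},\bZ)\le nD+\delta+n\mu\sqrt{\delta}$, while for $\delta>1$ both asserted inequalities hold trivially after possibly enlarging $\mu$, using $0\le\ent(\bM|\Fak{\bM},\bZ)\le\ent(\bM)$. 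There is no genuine obstacle here: the only points requiring a little care are this constant bookkeeping and the remark---as already used in Proposition~\ref{prop:REQ}---that the $|\cM|$-dependence of $\lambda$ is harmless since it enters only through the (finite) channel input alphabet. It is worth noting the contrast with Proposition~\ref{prop:REQ}, where one gets only the one-sided bound $\ent(\bM|\bZ)\gtrsim nD$; conditioning additionally on $\Fak{\bM}$ pins the residual uncertainty down to $nD$ from both sides, which is exactly what makes the rate of deniability behave like an equivocation rate in this refined sense.
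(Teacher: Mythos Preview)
Your proof is correct and follows essentially the same route as the paper's: both start from the identity $\ent(\bM|\Fak{\bM},\bZ)=nD+\bigl(\ent(\bM)-\ent(\Fak{\bM})\bigr)-\MI(\bM;\bZ|\Fak{\bM})$ and then invoke the two bounds of Lemma~\ref{lem:convmessage} together with nonnegativity of mutual information. You supply more detail than the paper on absorbing the factor of~$2$ into a single constant~$\mu$ and on why the $|\cM|$-dependence of $\lambda$ is innocuous, but the underlying argument is identical.
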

\begin{proof}
Note that \begin{align*}
	\ent(\bM|\Fak{\bM},\bZ)&=\ent(\bM|\Fak{\bM})-\MI(\bM;\bZ|\Fak{\bM})\\
	&= \ent(\Fak{\bM}|\bM)-\ent(\Fak{\bM})+\ent(\bM)-\MI(\bM;\bZ|\Fak{\bM})\\
	& = nD-\ent(\Fak{\bM})+\ent(\bM)-\MI(\bM;\bZ|\Fak{\bM}).
\end{align*}
Applying Lemma~\ref{lem:convmessage} and the non-negativity of mutual information to the terms on the left hand side above gives the claimed result.\end{proof}

\section{Transmitter and Receiver deniability}\label{sec:codeword}

Before formally proving Theorems~\ref{thm:transmitter} and~\ref{thm:rxachievability}, we introduce the notion of \emph{zero information variables} that is central to our discussion of the achievability proofs presented in this section. 
\subsection{Zero Information Variables}\label{sec:zeroinformation}For a random variable $W\sim \Prob_W$ and a channel $\Prob_{Z|W}$, we define the following relation: for $w_1,w_2\in{\cW}$, we say that $w_1\sim w_2$ if $\Prob_{Z|W}(z|w_1)=\Prob_{Z|W}(z|w_2)$, for all $z\in {\cZ}$. It is evident that this is an equivalence relation. Let $\cU_0$ represent the set of equivalence classes of this relation. We define the {\em zero-information} random variable $U_0$ of $W$ w.r.t. $\Prob_{Z|W}$ as a random variable taking values in ${\cU}_0$ and jointly distributed with $W$ and $Z$ such that $W\in U_0$ with probability $1$. For each $w\in\cW$, we will call the corresponding $u_0$ its {\em zero-information symbol}. 

Note that, $U_0$ is a function of $W$.  Intuitively, the zero information symbol $u_0$ of $w$ is the largest subset of $\cW$ such that each $w'\in u_o$ is statistically indistinguishable from $w$ given any $z\in\cZ$ with $\Prob_{Z|W}(z|w)>0$.  Figure~\ref{fig:exampleziv} shows an example of a zero-information variable.  Note that $U_0-W-Z$ (since $U_0$ is a function of $W$), $W-U_0-Z$ (by definition), and $\Prob_{Z|W}(z|w)=\Prob_{Z|W,U_0}(z|w,u_o)=\Prob_{Z|U_0}(z|u_o)$ if $u_o$ is the zero-information symbol of $w$. 

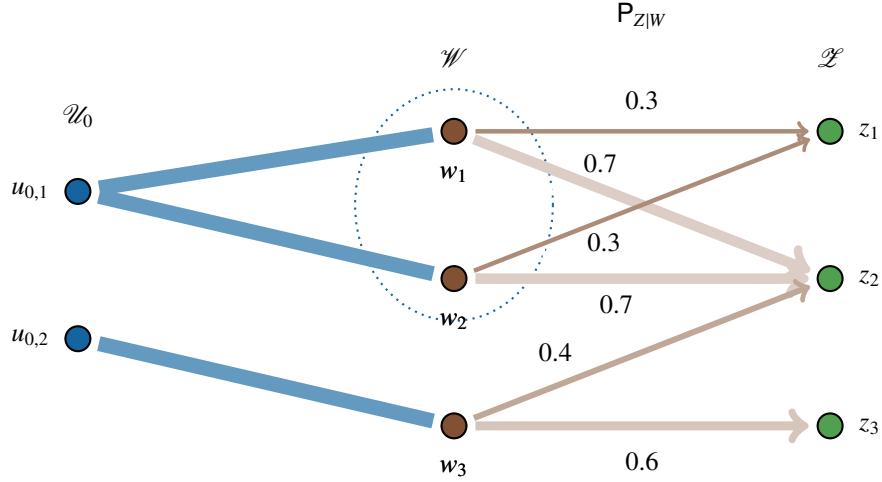
\begin{figure}[h]
\renewcommand{\labelstyle}{\textstyle}
\begin{center}
\renewcommand{\labelstyle}{\textstyle}
 \scalebox{1}{
 \begin{tikzpicture}[thick,
  every node/.style={draw,circle},
  fsnode/.style={fill=myblue},
  ssnode/.style={fill=mygreen},
  tsnode/.style={fill=myorange},
  every fit/.style={ellipse,draw,inner sep=-2pt,text width=2cm},
  ->,shorten >= 3pt,shorten <= 3pt
]

\begin{scope}[yshift = -8mm, start chain=going below,node distance = 16mm]
\foreach \i in {1,2}
  \node[fsnode,on chain] (u\i) [label=left: $u_{0,\i}$] {};
\end{scope}

\begin{scope}[xshift = 5cm, start chain=going below,node distance = 16mm]
\foreach \i in {1,2,3}
  \node[fsnode, white,on chain] (f\i) [label=below: $w_\i$] {};
\end{scope}

\begin{scope}[xshift=10 cm,yshift=0cm,start chain=going below,node distance=16mm]
\foreach \i in {1,2,3}
  \node[ssnode,on chain] (s\i) [label=right: $z_\i$] {};
\end{scope}

\node [white,fit=(u1) (u2), label=above:$\cU_0$] {};
\node [myblue, dotted, fit=(f1) (f2),label=above:$\cW$] {};
\node [white,fit=(s1) (s2), label=above:$\cZ$] {};
\node [white,fit=(f1) (s3), label=above:$\Prob_{Z|W}$] {};


\draw[color=myblue!66, line width = 2mm,-] (u1) -- (f1);
\draw[color=myblue!66, line width = 2mm,-] (u1) -- (f2);
\draw[color=myblue!66, line width = 2mm,-] (u2) -- (f3);

\begin{scope}[xshift = 5cm, start chain=going below,node distance = 16mm]
\foreach \i in {1,2,3}
  \node[tsnode,on chain] [label=below: $w_\i$] {};
\end{scope}

\draw[color=myorange!66, line width = 0.6mm] (f1) -- (s1) node [color = black, line width = 0mm, draw = none, midway, above] {$0.3$};
\draw[color=myorange!28, line width = 1.4mm] (f1) -- (s2) node [color = black, draw = none, midway, above  left =3mm] {$0.7$};
\draw[color=myorange!66, line width = 0.6mm]  (f2) -- (s1) node [color = black, line width = 0mm, draw = none, midway, below left = 3mm] {$0.3$};
\draw[color=myorange!28, line width = 1.4mm]  (f2) -- (s2) node [color = black, draw = none, midway, below left] {$0.7$};
\draw[color=myorange!50, line width = 0.8mm] (f3) -- (s2) node [color = black, draw = none, midway, left=7mm] {$0.4$};
\draw[color=myorange!33, line width = 1.2mm]  (f3) -- (s3) node [color = black, draw = none, midway, below] {$0.6$};
\end{tikzpicture}
}
\end{center}
\caption{Let $W$ and $Z$ be random variables distributed on $\cW=\set{w_1,w_2,w_3}$ and $\cZ=\set{z_1,z_2,z_3}$ respectively with $\Prob_{Z|W}$ specified according to the edge labels in the above figure. Notice that $w_1$ and $w_2$ are indistinguishable to an observer who has access to only $Z$ as $\Prob_{Z|W}(z|w_1)=\Prob_{Z|W}(z|w_2)$ for every $z\in\cZ$. Hence, the zero-information variable for the distribution $\Prob_{Z|W}$ takes the values $u_{0,1}\equiv\{w_1,w_2\}$ and $u_{0,2}\equiv\{w_3\}$.}\label{fig:exampleziv}
\end{figure}

In our achievability proofs for Transmitter and Receiver deniability, the use of zero-information variables considerably simplifies the proof. In particular, we argue that the rate regions claimed achievable in Theorems~\ref{thm:transmitter} and~\ref{thm:rxachievability} it suffices to consider zero information variables instead of the general class of auxiliary variables presented in the theorem statements.  The following lemma shows that such a choice does not lead to any loss of optimality.
\begin{lemma} \label{lem:zeroinformationvariable} Suppose $W-U-Z$ and $U-W-(V,Z)$ are Markov chains. Then $\MI(W;V|U) \leq \MI(W;V|U_0)$, 
where $U_0$ is the zero-information random variable of $W$ w.r.t. $\Prob_{Z|W}$.
\end{lemma}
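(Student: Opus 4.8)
The plan is to first understand what the hypotheses give us structurally, then exhibit a coupling / data-processing inequality that produces the claimed bound. The Markov chains $W - U - Z$ and $U - W - (V,Z)$ together say that $U$ is ``at least as fine'' as the channel-distinguishability of $W$, yet $U$ carries no extra information about $(V,Z)$ beyond $W$. Recall from the definition of the zero-information variable $U_0$ that $U_0$ is a \emph{function} of $W$, that $W - U_0 - Z$ holds, and that $U_0$ is the \emph{coarsest} variable with $W - U_0 - Z$ in the sense that the equivalence classes it induces are maximal: any $w_1, w_2$ with $\Prob_{Z|W}(\cdot|w_1) = \Prob_{Z|W}(\cdot|w_2)$ lie in the same class. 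The first step I would carry out is to show that $U_0$ is (essentially) a function of $U$ on the support of the joint distribution. Concretely: if $u$ is a value of $U$ and $w_1, w_2$ are both in the support of $\Prob_{W|U=u}$, then by $W - U - Z$ we have $\Prob_{Z|W=w_1}(\cdot) = \Prob_{Z|U=u}(\cdot) = \Prob_{Z|W=w_2}(\cdot)$, hence $w_1 \sim w_2$ and they share a zero-information symbol; so $U_0$ is constant on the support of $\Prob_{W|U=u}$, i.e.\ $U_0 = g(U)$ (a.s.) for some deterministic $g$.

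\textbf{Main step.} Given that $U_0 = g(U)$, the inequality $\MI(W;V|U) \leq \MI(W;V|U_0)$ is exactly a statement that conditioning on the \emph{coarser} variable does not decrease the conditional mutual information between $W$ and $V$ — but this requires the Markov structure $U - W - V$ (the $(V,Z)$-part of the second hypothesis restricted to $V$), which ensures that $U$ is ``irrelevant'' to the $W$–$V$ dependence. The clean way to see it: write $\MI(W;V|U_0) = \MI(W;V,U|U_0) - \MI(W;U|V,U_0)$ by the chain rule. Since $U_0 = g(U)$, $\MI(W;V,U|U_0) = \MI(W;V|U,U_0) + \MI(W;U|U_0) = \MI(W;V|U) + \MI(W;U|U_0)$, where the last substitution uses that conditioning on $U$ subsumes conditioning on $U_0$. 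On the other hand, the Markov chain $U - W - V$ (equivalently $U - W - (V,Z)$ marginalised) gives $\MI(W;U|V,U_0) \leq \MI(W,V;U|U_0)$ and $\MI(V;U|W,U_0)=0$, so $\MI(W;U|V,U_0) = \MI(W,V;U|U_0) = \MI(W;U|U_0) + \MI(V;U|W,U_0) = \MI(W;U|U_0)$. Substituting,
\begin{align*}
\MI(W;V|U_0) &= \MI(W;V|U) + \MI(W;U|U_0) - \MI(W;U|U_0) = \MI(W;V|U),
\end{align*}
which in fact gives equality, and in particular the claimed $\leq$. I would double-check whether the intended statement is really equality or whether some degeneracy (e.g.\ $U_0 = g(U)$ only on the support, or ties in the zero-information partition) forces the inequality to be the safe formulation; the argument above suggests equality holds, and the lemma states only ``$\leq$'' to be conservative.

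\textbf{Anticipated obstacle.} The one genuinely delicate point is establishing $U_0 = g(U)$ rigorously from $W - U - Z$: we need that $\Prob_{Z|U=u}$ is well-defined and equals $\Prob_{Z|W=w}$ for every $w$ in the support of $\Prob_{W|U=u}$, which is immediate from the Markov chain but should be stated carefully on the support of the joint law (values of $U$ with zero probability, and values of $w$ with $\Prob_{W|U}(w|u)=0$, can be discarded). Once that functional relationship is in hand, everything reduces to chain-rule bookkeeping for conditional mutual information plus the hypothesis $U - W - V$; no typicality or coding machinery is needed. I would present the proof in two short paragraphs: (1) $U_0$ is a function of $U$; (2) the chain-rule computation above yielding $\MI(W;V|U) = \MI(W;V|U_0)$, hence in particular $\leq$.
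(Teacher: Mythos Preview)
Your approach is essentially the paper's: first show $U_0$ is a function of $U$, then do a chain-rule computation using that $U_0$ is a function of both $U$ and $W$ together with $U-W-V$. The paper's version of the second step is slightly more streamlined:
\begin{align*}
\MI(W;V|U) = \MI(W;V|U,U_0) \leq \MI(U,W;V|U_0) = \MI(W;V|U_0) + \MI(U;V|W,U_0) = \MI(W;V|U_0),
\end{align*}
the last equality using $U-W-V$ and $U_0=U_0(W)$.

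Two corrections. First, in Step~1 you invoke only $W-U-Z$ to conclude $\Prob_{Z|W=w_1}=\Prob_{Z|U=u}$; in fact you need \emph{both} Markov chains, since $W-U-Z$ gives $\Prob_{Z|W,U}=\Prob_{Z|U}$ while $U-W-Z$ gives $\Prob_{Z|W,U}=\Prob_{Z|W}$, and it is the combination that yields $\Prob_{Z|W=w_i}=\Prob_{Z|U=u}$ for $w_i$ in the support of $\Prob_{W|U=u}$. The paper's proof makes this explicit.

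Second, and more substantively, your claim that the argument yields \emph{equality} is wrong. You correctly note $\MI(W;U|V,U_0)\leq\MI(W,V;U|U_0)$, but then write ``so $\MI(W;U|V,U_0)=\MI(W,V;U|U_0)$'', which does not follow: the gap is $\MI(V;U|U_0)$, and nothing in the hypotheses forces this to vanish. Keeping the inequality in that step gives exactly $\MI(W;V|U_0)\geq\MI(W;V|U)$, which is the lemma, and matches the paper's argument (whose single inequality has the same gap $\MI(U;V|U_0)$). So the lemma is genuinely an inequality, not a disguised equality.
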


\begin{proof}
We first show that the Markov chains $W-U-Z$ and $U-W-Z$ imply that $U_0$ must also be a function of $U$. To show this, it is enough to show that for $w_1,w_2\in \cW$ with $\Prob_W(w_1),\Prob_W(w_2)>0$, if there is a $z\in{\cZ}$ such that $\Prob_{Z|W}(z|w_1) \neq \Prob_{Z|W}(z|w_2)$, then for every $u\in{\cU}$ at least one of $\Prob_{U|W}(u|w_1)$ and $\Prob_{U|W}(u|w_2)$ must be zero. Suppose, to the contrary both
$\Prob_{U|W}(u|w_1),\Prob_{U|W}(u|w_2)>0$. Then
\begin{align*}
\Prob_{Z|W}(z|w_1) 
        &\overset{(a)}{=} \Prob_{Z|W,U}(z|w_1,u)\\
        &\overset{(b)}{=} \Prob_{Z|U}(z|u)\\
        &\overset{(c)}{=} \Prob_{Z|W,U}(z|w_2,u)\\
        &\overset{(d)}{=} \Prob_{Z|W}(z|w_2),
\end{align*}
where $(a)$ follows from the Markov chain $U-V-Z$ and the fact that $\Prob_{W}(w_1)\Prob_{U|W}(u|w_1)>0$; $(b)$ follows from the Markov chain $W-U-Z$; $(c)$ follows from the Markov chain $W-U-Z$ and the fact that $\Prob_{W}(w_2)\Prob_{U|W}(u|w_2)>0$; and $(d)$ follows from the Markov chain $U-W-Z$. But, this is a contradiction. 

Thus, $U_0$ is a function of $U$. From its definition, $U_0$ is a function of $W$. Hence,
\begin{align*}
\MI(W;V|U) &= \MI(W;V|U,U_0)\\
        &\leq \MI(U,W;V|U_0)\\
        &= \MI(W;V|U_0) + \MI(U;V|W,U_0)\\
        &\overset{(a)}{=} \MI(W;V|U_0) + \MI(U;V|W)\\
        &\overset{(b)}{=} \MI(W;V|U_0),
\end{align*}
where $(a)$ uses the fact that $U_0$ is a function of $W$ and $(b)$ follows from $U-W-V$ being a Markov chain.\end{proof}

\subsection{Transmitter Deniability}
We begin our proof for Theorem~\ref{thm:transmitter} by stating two lemmas that lead to our converse arguments.  The following lemma mirrors Lemma~\ref{lem:convmessage} from the message deniability setting and derives necessary conditions for any faking procedure to be plausible with respect to the eavesdropper's observation.  In particular, we show that for any plausibly deniable faking procedure, the true codeword and the eavesdropper observation must be nearly conditionally independent given the fake codeword. Further,  the joint distribution of the  true and fake codewords must be such that it allows exchanging $\bM$ for $\Fak{\bM}$ (and \emph{vice versa}) does not changes entropic terms involving these by at most $\delta$.
\begin{lemma}\label{lem:xzgivenf}
Let $\Fak{\bX}$ be $(\delta,D)$-plausibly deniable for $\bX$ given observation $\bZ$ and satisfy $\Fak{\bX}-\bX-\bZ$. Then, there exists a constant $\kappa$ depending only on $\Prob_{Z|X}$ such that 
\begin{align*}
&\MI(\bX;\bZ|\Fak{\bX})\leq n\kappa\sqrt{\delta},\\
&\left|\ent(\bX|\Fak{\bX})-\ent(\Fak{\bX}|\bX)\right|\leq n\kappa\sqrt{\delta},\\
&\left|\ent(\bX)-\ent(\Fak{\bX})\right|\leq n\kappa\sqrt{\delta}, \mbox{ and}\\
&\left|\ent(\bX|\Fak{\bX},\bM)-\ent(\Fak{\bX}|\bX,\Msg(\Fak{\bX}))\right|\leq n\kappa\sqrt{\delta}.	
\end{align*}
\end{lemma}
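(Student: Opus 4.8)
The plan is to mirror the structure of the proof of Lemma~\ref{lem:convmessage}, exploiting the two facts that (i) $\Fak{\bX}-\bX-\bZ$ is a Markov chain and (ii) the plausibility condition gives $\KL(\Probc_{\bZ,\Fak{\bX}}||\Probc_{\bZ,\bX})\le\delta$. For the first inequality, I would expand $\MI(\bX;\bZ|\Fak{\bX}) = \ent(\bZ|\Fak{\bX})-\ent(\bZ|\bX)$ in terms of the joint distributions $\Probc_{\bZ,\Fak{\bX}}$ and $\Probc_{\bZ,\bX}$, exactly as in~\eqref{eq:mzgivenf1}. The crucial simplification here relative to the message case is that $\Probc_{\bX}$ need \emph{not} be uniform, so instead of the term $\sum_{\bm}[\Probc_{\Fak{\bM}}(\bm)-\Probc_{\bM}(\bm)]\log\Probc_{\bM}(\bm)$ vanishing, I would combine it with $D(\Probc_{\Fak{\bX}}||\Probc_{\bX})$ to recognise the pair of terms as $-\ent(\Fak{\bX}) + \ent(\bX)$ plus a cross-entropy correction; but since $\Probc_{\Fak{\bX}}$ and $\Probc_{\bX}$ are marginals of a distribution whose joint with $\bZ$ is $\delta$-close in KL, Pinsker plus the bounded-alphabet bound $|\log\Probc_{\bX}(\bx)|\le n\log|\cX|$ controls this difference by $n\kappa'\sqrt{\delta}$. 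After that, the term $-\sum[\Probc_{\bZ,\Fak{\bX}}-\Probc_{\bZ,\bX}]\log\frac{1}{\Probc_{\bZ,\bX}}$ is bounded as in Lemma~\ref{lem:convmessage}: using the channel factorisation $\Probc_{\bZ,\bX}(\bz,\bx)=\Prob_{\bZ|\bX}(\bz|\bx)\Probc_{\bX}(\bx)$, Jensen's inequality, and the uniform lower bound $\min_{(z,x):\Prob_{Z|X}(z|x)>0}\Prob_{Z|X}(z|x)$, which yields a factor linear in $n$; Pinsker's inequality then converts the $\ell_1$ distance between $\Probc_{\bZ,\Fak{\bX}}$ and $\Probc_{\bZ,\bX}$ into $\sqrt{2\delta}$. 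Collecting constants into a single $\kappa=\kappa(\Prob_{Z|X})$ gives $\MI(\bX;\bZ|\Fak{\bX})\le n\kappa\sqrt{\delta}$.

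For the second and third inequalities I would argue more directly. Since $\Fak{\bX}-\bX-\bZ$ and (by the same Markov chain applied to $\bX-\Fak{\bX}$, which we do \emph{not} have) — instead, note $\ent(\bX|\Fak{\bX})-\ent(\Fak{\bX}|\bX) = \ent(\bX)-\ent(\Fak{\bX})$ is an identity, so the second and third inequalities are literally the same statement; proving $|\ent(\bX)-\ent(\Fak{\bX})|\le n\kappa\sqrt{\delta}$ suffices. This follows because $\Probc_{\bX}$ and $\Probc_{\Fak{\bX}}$ have $\ell_1$ distance at most $\sqrt{2\delta}$ (data processing for variational distance applied to the $\le\delta$ KL bound on the joints with $\bZ$, then marginalising), and the entropy of a distribution on $\cX^n$ is Lipschitz in $\ell_1$ distance with constant $O(n\log|\cX|)$ as long as the $\ell_1$ distance is not too large — precisely, $|\ent(P)-\ent(Q)| \le \|P-Q\|_1\log\frac{|\cX|^n}{\|P-Q\|_1}$, which for $\|P-Q\|_1\le\sqrt{2\delta}$ small is bounded by $n\sqrt{2\delta}\log|\cX| + \sqrt{2\delta}\log\frac{1}{\sqrt{2\delta}}$, absorbed into $n\kappa\sqrt{\delta}$ after possibly enlarging $\kappa$.

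The fourth inequality is the subtle one and I expect it to be the main obstacle, because it involves $\Msg(\Fak{\bX})$ and the true message $\bM$, which are deterministic functions of $\Fak{\bX}$ and of $\bX$ respectively (recall $\bM=\Msg(\bX)$ need not hold exactly — $\Msg(\bx)$ is the \emph{most likely} message given $\bX=\bx$, so there is a decoding-type discrepancy). The plan is to first observe that $\Msg$ is a fixed function, so $\ent(\bX|\Fak{\bX},\bM) = \ent(\bX|\Fak{\bX},\bM,\Msg(\Fak{\bX}))$ and similarly one can freely adjoin $\Msg(\bX)$ to the conditioning on the right-hand side; then the claim reduces to comparing $\ent(\bX,\Msg(\bX)\,|\,\Fak{\bX},\Msg(\Fak{\bX}))$ with $\ent(\Fak{\bX},\Msg(\Fak{\bX})\,|\,\bX,\Msg(\bX))$, i.e. swapping the roles of the pairs $(\bX,\Msg(\bX))$ and $(\Fak{\bX},\Msg(\Fak{\bX}))$ inside a conditional entropy. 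Using the chain rule, $\ent(\bX|\Fak{\bX},\bM)-\ent(\Fak{\bX}|\bX,\Msg(\Fak{\bX}))$ telescopes into a difference of the form $\big[\ent(\bX,\Fak{\bX})-\ent(\Fak{\bX},\Msg(\bX))\big]-\big[\ent(\Fak{\bX},\bX)-\ent(\bX,\Msg(\Fak{\bX}))\big] = \ent(\bX,\Msg(\Fak{\bX}))-\ent(\Fak{\bX},\Msg(\bX))$ up to the already-controlled terms from the second/third inequalities; and then I would bound $|\ent(\bX,\Msg(\Fak{\bX}))-\ent(\Fak{\bX},\Msg(\bX))|$ by noting that the joint law of $(\bX,\Fak{\bX})$ is the same object whose marginals with $\bZ$ are KL-close, so the induced laws of $(\bX,\Msg(\Fak{\bX}))$ and of $(\Fak{\bX},\Msg(\bX))$ — both being images of $(\bX,\Fak{\bX})$ under fixed maps composed with a marginalisation — have $\ell_1$ distance at most $\sqrt{2\delta}$, and the Lipschitz-entropy estimate from before (now on the alphabet $\cX^n\times\cM$, still of size $|\cX|^n|\cM|\le|\cX|^{n}2^{nR}$, hence log-size $O(n)$) closes it. The risk is that the telescoping is not clean because $\bM$ and $\Msg(\bX)$ genuinely differ; I would handle this by carrying along the term $\ent(\bM\mid \Fak{\bX},\Msg(\bX))$ as an error term and arguing it is $O(n\kappa\sqrt{\delta})$ — or, if the intended reading is that $(\epsilon,R)$-reliability plus plausibility forces $\Msg(\bX)=\bM$ with high probability, invoking Fano's inequality to bound $\ent(\bM\mid\bX)\le n\gamma$ and folding $\gamma$ into the constants. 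Finally I would collect the four constants into a single $\kappa$ depending only on $\Prob_{Z|X}$ (the alphabet sizes $|\cX|$ and the minimum positive channel probability), completing the proof.
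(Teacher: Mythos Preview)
Your approach tracks the paper's, but for the first inequality you miss the algebraic simplification that makes the bound clean and you insert a false claim in its place. The assertion $|\log\Probc_{\bX}(\bx)|\le n\log|\cX|$ is simply not true: with a randomized encoder $\Probc_{\bX}(\bx)$ can be arbitrarily small on its support, so nothing depending only on $\Prob_{Z|X}$ controls $\log\frac{1}{\Probc_{\bX}(\bx)}$. The paper sidesteps this entirely. Rather than separating the decomposition into a ``marginal'' piece (your $\ent(\bX)-\ent(\Fak{\bX})$) and a ``joint'' piece $\sum[\Probc_{\bZ,\Fak{\bX}}-\Probc_{\bZ,\bX}]\log\frac{1}{\Probc_{\bZ,\bX}}$ to be bounded individually, the paper keeps the ratio
\[
\log\frac{\Probc_{\bX}(\bx)}{\Probc_{\bZ,\bX}(\bz,\bx)}=\log\frac{1}{\Prob_{\bZ|\bX}(\bz|\bx)}=\sum_{i=1}^n\log\frac{1}{\Prob_{Z|X}(z_i|x_i)}
\]
intact, which is at most $n\max_{(z,x):\Prob_{Z|X}(z|x)>0}\log\frac{1}{\Prob_{Z|X}(z|x)}$; the code-dependent factor $\Probc_{\bX}$ cancels out completely. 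Then $D(\Probc_{\Fak{\bX}}\|\Probc_{\bX})\le\delta$, $-D(\Probc_{\bZ,\Fak{\bX}}\|\Probc_{\bZ,\bX})\le 0$, and Pinsker on the remaining cross term finish the job, with no Jensen step and no Lipschitz-entropy estimate needed.

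Your handling of the second and third inequalities (correctly identified as the same identity) via the entropy-continuity bound $|\ent(P)-\ent(Q)|\le\|P-Q\|_1\log\frac{|\cX|^n}{\|P-Q\|_1}$ is sound in spirit, though the residual $\sqrt{2\delta}\log\frac{1}{\sqrt{2\delta}}$ is not literally of the form $n\kappa\sqrt{\delta}$ with $\kappa$ independent of $\delta$; this is harmless for the downstream converse (only $o(n)$ as $\delta\to 0$ is used), but it does not match the stated form. For the fourth inequality the paper says only ``similar reasoning,'' so your sketch is already more detailed than what is written there; the telescoping you outline is in the right spirit.
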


\begin{proof}
We explicitly only prove the first inequality. The other inequalities follow from a similar reasoning. 
\begin{align*}
\lefteqn{\MI(\bX;\bZ|\Fak{\bX})=\ent(\bZ|\Fak{\bX})-\ent(\bZ|\bX)}\\
&={\mathlarger\sum_{(\bz,\bx):\Probc_{\bZ,\Fak{\bX}}(\bz,\bx)>0}}\Probc_{\bZ,\Fak{\bX}}(\bz,\bx)\log{\frac{\Probc_{\Fak{\bX}}(\bx)}{\Probc_{\bZ,\Fak{\bX}}(\bz,\bx)}} \eqbr - {\mathlarger\sum_{(\bz,\bx):\Probc_{\bZ,\bX}(\bz,\bx)>0}}\Probc_{\bZ,\bX}(\bz,\bx)\log{\frac{\Probc_{\bX}(\bx)}{\Probc_{\bZ,\bX}(\bz,\bx)}}\\
&= {D}\left(\Probc_{\Fak{\bX}}||\Probc_{\bX}\right)- {D}\left(\Probc_{\bZ,\Fak{\bX}}||\Probc_{\bZ,\bX}\right)\eqbr+{\mathlarger\sum_{(\bz,\bx):\Probc_{\Fak{\bZ},\bX}(\bz,\bx)>0}}\Probc_{\bZ,\Fak{\bX}}(\bz,\bx)\log{\frac{\Probc_{\bX}(\bx)}{\Probc_{\bZ,\bX}(\bz,\bx)}}\eqbr - {\mathlarger\sum_{(\bz,\bx):\Probc_{\bZ,\bX}(\bz,\bx)>0}}\Probc_{\bZ,\bX}(\bz,\bx)\log{\frac{\Probc_{\bX}(\bx)}{\Probc_{\bZ,\bX}(\bz,\bx)}}\\
&\overset{(a)}{=} {D}\left(\Probc_{\Fak{\bX}}||\Probc_{\bX}\right)- {D}\left(\Probc_{\bZ,\Fak{\bX}}||\Probc_{\bZ,\bX}\right)\eqbr+{\mathlarger\sum_{(\bz,\bx):\Probc_{\bZ,\bX}(\bz,\bx)>0}}\Probc_{\bZ,\Fak{\bX}}(\bz,\bx)\log{\frac{\Probc_{\bX}(\bx)}{\Probc_{\bZ,\bX}(\bz,\bx)}}\eqbr -{\mathlarger\sum_{(\bz,\bx):\Probc_{\bZ,\bX}(\bz,\bx)>0}}\Probc_{\bZ,\bX}(\bz,\bx)\log{\frac{\Probc_{\bX}(\bx)}{\Probc_{\bZ,\bX}(\bz,\bx)}}\\
&= {D}\left(\Probc_{\Fak{\bX}}||\Probc_{\bX}\right)- {D}\left(\Probc_{\bZ,\Fak{\bX}}||\Probc_{\bZ,\bX}\right)\eqbr+{\mathlarger\sum_{(\bz,\bx):\Probc_{\bZ,\bX}(\bz,\bx)>0}}\left[\Probc_{\bZ,\Fak{\bX}}(\bz,\bx)-\Probc_{\bZ,\bX}(\bz,\bx)\right]\log{\frac{1}{\prod_{i=1}^n\Prob_{Z|X}(z_i|x_i)}}\\
&\overset{(b)}{\leq} \delta+n\sqrt{2\delta}\max_{(z,x):\Prob_{Z|X}(z|x)>0}\log{\frac{1}{\Prob_{Z|X}(z|x)}}.
\end{align*}
In the above, step $(a)$ uses the fact that $\set{(\bz,\bx):\Probc_{\Fak{\bZ},\bX}(\bz,\bx)>0}\subseteq\set{(\bz,\bx):\Probc_{\bZ,\bX}(\bz,\bx)>0}$ (as ${D}\left(\Probc_{\bZ,\Fak{\bX}}||\Probc_{\bZ,\bX}\right)<\delta<\infty$). In step $(b)$, we use the fact that $\Fak{\bX}$ is $(\delta,D)$-plausibly deniable $\bX$ for $\bZ$. The bound on the first term follow from definition, the second from non-negativity of K-L divergence, while the last term is bounded by applying Pinsker's inequality.\end{proof}

The following lemma follows from a standard chain of information inequalities with Lemma~\ref{lem:xzgivenf} as a starting point and single-letterizing the resulting expressions. 
\begin{lemma}\label{lem:degradedx}
 Let $\cC$ be an $(\epsilon,R)$-reliable code of blocklength $n$ for a channel $\Prob_{YZ|X}$, and let $\Fak{\bX}$ be $(\delta,D)$-plausibly deniable for $\bX$ given observation $\bZ$ and satisfy $\Fak{\bX}-\bX-\bZ$. Then, there exists random variables $U,X,Y$, and $Z$ satisfying $U - X - (Y, Z)$ and a constant $\converseeps=\converseeps(\epsilon,\delta)>0$ satisfying $\lim_{(\epsilon,\delta)\to (0,0)}\gamma=0$ such that
\begin{align*}
&R\leq \MI(X;Y)+\converseeps\mbox{, }\\
&D\leq \MI(X;Y|U)+\converseeps\mbox{, and}\\
&\MI(X;Z|U)\leq\converseeps.
\end{align*}
\end{lemma}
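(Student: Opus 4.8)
The plan is to bootstrap the single-letter bounds from Lemma~\ref{lem:xzgivenf} (applied with $\bW=\bX$), which supplies $\MI(\bX;\bZ|\Fak{\bX})\le n\kappa\sqrt{\delta}$, $\lvert\ent(\bX|\Fak{\bX})-\ent(\Fak{\bX}|\bX)\rvert\le n\kappa\sqrt{\delta}$, and $\lvert\ent(\bX|\Fak{\bX},\bM)-\ent(\Fak{\bX}|\bX,\Msg(\Fak{\bX}))\rvert\le n\kappa\sqrt{\delta}$, together with Fano's inequality $\ent(\bM|\bY)\le n\converseeps_{0}$ coming from $(\epsilon,R)$-reliability. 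The first step is to produce $n$-letter bounds using $\Fak{\bX}$ as the conditioning variable. The rate bound is $nR=\ent(\bM)=\MI(\bM;\bY)+\ent(\bM|\bY)\le\MI(\bX;\bY)+n\converseeps_{0}$, using Fano and the data-processing inequality along $\bM-\bX-\bY$. For deniability, set $\Fak{\bM}=\Msg(\Fak{\bX})$, a function of $\Fak{\bX}$, so that $nD=\ent(\Fak{\bM}|\bX)=\ent(\Fak{\bX}|\bX)-\ent(\Fak{\bX}|\bX,\Fak{\bM})$; substituting the second and fourth inequalities above gives $nD\le\MI(\bM;\bX|\Fak{\bX})+2n\kappa\sqrt{\delta}$. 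Since $\MI(\bM;\bX|\Fak{\bX})-\MI(\bX;\bY|\Fak{\bX})=\ent(\bX|\bY,\Fak{\bX})-\ent(\bX|\bM,\Fak{\bX})\le\ent(\bM|\bY)$ (expand $\ent(\bX|\bY,\Fak{\bX})\le\ent(\bM|\bY,\Fak{\bX})+\ent(\bX|\bM,\bY,\Fak{\bX})$ and use monotonicity), and since $\MI(\bX;\bZ|\Fak{\bX})\le n\kappa\sqrt{\delta}$, one obtains $nD\le\MI(\bX;\bY|\Fak{\bX})-\MI(\bX;\bZ|\Fak{\bX})+n\converseeps_{1}$. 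The relation $\Fak{\bX}-\bX-(\bY,\bZ)$ holds because $\Fak{\bX}$ is a function of $(\bX,\bKA)$ and the channel noise is independent of $\bKA$ given $\bX$.

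Next I would single-letterize. Introduce $\share$ uniform on $[1:n]$ and independent of everything, and set $X=X_{\share}$, $Y=Y_{\share}$, $Z=Z_{\share}$. For the rate, memorylessness yields $\MI(\bX;\bY)\le\sum_i\MI(X_i;Y_i)=n\MI(X_{\share};Y_{\share}|\share)\le n\MI(X;Y)$, so $R\le\MI(X;Y)+\converseeps$. For deniability I would apply Csisz\'ar's sum identity~\cite{ElGamalK:11} to $\MI(\bX;\bY|\Fak{\bX})-\MI(\bX;\bZ|\Fak{\bX})$, exactly as in the converse for the broadcast channel with confidential messages, rewriting it as $\sum_i\big[\MI(\bX;Y_i|Y^{i-1},Z_{i+1}^{n},\Fak{\bX})-\MI(\bX;Z_i|Y^{i-1},Z_{i+1}^{n},\Fak{\bX})\big]$; memorylessness then replaces $\bX$ by $X_i$ in each term, because $(Y_i,Z_i)$ is conditionally independent of $\bX_{\setminus i}$ given $X_i$ and any collection of other-coordinate channel variables. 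Putting $U_i=(\Fak{\bX},Y^{i-1},Z_{i+1}^{n})$ and $U=(U_{\share},\share)$, this reads $nD\le\sum_i[\MI(X_i;Y_i|U_i)-\MI(X_i;Z_i|U_i)]+n\converseeps_{1}$, and dropping the non-negative subtracted term and averaging over $\share$ gives $D\le\MI(X;Y|U)+\converseeps$.

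The delicate step is the bound $\MI(X;Z|U)\le\converseeps$, and it is precisely what dictates the Csisz\'ar-identity shape of $U$: although $\MI(\bX;\bZ|\Fak{\bX})$ is small, $U_i$ carries the extra conditioning on $Y^{i-1}$, which in general could enlarge a conditional mutual information. The key point is that, by memorylessness, $Z_i$ is conditionally independent of $Y^{i-1}$ given $X_i$ (indeed given $(X_i,Z_{i+1}^{n},\Fak{\bX})$), hence $\ent(Z_i|X_i,Y^{i-1},Z_{i+1}^{n},\Fak{\bX})=\ent(Z_i|X_i)=\ent(Z_i|X_i,Z_{i+1}^{n},\Fak{\bX})$ while $\ent(Z_i|Y^{i-1},Z_{i+1}^{n},\Fak{\bX})\le\ent(Z_i|Z_{i+1}^{n},\Fak{\bX})$, so $\MI(X_i;Z_i|U_i)\le\MI(X_i;Z_i|Z_{i+1}^{n},\Fak{\bX})$; summing and using $\sum_i\MI(X_i;Z_i|Z_{i+1}^{n},\Fak{\bX})=\MI(\bX;\bZ|\Fak{\bX})\le n\kappa\sqrt{\delta}$ (again memorylessness, as $\MI(\bX_{\setminus i};Z_i|X_i,Z_{i+1}^{n},\Fak{\bX})=0$) gives $\MI(X;Z|U)\le\kappa\sqrt{\delta}$. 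The Markov chain $U-X-(Y,Z)$ holds because every component of $U$ is a deterministic function of the channel inputs, the channel outputs at coordinates other than $\share$, Alice's private randomness, and $\share$, all of which are conditionally independent of $(Y_{\share},Z_{\share})$ given $X_{\share}$; collecting $\converseeps_{0},\converseeps_{1}$ and the $\kappa\sqrt{\delta}$ terms into a single $\converseeps=\converseeps(\epsilon,\delta)$ with $\converseeps\to0$ as $(\epsilon,\delta)\to(0,0)$ completes the argument. The main obstacle, then, is not any single inequality but finding one auxiliary $U$ that simultaneously makes the $D$-bound tight, keeps $\MI(X;Z|U)$ small, and respects $U-X-(Y,Z)$; the choice $U_i=(\Fak{\bX},Y^{i-1},Z_{i+1}^{n})$ emerging from the Csisz\'ar manipulation is what reconciles all three.
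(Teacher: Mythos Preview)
Your proposal is correct and follows essentially the same route as the paper: invoke Lemma~\ref{lem:xzgivenf} to swap $\ent(\Fak{\bX}|\bX)-\ent(\Fak{\bX}|\bX,\Msg(\Fak{\bX}))$ for $\ent(\bX|\Fak{\bX})-\ent(\bX|\Fak{\bX},\bM)$, combine with Fano to reach $nD\le\MI(\bX;\bY|\Fak{\bX})-\MI(\bX;\bZ|\Fak{\bX})+n\converseeps_1$, single-letterize via Csisz\'ar's sum identity with $U_i=(\Fak{\bX},Y^{i-1},Z_{i+1}^n)$, and control $\MI(X;Z|U)$ by dropping the $Y^{i-1}$ conditioning and collapsing to $\frac{1}{n}\MI(\bX;\bZ|\Fak{\bX})$. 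The only cosmetic difference is that you bound $\MI(\bM;\bX|\Fak{\bX})-\MI(\bX;\bY|\Fak{\bX})\le\ent(\bM|\bY)$ directly, whereas the paper inserts conditioning on $\bY$ first and then bounds $\MI(\bX;\Msg(\bX)|\bY,\Fak{\bX})$ by Fano; these are equivalent rearrangements of the same inequality.
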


\begin{proof}
Note that $\bY-\bX-\Fak{\bX}$. We use Lemma~\ref{lem:xzgivenf} below.
\begin{align*}
nD&\leq \ent(\Msg(\Fak{\bX})|\bX)\\
&= \ent(\Fak{\bX}|\bX)-\ent(\Fak{\bX}|\bX,\Msg(\Fak{\bX}))\\
&\overset{(a)}{\leq} \ent(\bX|\Fak{\bX})-\ent(\bX|\Fak{\bX},\Msg(\bX))+2n\kappa\sqrt{\delta}\\
&\leq \ent(\bX|\Fak{\bX})-\ent(\bX|\bY,\Fak{\bX},\Msg(\bX))+2n\kappa\sqrt{\delta}\\
&= \ent(\bX|\Fak{\bX})-\ent(\bX|\bY,\Fak{\bX})+\MI(\bX;\Msg(\bX)|\bY,\Fak{\bX})+2n\kappa\sqrt{\delta}\\
&\overset{(b)}{\leq} \MI(\bX;\bY|\Fak{\bX})+n\epsilon+2n\kappa\sqrt{\delta}\\
&\overset{(c)}{\leq} \MI(\bX;\bY|\Fak{\bX})-\MI(\bX;\bZ|\Fak{\bX})+n\epsilon+3n\kappa\sqrt{\delta}\\
&= \sum_{i=1}^n \Big[ \MI(\bX;Y_i|\Fak{\bX},Y^{i-1})- \MI(\bX;Z_i|\Fak{\bX},Z_{i+1}^n)\Big]+n\epsilon+3n\kappa\sqrt{\delta}\\
&\overset{(d)}{=} \sum_{i=1}^n \Big[ \MI(\bX;Y_i|\Fak{\bX},Y^{i-1},Z_{i+1}^n)- \MI(\bX;Z_i|\Fak{\bX},Y^{i-1},Z_{i+1}^n)\Big] \eqbr +n\epsilon+3n\kappa\sqrt{\delta}\\
&= 	\sum_{i=1}^n\Big[ \ent(Y_i|\Fak{\bX},Y^{i-1},Z_{i+1}^n)-\ent(Y_i|\bX,\Fak{\bX},Y^{i-1},Z_{i+1}^n) \eqbr - \ent(Z_i|\Fak{\bX},Y^{i-1},Z_{i+1}^n)+\ent(Z_i|\bX,\Fak{\bX},Y^{i-1},Z_{i+1}^n)\Big] \\
&\quad  +n\epsilon+3n\kappa\sqrt{\delta}\\
&\overset{(e)}{=} \sum_{i=1}^n\Big[ \ent(Y_i|\Fak{\bX},Y^{i-1},Z_{i+1}^n)-\ent(Y_i|X_i,\Fak{\bX},Y^{i-1},Z_{i+1}^n) \eqbr - \ent(Z_i|\Fak{\bX},Y^{i-1},Z_{i+1}^n)+\ent(Z_i|X_i,\Fak{\bX},Y^{i-1},Z_{i+1}^n)\Big] \\
&\quad +n\epsilon+3n\kappa\sqrt{\delta}.
\end{align*}
In the above, $(a)$ and $(c)$ follow from Lemma~\ref{lem:xzgivenf}, $(b)$ is a consequence of Fano's inequality, $(d)$ is an application of Csisz\'ar's sum identity~\cite{ElGamalK:11}, and $(e)$ relies on the memoryless nature of the channel to argue that $(Y_i,Z_i)-X_i-(\Fak{\bX},Y^{i-1},Z_{i+1}^n,X^{i-1},X_{i+1}^n)$ is a Markov chain. Next, we let $U_i\triangleq(\Fak{\bX},Y^{i-1},Z_{i+1}^n)$,  and let $\share$ be a random variable independent of $(\bM,\Fak{\bX},\bX,\bY,\bZ)$ that is uniformly distributed over $[1 : n]$. Note that $U_i-X_i-(Y_i,Z_i)$ is a Markov chain. The above inequalities are continued further as
\begin{align*} 
nD&\leq\sum_{i=1}^n\Big[ \ent(Y_i|U_i)-\ent(Y_i|X_i,U_i)- \ent(Z_i|U_i)+\ent(Z_i|X_i,U_i)\Big] \eqbr +n\epsilon+3n\kappa\sqrt{\delta}\\
&= n\MI(X_\share;Y_\share|U_\share,\share)-n\MI(X_\share;Z_\share|U_\share,\share)+n\epsilon+3n\kappa\sqrt{\delta}.\numberthis\label{eq:txconverse1}\\
\intertext{Next, note that}
\lefteqn{\MI(X_\share;Z_\share|U_\share,\share)}\\
&=\frac{1}{n}\sum_{i=1}^n \MI(X_i;Z_i|\Fak{\bX},Y^{i-1},Z_{i+1}^{n})\\
 &=\frac{1}{n}\sum_{i=1}^n \left[\ent(Z_i|\Fak{\bX},Y^{i-1},Z_{i+1}^{n})-\ent(Z_i|X_i,	\Fak{\bX},Y^{i-1},Z_{i+1}^{n})\right]\\
&\overset{(a)}{=} \frac{1}{n}\sum_{i=1}^n \left[\ent(Z_i|\Fak{\bX},Y^{i-1},Z_{i+1}^{n})-\ent(Z_i|\bX,	\Fak{\bX},Z_{i+1}^{n})\right]\\
&\leq \frac{1}{n}\sum_{i=1}^n \MI(\bX;Z_i|\Fak{\bX},Z_{i+1}^{n})\\
&=\frac{1}{n}\MI(\bX;\bZ|\Fak{\bX})\\
&\overset{(b)}{<}\kappa\sqrt{\delta}.
\end{align*}
In the above, $(a)$ follows by noting that for each $i$, $Z_i-X_i-(X^{i-1},X_{i+1}^n,\Fak{\bX},Y^{i-1},Z_{i+1}^n)$ is a Markov chain due to the memoryless nature of the channel $\Prob_{Z|X}$ and $(b)$ follows from Lemma~\ref{lem:xzgivenf}. Defining random variables $(U,X,Y)$ with $\Probc_{U,X,Y}(u,x,y)=\Probc_{(U_\share,\share),X_\share,Y_\share}(u,x,y)$, we  obtain
\begin{align*}
D&\leq \MI(X;Y|U)+\epsilon+3\kappa\sqrt{\delta}.\\
\intertext{Notice that $\Probc_{Y|X}$ is the same as the channel transition probability $\Prob_{Y|X}$. Further, $U-X-(Y,Z)$ is a Markov chain and $\MI(X;Z|U)<\kappa\sqrt{\delta}$. Thus, $U$ satisfies the constraints from the lemma statement. Finally, we bound the rate  as follows.}
nR&=\ent(\bM)\\
&\overset{(a)}{\leq} \MI(\bX;\bY)+n\epsilon\\
&\leq \sum_{i=1}^n\MI(X_i;Y_i) + n\epsilon\\
&= n \MI(X_\share;Y_\share|\share)+n\epsilon\\
&\leq n \MI(\share,X_\share;Y_\share)+n\epsilon\\
&=n \MI(X_\share;Y_\share)+\MI(\share;Y_\share|X_\share)+n\epsilon\\
& \overset{(b)}{=} n \MI(X_\share;Y_\share)+n\epsilon\\
&= n \MI(X;Y)+n\epsilon.
\end{align*}
In the above, $(a)$ follows from Fano's inequality and $(b)$  from the fact that $\Probc_{Y_\share|(X_\share,\share)}(y|x,t)=\Prob_{Y|X}(y|x)$. Letting $\converseeps=\epsilon+3\kappa\sqrt{\delta}$ proves the lemma.\end{proof}

We are now ready to formally prove Theorem~\ref{thm:transmitter}. The converse essentially follows from the results that we have earlier in this section. Using these, we show that every achievable $(R,D)$ must satisfy  the upper bounds stated in the theorem for some choice of an auxiliary random variable $U$ satisfying $U-X-(Y,Z)$ and $Y-U-Z$. For the direct part of the proof, we prove the achievability of all $(R,D)$ that satisfy upper bounds provided by the theorem statement when $U$ is the zero information variable of $X$ with respect to $\Prob_{Z|X}$. We note that restricting  the choice of $U$ to be the zero information variable entails no loss in optimality (as shown in Lemma~\ref{lem:zeroinformationvariable}).

\begin{proof}[Proof of Theorem~\ref{thm:transmitter}] The converse for Theorem~\ref{thm:transmitter} follows by invoking Lemma~\ref{lem:degradedx} for a vanishing sequence of $\delta$'s and by applying standard continuity arguments from Lemma~\ref{lem:continuity} to show that any $(R,D)\in\cR_{\bx}$ must satisfy 
\begin{align}
&0\leq R\leq \MI(X;Y)\mbox{ and }\label{eq:txbound1} \\
&0\leq D\leq \MI(X;Y|U)\label{eq:txbound2}
\end{align} for some random variable $U$ satisfying the Markov chains $U-X-(Y,Z)$ and $X-U-Z$. Now, applying Lemma~\ref{lem:zeroinformationvariable},  we note that $ \MI(X;Y|U)\leq  \MI(X;Y|U_0)$, where $U_0\in\cU_0$ is the zero information variable of $X$ w.r.t. $\Prob_{Z|X}$. Further, by definition,  $|\cU_0|\leq |\cX|$. Thus, to describe the region given by Eqs.~\eqref{eq:txbound1} and~\eqref{eq:txbound2}, it suffices to consider auxiliary variables $U$ whose support is of size no larger than $|\cX|$.

We now give a proof sketch for the achievability of claimed rate region. Our achievability uses a superposition code for the broadcast channel $\Prob_{Y,Z|X}$. To this end,  choose random variables $(X,U)$ satisfying the conditions in the theorem with $U$ as the zero information variable of $X$ w.r.t. $\Prob_{Z|X}$. Recall that Lemma~\ref{lem:zeroinformationvariable} guarantees that there is no loss of optimality in choosing $U$ as the zero information variable of $X$ w.r.t. $\Prob_{Z|X}$. In the following, we prove the achievability of the rate pairs that lie on the boundary of the claimed region, \emph{i.e.}, we consider $(R,D)$ where $R=\MI(X;Y) - 2\epsilon$ and $D=\MI(X;Y|U) - \epsilon$. 

We consider a superposition code via a standard random coding argument.  For any $\epsilon>0$, first, we generate $\widetilde{\cC}=\{\bu(1),\ldots,\bu(2^{n(R-D)})\}$ by drawing $u_i(j)$ independently from the distribution $\Prob_U$ for each $i\in\{1,2,\ldots,n\}$ and $j\in\{1,2,\ldots,2^{n(R-D)}\}$. Next, for each $j\in\{1,2,\ldots,2^{n(R-D)}\}$, we generate a sub-code $\cC_{j}=\{\bx(j,1),\ldots,\bx(j,2^{nD})\}$ by drawing  $x_i(j,k)$ independently from the distribution $\Prob_{X|U}(\cdot|u_i(j))$ for each $i\in\{1,2,\ldots,n\}$ and $k\in\{1,2,\ldots,2^{nD}\}$. We then form the codebook $\cC=\{\bx^{(\cC)}(\bm):\bm\in\cM\}$ by taking the union  $\cup_{j\in\set{1,2,\ldots,2^{n(R-D)}}}\cC_{j}$. Finally, the faking procedure simply accepts the transmitted codeword (say, $\bx$) and outputs a uniformly drawn codeword from the sub-code that contains $\bx$ (say, $\cC_{j}$).

Since the reliability of the above code follows from standard arguments for superposition coding (see~\cite{ElGamalK:11} for example), we skip the detailed analysis here. The plausible deniability for the code follows directly from the construction by noting that for every $\bx\in\cC_{j}$, $\bu(j)$ is precisely the sequence of the zero information symbols of $\bx$ w.r.t. $\Prob_{Z|X}$. Thus, for any $\bx,\bx'\in\cC_{j}$ and $\bz\in\cZ^n$, $\Probc_{\bX|\bZ}(\bx|\bz)=\Probc_{\bX|\bZ}(\bx'|\bz)$.\end{proof}

\subsection{Receiver Deniability}
In this section, we give the proof of our achievability for Receiver Deniability in the physically degraded channel setting. As earlier, Lemma~\ref{lem:zeroinformationvariable} shows that the rate region claimed in Theorem~\ref{thm:rxachievability} is unchanged if $V$ is restricted to be the zero information variable of $Y$ with respect to $\Prob_{YZ|X}$. In the following, we prove the achievability of $(R,D)$ that satisfy the bounds in Theorem~\ref{thm:rxachievability} with respect to an auxiliary variable $V$  that is the zero information variable of $Y$ with respect to $\Prob_{YZ|X}$. 
\begin{proof}[Proof of Theorem~\ref{thm:rxachievability}] Let $\epsilon>0$, fix a blocklength $n$, set 
\begin{equation}R=\MI(X;Y)-\rho\label{eq:rxrate}\end{equation}
for some $\rho>\epsilon$, and $|\cM|=2^{nR}$. Let $V$  be the zero information variable of $Y$ with respect to $\Prob_{YZ|X}$. Our achievability uses a random coding argument. Consider the following codebook generation procedure and the corresponding faking procedure. 
\paragraph{Codebook generation} The codebook $\cC$ is a multiset $\{\bx^{(\cC)}(\bm):\bm\in\cM\}$ that is generated by drawing each $x^{(\cC)}_i(\bm)$ independently from the distribution $\Prob_X$. Let ${\Pr}_{\cC}$ be the probability distribution over the random generation of the codebook. 
\paragraph{Encoding} For a message $\bm\in\cM$, the encoder transmits $\bx^{(\cC)}(\bm)$.
\paragraph{Decoding} Upon receiving $\by$, the decoder looks for $\bm\in\cM$ such that $(\bx^{(\cC)}(\bm),\by)\in\AEN{X,Y}$. 
\paragraph{Faking procedure} Given $\by$, the faking procedure first generates the unique $\bv$ where, for each $i$, $v_i$ represents the zero information symbol of $y_i$ w.r.t. $\Prob_{Z|Y}$. Next, $\Fak{\bY}$ is drawn from $\cY^n$ according to the conditional distribution $\Probc_{\Fak{\bY}|\bV}=\Probc_{\bY|\bV}$. Note that the distribution $\Probc_{\bY|\bV}$ depends on both the codebook as well the channel.

\paragraph{Analysis} Note that $\Fak{\bY} - \bV - (\bY,\bX,\bZ)$ is a Markov chain. For similar reasons as in transmitter deniability, these ensure that the parameter $\delta$ is zero. To this end, we first observe that for any $(\by,\bv,\bz)\in\cY^n\times\cV^n\times\cZ^n$,
\begin{align*}
	\Probc_{\bY\bV\bZ}(\by,\bv,\bz)&\overset{(a)}{=} \Probc_{\bY\bV}(\by,\bv)\Probc_{\bZ|\bY}(\bz|\by)\\
	&\overset{(b)}{=}\Probc_{\bY|\bV}(\by|\bv)\Probc_{\bV}(\bv)\Prob_{\bZ|\bY}(\bz|\by)\\
	&\overset{(c)}=\Probc_{\bY|\bV}(\by|\bv)\Probc_{\bV}(\bv)\Prob_{\bZ|\bV}(\bz|\bv),
\intertext{and }
\Probc_{\Fak{\bY}\bV\bZ}(\by,\bv,\bz)&= {\mathlarger\sum_{\by'\in\cY^n}}\Probc_{\bY,\Fak{\bY}\bV\bZ}(\by',\by,\bv,\bz)\\
&\overset{(d)}{=} {\mathlarger\sum_{\by'\in\cY^n}}\Probc_{\bY|\bV}(\by'|\bv)\Probc_{\Fak{\bY}|\bV}(\by|\bv)\Probc_{\bV}(\bv)\Probc_{\bZ|\bY}(\bz|\by')\\
&\overset{(e)}{=} {\mathlarger\sum_{\by'\in\cY^n}}\Probc_{\bY|\bV}(\by'|\bv)\Probc_{\Fak{\bY}|\bV}(\by|\bv)\Probc_{\bV}(\bv)\Prob_{\bZ|\bV}(\bz|\bv)\\
&=\Probc_{\Fak{\bY}|\bV}(\by|\bv)\Probc_{\bV}(\bv)\Prob_{\bZ|\bV}(\bz|\bv)\\
&\overset{(f)}{=}\Probc_{\bY|\bV}(\by|\bv)\Probc_{\bV}(\bv)\Prob_{\bZ|\bV}(\bz|\bv)\\
&= \Probc_{\bY\bV\bZ}(\by,\bv,\bz).\\
\end{align*}
In the above, $(a)$ and $(d)$ follow from the dependence structure of the random variables $\bY,\Fak{\bY},\bV$, and $\bZ$, $(b)$ is a consequence of the channel being physically degraded, $(c)$ and $(e)$ are true since $V$ is the zero information variable of $Y$ w.r.t. $\Prob_{Y|Z}$, and $(f)$ is implied by the faking procedure used to generate $\Fak{\bY}$.
Thus, 
\begin{align*}
\lefteqn{\delta = \KL(\Probc_{\Fak{\bY}\bZ}||\Probc_{\bY\bZ})}\\
&\leq \KL(\Probc_{\Fak{\bY}\bV\bZ}||\Probc_{\bY\bV\bZ})	\\
&=0.
\end{align*}
Next, we analyze the rates $(R,D)$ that our code and faking procedure can achieve. Let $\alpha\in(0,1)$. The reliability analysis is similar to Shannon's channel coding theorem. Let $\cG_1\triangleq\{\cC:\Probc_{\bM,\bX,\bY}(\bM\neq\bMh)<\epsilon\}$ denote the class of codebooks that have an average error probability smaller than $\epsilon$. Following the standard proof of reliability of random codes, there exists $n_1=n_1(\alpha)$ such that as long as $R<\MI(X;Y)$ and $n>n_1$, 
\begin{equation}{\Pr}_{\cC}\left(\cG_1\right)\geq1- \alpha/4\label{eq:rxreliability}.\end{equation} 
In the following we assume that $\cC\in\cG_1$ and prove that, with a high probability over the codebook generation, the rate of deniability for our faking procedure is large enough for our theorem. To this end, the following chain of inequalities give a lower bound on $D$ for the code $\cC$.
\begin{align*}
\lefteqn{nD=\ent(\Dec(\Fak{\bY})|\bY)}\\
&=\ent(\Dec(\Fak{\bY})|\bV\,\bY)\numberthis\label{eq:ach-1}\\
&=\ent(\Dec(\Fak{\bY})|\bV)\numberthis\label{eq:ach0}\\
&= \ent(\Dec(\bY)|\bV)\numberthis \label{eq:ach1}\\
&\geq \MI(\Dec(\bY);\bX|\bV)\\& = \MI(\Dec(\bY),\bY;\bX|\bV)-\MI(\bX;\bY|\bV,\Dec(\bY))\\
& = \MI(\bX;\bY|\bV)-\MI(\bX;\bY|\bV,\Dec(\bY)) \\& \geq \MI(\bX;\bY|\bV)-n\epsilon.\numberthis\label{eq:ach2}
\end{align*}
In the above, Eq.~\eqref{eq:ach-1} follows from  the fact that $\bV$ is a function of $\bY$,~\eqref{eq:ach0} is due to the Markov chain $\Fak{\bY} - \bV - \bY$, and  \eqref{eq:ach1} follows from the faking procedure inducing $\Probc_{\Fak{\bY}|\bV}=\Probc_{\bY|\bV}$. Fano's inequality implies~\ref{eq:ach2} (assuming that $\cC\in\cG_1$). Note that the above bound is a multi-letter bound that depends on the specific codebook $\cC$. A single letter bound depending only on the probability distribution of the single letter random variables follows from concentration arguments over the codebook generation process.
In the following, we argue that, with high probability over the  generation of $\cC$, $\MI(\bX;\bY|\bV)\geq n\MI(X;Y|V)-n\epsilon$ for a large enough $n$. For every $\bv\in\cV^n$, let us define the multi-set $\cC_\bv\triangleq\{\bx\in\cC:(\bx,\bv)\in\AEN{X,V}\}$.\footnote{Recall that $\cC$ is a multi-set with possibly repeated elements. As a result, $\cC_\bv$ may also contain codewords that have multiplicity greater than one.} Further, for every $\bx\in\cX^n$, let $\cM_\bx\triangleq\{\bm\in\cM:\bx^{(\cC)}(\bm)=\bx\}$.
First, note that
\begin{align*}\lefteqn{\MI(\bX;\bY|\bV)\geq \ent(\bX|\bV)-n\epsilon}\\
\intertext{by Fano's inequality  (assuming that $\cC\in\cG_1$). Then,  given a code $\cC$, there exists $\epsilon'=\epsilon'(\epsilon)$ satisfying $\lim_{\epsilon\to 0}\epsilon'=0$ and}
\lefteqn{\ent(\bX|\bV)\geq\sum_{(\bx,\bv)\in\AEN{X,V}}\Probc_{\bX,\bV}(\bx,\bv)\log\frac{\Probc_\bV(\bv)}{\Probc_{\bX,\bV}(\bx,\bv)}}\\
&= \sum_{(\bx,\bv)\in\AEN{X,V}}\Probc_{\bX}(\bx)\Prob_{\bV|\bX}(\bv|\bx)\log\frac{\sum\limits_{\bx'\in\cC}2^{-nR}\ \Prob_{\bV|\bX}(\bv|\bx')}{\Probc_\bX(\bx)\Prob_{\bV|\bX}(\bv|\bx)}\\
&\geq \sum_{(\bx,\bv)\in\AEN{X,V}}\Probc_{\bX}(\bx)\Prob_{\bV|\bX}(\bv|\bx)\log\frac{\sum\limits_{\bx'\in\cC_\bv}2^{-nR}\ \Prob_{\bV|\bX}(\bv|\bx')}{\Probc_\bX(\bx)\Prob_{\bV|\bX}(\bv|\bx)}\\
&\overset{(a)}{\geq} \sum_{(\bx,\bv)\in\AEN{X,V}}\Probc_{\bX}(\bx)\Prob_{\bV|\bX}(\bv|\bx)\log\frac{\sum\limits_{\bx'\in\cC_\bv}\Prob_{\bV|\bX}(\bv|\bx')}{|\cM_\bx|\ \Prob_{\bV|\bX}(\bv|\bx)}\\
&\overset{(b)}{\geq} \sum_{(\bx,\bv)\in\AEN{X,V}}\Probc_{\bX}(\bx)\Prob_{\bV|\bX}(\bv|\bx)\log\frac{|\cC_\bv|}{|\cM_\bx|\ }-2n\epsilon'.\numberthis\label{eq:rxmultiletter}\\
\intertext{In the above, $(a)$ is obtained by expressing $\Probc_{\bX}(\bx)$ as $2^{-nR}|\cM_\bx|$. $(b)$ follows by noting that for every $(\bx,\bv)$ belonging to $\AEN{X,V}$, $\left|\log\frac{1}{\Prob_{\bV|\bX}(\bv|\bx)}-n\ent(V|X)\right|<n\epsilon'$ for some $\epsilon'>0$ that can be made arbitrarily close to $0$ as $\epsilon$ approaches $0$. We now show that, with high probability over the random generation of $\cC$, the expression in~\eqref{eq:rxmultiletter} is lower bounded in the desired manner. To this end, define the following three desirable events over the codebook generation process.}
\cG_2&\triangleq \left\{\cC:\sum_{(\bx,\bv)\in\AEN{X,V}}\Probc_{\bX}(\bx)\Prob_{\bV|\bX}(\bv|\bx)>(1-\epsilon)\right\}\\
\cG_3&\triangleq \left\{\cC:|\cC_\bv|\geq 2^{n(R-\MI(X;V)-\epsilon'')}\ \forall\ \bv\in\AEN{V}\right\}\\
\cG_4&\triangleq\left\{\cC:|\cM_\bx|<2^{n\epsilon}\ \forall\ \bx\in\AEN{X}\right\}.\\
\intertext{In the above, $\epsilon''>0$ is a constant that is specified later. Note that if $\cC\in\cap_{i=1}^4\cG_i$, then Eq.~\eqref{eq:ach2}-\eqref{eq:rxmultiletter} imply that}
D&\geq \frac{1}{n}(1-\epsilon)\log\frac{2^{n(R-\MI(X;V)-\epsilon'')}}{2^{n\epsilon}}-2(\epsilon+\epsilon')\\
&=\left((1-\epsilon)(\MI(X;Y|V)-\rho-\epsilon-\epsilon'')-2(\epsilon+\epsilon')\right)\\
&= \MI(X;Y|V)-(\rho+3\epsilon+\epsilon\MI(X;Y|V)+2\epsilon'+\epsilon'')\\
&\geq \MI(X;Y|V)-(\rho+3\epsilon+\epsilon\log|\cX|+2\epsilon'+\epsilon'').\numberthis\label{eq:rxdeniability}
\end{align*}
We next lower bound the probabilities of each of the above events. \\
{\noindent{\em i) Event $\cG_2$:}} First observe that \begin{align*}
\lefteqbr{\sum\limits_{(\bx,\bv)\in\AEN{X,V}}\Probc_{\bX}(\bx)\Prob_{\bV|\bX}(\bv|\bx)}{\geq \frac{|\cC\cap\AEN{X}|}{|\cC|}\min_{\bx\in\AEN{X}}\sum_{\bv\in\AEN{V|\bx}}\Prob_{\bV|\bX}(\bv|\bx)\label{eq:g2a}\numberthis.}
\intertext{To bound the right hand side above, we first note that using the additive form of the Chernoff bound and the definition of strong typicality,}
\lefteqbr{\expect_\cC\left[\frac{|\cC\cap\AEN{X}|}{|\cC|}\right]}{=\sum_{\bx\in\AEN{X}}\Prob_{\bX}(\bx)}\\
&\geq 1- |\cX| \max_{\tilde{x}\in\cX}\sum_{\bx:\ \left| \frac{1}{n}|\{i:x_i=\tilde{x}\}|-\Prob_X(\tilde{x})\right|>\frac{\epsilon}{|\cX|}}\Prob_\bX\left(\bx\right)\\
&\geq 1-|\cX|\exp{\left(-\frac{n\epsilon^2\min_{\tilde{x}\in\cX}\Prob_X(\tilde{x})}{4|\cX|^2}\right)}.\label{eq:chernoff1}\numberthis\\
\intertext{In particular, for a large enough $n$, we have}
\lefteqbr{\expect_\cC\left[\frac{|\cC\cap\AEN{X}|}{|\cC|}\right]}{\geq 1-\epsilon/4.}
\intertext{Next, by standard properties of the conditionally typical set, we have, for large enough $n$,}
\lefteqbr{\sum_{\bv\in\AEN{V|\bx}}\Prob_{\bV|\bX}(\bv|\bx)}{\geq 1- \epsilon/4.\label{eq:chernoff2}\numberthis}
\end{align*} 
Combining~\eqref{eq:chernoff1} and~\eqref{eq:chernoff2}, we conclude that there exists $n^*$ such that for every $n>n^*$, \begin{equation*}\expect_\cC\left[\frac{|\cC\cap\AEN{X}|}{|\cC|}\min_{\bx\in\AEN{X}}\sum_{\bv\in\AEN{V|\bx}}\Prob_{\bV|\bX}(\bv|\bx)\right]>1-\frac{\epsilon}{2}.\end{equation*}The above expression gives, in expectation, a lower bound on the left hand side of~\eqref{eq:g2a}. A further concentration argument over the i.i.d. generation of the codebook shows the existence of $n_2=n_2(\epsilon)$ such that whenever $n>n_2(\alpha)$,
\begin{align*}
{\Pr}_{\cC}(\cG_2)&\geq 1-\alpha/4.\label{eq:event1}\numberthis
\end{align*}\\
{\noindent\em ii) Event $\cG_3$:} Next, note that for any $\bv\in\AEN{V}$, there exists $n^{\#}$ and $\epsilon''=\epsilon''(\epsilon)$ satisfying $\lim_{\epsilon\to 0}\epsilon''=0$ and 
\begin{align*}
\expect\left|\cC_\bv\right|&=2^{nR}\sum_{\bx:(\bx,\bv)\in\AEN{X,V}}\Prob_\bX(\bx)\\
&\geq 2^{n(R-\MI(X;V)-\epsilon''/2)}\\
\intertext{whenever $n>n^{\#}$. Now, since each codeword falls in $\cC_\bv$ in an independent and identical manner over the codebook generation, the true value of $\cC_\bv$ concentrates around its mean with a high probability. In particular, by applying Chernoff bound on $\cC_\bv$, we obtain that there exists $n_3=n_3(\epsilon)$ such that for every $n>n_3(\alpha)$,}
{\Pr}_{\cC}(\cG_3)&\geq{\Pr}_{\cC}(|\cC_\bv|\geq 2^{-n\epsilon''/2}\expect|\cC_\bv|)\\
& >1-\alpha/4.\label{eq:event2}\numberthis
\end{align*}
{\noindent\em iii) Event $\cG_4$:} Finally,  let $\beta=2^{n\epsilon}$, and observe that there exists $\epsilon'''=\epsilon'''(\epsilon)$ such that $\lim_{\epsilon\to 0}\epsilon'''(\epsilon)=0$ and
\begin{align*}
\lefteqbr{\log \left({\Pr}_{\cC}(\cC\notin\cG_4)\right)}{=\log{\Pr}_{\cC}{\Large(}\exists\ \cS\subseteq \cM, \bx\in\AEN{X}\mbox{ s.t. }|\cS|= \beta\eqbr\mbox{ and }\bx^{(\cC)}(\bm)=\bx\ \forall\ \bm\in\cS{\Large)}}\\  
&\leq \log\sum_{\substack{\cS\subseteq\cM\\ |\cS|=\beta}} \sum_{\bx\in\AEN{X}}\prod_{\bm\in\cS}{\Pr}_{\cC}\left(\bx^{(\cC)}(\bm)=\bx\right)\\
&= \log{|\cM|\choose\beta}+\log\sum_{\bx\in\AEN{X}}\left({\Pr}_{\cC}\left(\bx^{(\cC)}(1)=\bx\right)\right)^\beta\\ 
&\overset{(a)}{\leq} |\cM|H_b\left(\frac{\beta}{|\cM|}\right)+\log|\AEN{X}| \eqbr +\beta\log\max_{\bx\in\AEN{X}}{\Pr}_{\cC}(\bx^{(\cC)}(1)=\bx)\\
&\overset{(b)}{\leq} \beta\log\frac{|\cM|}{\beta}+(|\cM|-\beta)\log\frac{|\cM|}{|\cM|-\beta}-(\beta-1) n\ent(X) \eqbr +(\beta+1)n\epsilon'''\\
&\overset{(c)}{\leq}\beta\log\frac{|\cM|}{\beta}+\beta\log e -(\beta-1) n \ent(X)+(\beta+1)n\epsilon'''\\
&= 2^{n\epsilon}(n(R-\epsilon)+\log e)-(2^{n\epsilon}-1)n\ent(X)+(2^{n\epsilon+1})n\epsilon'''\\
&= 2^{n\epsilon}(n(R-\ent(X)-\epsilon+\epsilon''')+\log e)+n(\ent(X)+\epsilon''')\\
&\leq 2^{n\epsilon}(n(R-\MI(X;Y)-\epsilon+\epsilon''')+\log e)+n(\ent(X)+\epsilon''')\\
& = 2^{n\epsilon}(n(-\rho-\epsilon+\epsilon''')+\log e)+n(\ent(X)+\epsilon''')\label{eq:event3log}.\numberthis
\intertext{In the above, $(a)$ is a standard upper bound on ${|\cM|\choose\beta}$ in terms of the binary entropy function $H_b(\beta/|\cM|)$. $(b)$ obtained by noting that there exists $\epsilon'''$ such that $\lim_{\epsilon\to 0}\epsilon'''=0$, $|\AEN{X}|\leq2^{n(\ent(X)+\epsilon''')}$ and $\Prob_{\bX}(\bx)\leq 2^{-n(\ent(X)-\epsilon''')}$ for each $\bx\in\AEN{X}$. Lastly, $(c)$ is obtained by using the fact that for every $a>0$, $\log a=\log e\ln a\leq (a-1)\log e$. Note that as long as $\rho$  is strictly greater than $\epsilon'''-\epsilon$, the right hand side of~\eqref{eq:event3log} diverges to $-\infty$ as $n$ increases without bound. In particular, this implies that there exists $n_4$ such that for every $n>n_4(\alpha)$,}
&{\Pr}_{\cC}(\cG_4)>1-\alpha/4.\label{eq:event3}\numberthis
\end{align*}
Finally, combining~\eqref{eq:rxreliability},~\eqref{eq:event1},~\eqref{eq:event2}, and~\eqref{eq:event3} we conclude that, whenever $n>\max\{n_1,n_2,n_3,n_4\}$, with probability at least $1-\alpha$, the randomly drawn code is simultaneously $(\epsilon,R)$-reliable and $(0,D)$-plausibly deniable where $(R,D)$ satisfy the lower bounds in~\eqref{eq:rxrate} and~\eqref{eq:rxdeniability}. Since $\rho$ and $\epsilon$ can be made arbitrarily close to zero, this shows the achievability of all rates in the interior of the claimed region.\end{proof}

\suppress{\begin{remark} Unlike the transmitter deniability setting, the rate region  stated in Theorem~\ref{thm:rxachievability} may not equal the capacity region. For instance, introducing an auxiliary variable $U$ such that $U-X-(Y,Z)$ and performing superposition coding may increase the achievable rate region. This leads to the following proposition.\end{remark}
\begin{proposition}\label{prop:rxachievabilityimproved} Let $\Prob_{Z|Y}(z|y)>0$ for all $(z,y)$. Then, $\cR_{\by}$ includes all $(R,D)$ pairs satisfying 
$$0\leq R\leq \MI(X;Y),  \mbox{ and } 0\leq D\leq \MI(X;Y|U,V)$$
for some $(U,V)$ s.t. $U - X- (Y,Z)$, $Y-(U,V)-Z$, and $V-(U,Y)-(X,Z)$.
\end{proposition}
The above proposition can be proved using superposition coding (see~\cite{CsiszarK:78}, for example). Alice first generates a {\em ``cloud-centre''} codeword $\bU$ drawn from the distribution $\Prob_U$ and subsequently generates a {\em``satellite''} code $\bX$ for each $\bU$ following the distribution $\Prob_{X|U}$. Bob first decodes $\bU$, computes the zero information sequence $\bV$ of $\bY$ w.r.t. the conditional distribution $\Prob_{Z|Y,U}$, and outputs $\Fak{\bY}$ using the conditional distribution $\Probc_{\Fak{\bY}|\bU,\bV}=\Probc_{\bY|\bU,\bV}$. Note that choosing $U$ to be a constant gives the rate region of Theorem~\ref{thm:rxachievability}.

\begin{remark}
The requirement $\Prob_{Z|Y}(z|y)>0$ in Proposition~\ref{prop:rxachievabilityimproved} appears to be a technical artefact of our plausibility metric being defined in terms of the Kullback-Leibler divergence. Specifically, for the scheme outlined in Proposition~\ref{prop:rxachievabilityimproved}, this condition ensures that $\Probc_{\bZ,\bY}$ is absolutely continuous w.r.t. $\Probc_{\bZ,\Fak{\bY}}$, and hence, the divergence can be bounded even when there is non-zero probability of Bob incorrectly decoding $\bU$. It is conceivable that using a different plausibility metric such as the variational distance $\frac{1}{2}{\|\Probc_{\bY,\bZ}-\Probc_{\Fak{\bY},\bZ}\|_1}$ eliminates this requirement. 
\end{remark}}

\subsection{Discussions}
\subsubsection{An example} 
\begin{example} \label{ex:TX} Consider a channel $\Prob_{YZ|X}$ with $\cX=\cY=\cZ=\set{1,2,3}$, $Y=X$ and $\Prob_{Z|X}$ as in Figure~\ref{fig:exampleziv}, {\em i.e.}, $$\Prob_{YZ|X}(y,z|x)=\left\{\begin{array}{ll} 0.3 & (x,y,z)\in\set{(1,1,1),(2,2,1)}\\ 0.7& (x,y,z)\in\set{(1,1,2),(2,2,2)}\\ 0.4& (x,y,z)=(3,3,2)\\0.6& (x,y,z)=(3,3,3)\\0 &\mbox{otherwise.}\end{array}\right.$$
	We characterize the capacity region $\cR_{\bx}$ by restricting our choice of the auxililary random variable $U$ to the zero-information random variable. For the above conditional distribution, the zero-information random variable of $X$ w.r.t. $\Prob_{Z|X}$ takes two values: $u_{1}=\set{1,2}$ and $u_{2}=\set{3}$. Since $X=Y$, $\MI(X;Y)=\ent(X)$ and $\MI(X;Y|U)=\Prob_{X}(1)\log\frac{\Prob_{X}(\set{1,2})}{\Prob_X(1)}-\Prob_X(2)\log\frac{\Prob_{X}(\set{1,2})}{\Prob_X(2)}$. The capacity region $\cR_{\bx}$ (Figure~\ref{fig:TX}) consists of all $(R,D)$ pairs satisfying the following
	\begin{align*}
	D&\leq R\leq H_t\left(\frac{D}{2},\frac{D}{2},1-D\right)\\	
	0&\leq D\leq 1,
	\end{align*}
where $H_t(\cdot,\cdot,\cdot)$ represents the ternary entropy function. Interestingly, the capacity region depends on the conditional distribution $\Prob_{Z|X}$,  only through the zero-information variable induced by it -- all conditional distributions $\Prob_{Z|X}$ that induce the same zero-information variable have the same capacity region (assuming $\Prob_{Y|X}$ is unchanged). This is a general feature of capacity regions for the transmitter deniability problem. 
\begin{figure}[h]
\begin{center}
\begin{tikzpicture}[baseline, scale=4]

\fill[myblue!10] plot [domain={2/3}:1,samples=1000] ({\x-\x*log2(\x)-(1-\x)*log2(1-\x)},\x);
\fill[myblue!10] (0,0) -- (1,1) -- ({log2(3)},{2/3}) -- ({log2(3)},0) -- cycle;  
\draw[gray, dashed] (0,1) node[left]{$1$} -- (1,1) -- (1,0) node[below]{$1$};
\draw[gray, dashed] (0,2/3) node[left]{$2/3$} -- ({log2(3)},2/3) -- ({log2(3)},0) node[below]{$\log{3}$};
\draw[line width=.3mm,myblue] plot [ domain=0:1,samples=1000] (\x,\x) ;
\draw[line width=.3mm,myblue] plot [domain={2/3}:1,samples=1000] ({\x-\x*log2(\x)-(1-\x)*log2(1-\x)},\x);
\draw[line width=.3mm,myblue] plot [domain={2/3}:0,samples=1000] ({log2(3)},\x);
\draw[line width =0.3mm, <-,myblue] (1.5,.75) to[out=45,in=165]   (1.7,1.1) node[right] {$\cR_\bx$};
\draw[<->, line width=0.1mm] ({log2(3)+0.3},0) node[below]{$R$} -- (0,0) -- (0,1.2) node[left]{$D$};
\end{tikzpicture}\end{center}	
\caption{Capacity region $\cR_\bx$ for Example~\ref{ex:TX}.}\label{fig:TX}
\end{figure}
\end{example}

\subsubsection{Rate of deniability as the Equivocation rate} Similar to the Message Deniability setting, we can attach a secrecy interpretation to the rate of deniability for faking procedures that are plausibly deniable. The following proposition mirrors Proposition~\ref{prop:msgDEQ}.
\begin{proposition}\label{prop:cweq} Let $\Fak{\bX}$  be $(\delta,D)$-plausibly deniable for $\bX$ given $\bZ$ and satisfy the Markov chain $\Fak{\bX}-\bX-\bZ$. Then, there exists $\mu\geq 0$ depending only on $\Prob_{Z|X}$ such that
$$nD-n\mu\sqrt{\delta}\leq\ent(\bM|\bZ,\Fak{\bX})\leq nD+n\mu\sqrt{\delta}.$$
\end{proposition}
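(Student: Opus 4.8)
The plan is to mirror the proof of Proposition~\ref{prop:msgDEQ}, with Lemma~\ref{lem:xzgivenf} playing the role that Lemma~\ref{lem:convmessage} played there. Recall that in the transmitter deniability setting $nD=\ent(\Msg(\Fak{\bX})|\bX)$ and that $\Msg(\Fak{\bX})$ is a deterministic function of $\Fak{\bX}$. I would start from the identity
\[\ent(\bM|\bZ,\Fak{\bX})=\ent(\bM|\Fak{\bX})-\MI(\bM;\bZ|\Fak{\bX}),\]
and then show that $\MI(\bM;\bZ|\Fak{\bX})$ is nonnegative and at most $O(n\sqrt{\delta})$, and that $\ent(\bM|\Fak{\bX})=nD\pm O(n\sqrt{\delta})$.

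For the cross term I would write
\[\MI(\bM;\bZ|\Fak{\bX})\le\MI(\bM,\bX;\bZ|\Fak{\bX})=\MI(\bX;\bZ|\Fak{\bX})+\MI(\bM;\bZ|\bX,\Fak{\bX}).\]
The memoryless channel makes $\bZ$ conditionally independent of $(\bM,\Fak{\bX})$ given $\bX$, so the last term vanishes, while the first is at most $n\kappa\sqrt{\delta}$ by the first inequality of Lemma~\ref{lem:xzgivenf}; hence $0\le\MI(\bM;\bZ|\Fak{\bX})\le n\kappa\sqrt{\delta}$.

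For $\ent(\bM|\Fak{\bX})$ I would use the chain rule $\ent(\bM|\Fak{\bX})=\MI(\bX;\bM|\Fak{\bX})+\ent(\bM|\bX,\Fak{\bX})$ and identify $\MI(\bX;\bM|\Fak{\bX})$ with $nD$ up to $O(n\sqrt{\delta})$. Since $\Msg(\Fak{\bX})$ is a function of $\Fak{\bX}$, the chain rule gives $\ent(\Fak{\bX}|\bX)=nD+\ent(\Fak{\bX}|\bX,\Msg(\Fak{\bX}))$; now the second and fourth inequalities of Lemma~\ref{lem:xzgivenf} let me replace $\ent(\Fak{\bX}|\bX)$ by $\ent(\bX|\Fak{\bX})$ and $\ent(\Fak{\bX}|\bX,\Msg(\Fak{\bX}))$ by $\ent(\bX|\Fak{\bX},\bM)$, each at an additive cost of $n\kappa\sqrt{\delta}$, so that $nD=\ent(\bX|\Fak{\bX})-\ent(\bX|\Fak{\bX},\bM)\pm2n\kappa\sqrt{\delta}=\MI(\bX;\bM|\Fak{\bX})\pm2n\kappa\sqrt{\delta}$. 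Thus $\ent(\bM|\Fak{\bX})=nD+\ent(\bM|\bX,\Fak{\bX})\pm2n\kappa\sqrt{\delta}$.

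Assembling the pieces, the lower bound $\ent(\bM|\bZ,\Fak{\bX})\ge nD-3n\kappa\sqrt{\delta}$ is immediate, using $\ent(\bM|\bX,\Fak{\bX})\ge0$ and $\MI(\bM;\bZ|\Fak{\bX})\le n\kappa\sqrt{\delta}$. The upper bound reads $\ent(\bM|\bZ,\Fak{\bX})\le nD+\ent(\bM|\bX)+2n\kappa\sqrt{\delta}$, and the step I expect to be the main obstacle is showing $\ent(\bM|\bX)$ is negligible: for the codes relevant to transmitter deniability the encoder is (effectively) deterministic and injective, giving $\ent(\bM|\bX)=0$, and more generally, for an $(\epsilon,R)$-reliable code the Markov chain $\bM-\bX-\bY$ gives $\ent(\bM|\bX)=\ent(\bM|\bX,\bY)\le\ent(\bM|\bY)$, which Fano's inequality bounds by $O(n\epsilon)$ and which is absorbed into the error terms. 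Taking $\mu$ a suitable multiple of $\kappa$ finishes the proof.
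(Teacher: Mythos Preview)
Your proposal is correct and follows essentially the same route as the paper: both start from the identity $\ent(\bM|\bZ,\Fak{\bX})=\ent(\bM|\Fak{\bX})-\MI(\bM;\bZ|\Fak{\bX})$, rewrite this as $nD$ plus correction terms, and bound those corrections via Lemma~\ref{lem:xzgivenf} together with the non-negativity of mutual information.

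The only noteworthy difference is that you are more explicit about the residual term $\ent(\bM|\bX,\Fak{\bX})\le\ent(\bM|\bX)$ appearing in the upper bound. The paper's proof absorbs the difference $\ent(\bM|\Fak{\bX})-\ent(\Msg(\Fak{\bX})|\bX)$ directly into its appeal to Lemma~\ref{lem:xzgivenf} (treating it as another ``similar reasoning'' inequality) without isolating this term, whereas you unpack it via the second and fourth inequalities of Lemma~\ref{lem:xzgivenf} and are left with $\ent(\bM|\bX)$. Your observation that this term vanishes for injective encoders, or is $O(n\epsilon)$ by Fano's inequality under $(\epsilon,R)$-reliability via the Markov chain $\bM-\bX-\bY$, is a genuine clarification rather than a different strategy; strictly speaking the proposition as stated does not assume reliability, so the paper's terse treatment hides the same issue you have surfaced.
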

\begin{proof} The proof relies on the Lemma~\ref{lem:xzgivenf} and proceeds in similar spirit as Proposition~\ref{prop:msgDEQ}. To this end, let $\kappa$ be the constant defined in Lemma~\ref{lem:xzgivenf}. Note that
\begin{align*}
	\ent(\bM|\bZ,\Fak{\bX})&=\ent(\bM|\Fak{\bX})-\MI(\bM;\bZ|\Fak{\bX})\\
	&= nD + \ent(\bM|\Fak{\bX})-\ent(\Msg(\Fak{\bX})|\bX)\eqbr -\MI(\bM;\bZ|\Fak{\bX}).
\end{align*}
Applying the non-negativity of mutual information and Lemma~\ref{lem:xzgivenf}, we obtain 
\begin{align*}
	\ent(\bM|\bZ,\Fak{\bX})&\leq nD+\ent(\bM|\Fak{\bX})-\ent(\Msg(\Fak{\bX})|\bX) \\
	&\leq nD+n\kappa\sqrt{\delta},
\intertext{and}
\ent(\bM|\bZ,\Fak{\bX})&\geq nD-2n\kappa\sqrt{\delta}.
\end{align*}
Choosing $\mu=2\kappa$ completes the proof.\end{proof}

\section{Concluding remarks}\label{sec:discussions}
In this paper, we have considered three different models of Plausible Deniability and give achievable rates for each model while also giving tight converses for the message deniability and transmitter deniability settings.  It is evident that, at the very least, each capacity region is a subset of the {\em Rate-Equivocation} region. Intuitively, this may be interpreted as follows --  any code that has a rate of deniability $D$ has the property that the equivocation at the eavesdropper is at least $D$ (otherwise, with high probability, the eavesdropper can detect a fake response). On the other hand, it is not {\em a priori} clear whether the achievable rates for any one model considered in this paper is a subset of another -- part of the difficulty in comparing the different settings arises from the fact that in each setting, the faking procedure accepts different inputs to generate the fake output.

Digging deeper into the nature of our problem, our achievability proofs rely crucially on the summoned party's ability to identify a set of \emph{plausible} fake responses that appear roughly as likely as the true response to an eavesdropper who also observes the channel output. Further, the set of plausible responses must be identified without knowing the eavesdropper actual channel observation. To achieve this goal, our schemes ensure that the set of possible response values partitions into ``cliques'' such that each response from the clique would be plausible to the eavesdropper given any likely channel output. This simplifies our faking procedure to randomly picking one response from the clique corresponding to the true response. In our scheme for the message deniability setting, these cliques correspond to all messages that are consistent with the transmitted ``public message'',  while in the transmitter and receiver deniability settings, these cliques correspond to codewords and received vectors that are statistically consistent with the zero information variables of the actual transmitted codeword and the received vector, respectively. In each of these settings, given the clique corresponding to the true value of the summoned party's response, the eavesdropper's channel observation provides  asymptotically negligible additional information about the true value of the response.

Given our problem formulation, the above achievability idea appears natural. Perhaps surprisingly, we also show that any good faking procedure for our problem must follow the above decomposition (at least roughly). In the transmitter and receiver deniability settings, Lemmas~\ref{lem:convmessage} and~\ref{lem:xzgivenf} make this claim precise. A drastic consequence of this is that non-zero rates are possible for transmitter deniability only when non-trivial zero information variables exist with respect to the eavesdropper's channel output. We note that the existence of such variables is guaranteed only for fairly special classes of channels -- even for channels such as Binary Symmetric Channels, the only zero information variables are the channel inputs themselves. Further, the existence of non-trivial zero information variables may be rather fragile with respect to perturbations in the channel conditional probability. This is in contrast to the message deniability setting, the capacity region for plausible deniability seems somewhat robust to the channel statistics (\emph{c.f.}~\cite{SchaeferB:14} for the robustness analysis for a related problem). 

Our work potentially leads to several intriguing open questions. In settings where non-zero rates of deniability are not possible (\emph{e.g.} transmitter deniability over a binary symmetric broadcast channel), it is of interest to understand whether an asymptotically vanishing rate of communication may still be possible. Recent work on ``{\em square-root law}'' in covert communications~\cite{BashGT:13,CheBJ:13,WangWZ:16,Bloch:16} suggests such a possibility. However, unlike covert communication, the eavesdropper in our setting has potentially greater distinguishing power due to access to both the channel observation and the summoned party's response. Separately, while our work examines the broadcast channel setting, the notion of information theoretic plausible deniability readily extends to other communication settings with security oriented goals, \emph{e.g.}, secret key generation, interactive communication, and communication with public discussion. It would be interesting to examine the capacity question in these settings. Finally, we remark that while our formulation of plausible deniability relies on the asymmetry between the channel to the eavesdropper and the legitimate receiver, and the cryptographic formulation of~\cite{CanettiDNO:97,ONeillPW:11,SahaiW:14} relies on the eavesdropper's inability to  efficiently compute certain functions without knowing the receiver's private key, it would be interesting to understand whether other forms of asymmetry between the legitimate parties and the eavesdropper may be similarly exploited to obtain plausibly deniable communication.


\appendices
\section{Strong secrecy for Broadcast Channels with both Confidential and Leaked Messages}\label{app:strongsecrecysideinfo}
In the following, we consider the problem of broadcast channel with both confidential and leaked messages described in Figure~\ref{fig:secrecy}. We first give the proof of Lemma~\ref{lem:sideinfo} that gives an inner bound on the capacity region defined in Definition~\ref{def:bccsi}.
\begin{proof}[Proof of Lemma~\ref{lem:sideinfo}]
The proof essentially follows from the strategies used in ~\cite[Theorem 17.13]{CsiszarK:11},~\cite[Theorem 3]{BlochL:13},~\cite{MatH:10} to prove strong secrecy capacity region for the setting of {\em broadcast channel with confidential messages} (Figure~\ref{fig:WTCBCC}). The only difference here from the settings of~\cite{CsiszarK:11,BlochL:13,MatH:10},  is that we do not demand that the message $\bt$ be reliably decoded by Judy from her observation $\bz$. This allows us to send $\bt$ at all rates less than $\MI(V;Y)$ instead of $\min \set{\MI (V;Y), \MI(V;Z)}$ for every $(U,V)$ pair satisfying the lemma conditions. As the proof would be nearly identical to the proofs supplied in~\cite{CsiszarK:11,BlochL:13,MatH:10}, we skip the full proof of the lemma here. \end{proof}

Next, we give a lemma that allows us to modify the strong secrecy metric (condition 3 of Definition~\ref{def:bccsi}) to the Kullback-Leibler Divergence in the form suitable for our problem. 
\begin{lemma}\label{lem:reversekl} Let $\alpha\in(0,1)$ and $\beta>0$. Let $\tilde{I},J$ be random variables distributed on $\cal I$ and $\cal J$ respectively with a joint distribution $\Prob_{\tilde{I},J}$ such that $\MI(\tilde{I};J)<\beta$.  Then, there exists a random variable $I\in\cal I$ that is jointly distributed with $\tilde{I}$ and $J$ in accordance with a Markov chain $I-\tilde{I}-J$ such that the joint distribution $\Prob_{I,\tilde{I},J}$ has the following properties:  
\begin{enumerate}
\item\label{cond:error} $\Prob_{I,\tilde{I}}\left(I\neq \tilde{I}\right)>1-\alpha$.
\item\label{cond:identical} $\Prob_I(i)=\Prob_{\tilde{I}}(i)$ for all $i\in\cal I$.
\item\label{cond:mutualinfo} $\MI(I;J)<\beta$.
\item \label{cond:reverseKL} $\KL(\Prob_{I}\Prob_J||\Prob_{I,J})\leq \sqrt{2\beta}\log\left(1/\alpha\right)$.

\end{enumerate}
 \end{lemma}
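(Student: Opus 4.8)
The plan is to obtain $I$ from $\tilde I$ by a small amount of independent ``resampling'' and then verify the four requirements, of which only the last is substantive. Concretely, I would take the channel $\Prob_{I\mid\tilde I}(i\mid i')\triangleq(1-\alpha)\,\mathbbm{1}[i=i']+\alpha\,\Prob_{\tilde I}(i)$ and declare $\Prob_{I,\tilde I,J}\triangleq\Prob_{I\mid\tilde I}\,\Prob_{\tilde I,J}$; then $I-\tilde I-J$ holds by construction and the marginal on $(\tilde I,J)$ is unchanged. Operationally, $I=\tilde I$ with probability $1-\alpha$, and with the remaining probability $I$ is redrawn from $\Prob_{\tilde I}$ independently of $(\tilde I,J)$.

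Three of the properties are then immediate. Marginalizing gives $\Prob_I=(1-\alpha)\Prob_{\tilde I}+\alpha\Prob_{\tilde I}=\Prob_{\tilde I}$, which is property~\ref{cond:identical}. Consequently $\Prob_{I,\tilde I}(I=\tilde I)=1-\alpha\sum_i\Prob_{\tilde I}(i)\big(1-\Prob_{\tilde I}(i)\big)=1-\alpha\big(1-\sum_i\Prob_{\tilde I}(i)^2\big)>1-\alpha$, i.e.\ $I$ agrees with $\tilde I$ with probability exceeding $1-\alpha$, which is property~\ref{cond:error}. And $\MI(I;J)\le\MI(\tilde I;J)<\beta$ by the data-processing inequality along $I-\tilde I-J$, which is property~\ref{cond:mutualinfo}.

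For property~\ref{cond:reverseKL} I would argue as follows. Write $p_{ij}=\Prob_I(i)\Prob_J(j)$ and $r_{ij}=\Prob_{\tilde I,J}(i,j)$; since $\Prob_I=\Prob_{\tilde I}$, the distribution $p$ is exactly the product of the marginals of $r$, so $\KL(r\|p)=\MI(\tilde I;J)<\beta$, and the Markov chain gives $\Prob_{I,J}(i,j)=(1-\alpha)r_{ij}+\alpha p_{ij}=:q_{ij}$, with $q_{ij}\ge\alpha p_{ij}>0$ whenever $p_{ij}>0$ (so the divergence is finite). Fix $(i,j)$ with $p_{ij}>0$: if $r_{ij}\ge p_{ij}$ then $q_{ij}\ge p_{ij}$ and $p_{ij}\log(p_{ij}/q_{ij})\le0$; if $r_{ij}<p_{ij}$, put $t=r_{ij}/p_{ij}\in[0,1)$, so $\log(p_{ij}/q_{ij})=-\log\big((1-\alpha)t+\alpha\big)$, and since $t\mapsto-\log((1-\alpha)t+\alpha)$ is convex on $[0,1]$ with value $\log(1/\alpha)$ at $t=0$ and $0$ at $t=1$, it lies below its chord, giving $\log(p_{ij}/q_{ij})\le(1-t)\log(1/\alpha)$ and hence $p_{ij}\log(p_{ij}/q_{ij})\le(p_{ij}-r_{ij})\log(1/\alpha)$. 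Summing over all $(i,j)$, dropping the nonpositive terms, and using Pinsker's inequality in the form $\sum_{i,j}|p_{ij}-r_{ij}|\le\sqrt{2\KL(r\|p)}<\sqrt{2\beta}$ (so that $\sum_{i,j:\,r_{ij}<p_{ij}}(p_{ij}-r_{ij})\le\tfrac12\sqrt{2\beta}$), I obtain $\KL(\Prob_I\Prob_J\|\Prob_{I,J})\le\tfrac12\sqrt{2\beta}\,\log(1/\alpha)\le\sqrt{2\beta}\,\log(1/\alpha)$, as claimed.

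The hard part will be property~\ref{cond:reverseKL}: the entire point is to extract the $\sqrt\beta$ factor, whereas the crude estimate $q_{ij}\ge\alpha p_{ij}$ alone only yields the $\beta$-independent bound $\KL(\Prob_I\Prob_J\|\Prob_{I,J})\le\log(1/\alpha)$, which is useless for the intended application (where this divergence must vanish as $\beta\to0$). The convexity/chord estimate for $-\log((1-\alpha)t+\alpha)$, which trades the raw log-ratio for a quantity proportional to the local deviation $p_{ij}-r_{ij}$, is the crux; after that, Pinsker applied to $\KL(r\|p)=\MI(\tilde I;J)$ closes the argument. One point to watch is that $p$ must genuinely equal the product of the marginals of $r$, which holds precisely because the construction preserves the marginal of $\tilde I$ (property~\ref{cond:identical}), so that marginal-preservation is doing double duty.
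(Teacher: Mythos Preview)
Your proof is correct. The construction of $I$ via independent resampling with probability $\alpha$, and the verification of properties~\ref{cond:error}--\ref{cond:mutualinfo}, coincide with the paper's. For property~\ref{cond:reverseKL} you take a somewhat different route. The paper writes $\KL(p\|q)=\sum_{i,j}(p_{ij}-q_{ij})\log(p_{ij}/q_{ij})-\KL(q\|p)$, bounds the first sum by $\|p-q\|_1\max_{i,j}\log(p_{ij}/q_{ij})$ via ``H\"older'', uses $q\ge\alpha p$ to get $\max\log(p/q)\le\log(1/\alpha)$, drops the nonpositive $-\KL(q\|p)$, and then controls $\|p-q\|_1=(1-\alpha)\|p-r\|_1$ by Pinsker applied to $\KL(r\|p)=\MI(\tilde I;J)$. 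Your argument instead exploits the convexity of $t\mapsto-\log((1-\alpha)t+\alpha)$ to bound each pointwise contribution directly by $(p_{ij}-r_{ij})\log(1/\alpha)$, and then applies Pinsker to the same quantity $\KL(r\|p)=\MI(\tilde I;J)$. This is cleaner, buys an extra factor of $\tfrac12$, and sidesteps a delicacy in the paper's H\"older step: H\"older actually yields $\|p-q\|_1\max_{i,j}|\log(p_{ij}/q_{ij})|$, and while $q\ge\alpha p$ bounds $\log(p/q)$ from above by $\log(1/\alpha)$, it does not control $|\log(p/q)|$ on the set where $q_{ij}>p_{ij}$.
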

\begin{proof} In the following, we assume, without loss of generality, that $\Prob_{\tilde{I}}(\tilde{i})>0$ and $\Prob_J(j)>0$ for each $(\tilde{i},j)\in\mathcal{I}\times\mathcal{J}$. We construct the random variable $I$ explicitly as follows. First,  we define the transition probability 
$$\Prob_{I|\tilde{I}}(i|\tilde{i})=\left\{\begin{array}{ll} 1-\alpha +\alpha \Prob_{\tilde{I}}(\tilde{i})&i=\tilde{i}\\ \alpha\Prob_{\tilde{I}}(i) &i\neq\tilde{i},\end{array}\right.$$
and let  $\Prob_{I,\tilde{I},J}(i,\tilde{i},j)=\Prob_{I|\tilde{I}}(i|\tilde{i})\Prob_{\tilde{I},J}(\tilde{i},j)$ for all $(i,\tilde{i},j)\in I\times I\times J$. Clearly, $I$ equals $\tilde{I}$ with probability at least $1-\alpha$. Hence, condition~\ref{cond:error} is satisfied. Also, $\Prob_{I}(i)=\sum_{\tilde{i}\in\cal I}	\Prob_{I|\tilde{I}}(i|\tilde{i})\Prob_{\tilde{I}}(\tilde{i})=\Prob_{\tilde{I}}(i)$, which implies that condition~\ref{cond:identical} is also satisfied. Further, noting that $I-\tilde{I}-J$ is a Markov chain, by the Data Processing inequality, 
\begin{align*}
\MI(I;J)&\leq \MI	(\tilde{I};J)\\
&<\beta.
\end{align*}
Thus, condition~\ref{cond:mutualinfo} is satisfied as well. Note that 
\begin{align*}
\Prob_{I,J}(i,j)&=\sum_{\tilde{i}\in\mathcal{I}}\Prob_{I|\tilde{I}}(i|\tilde{i})\Prob_{\tilde{I},J}(\tilde{i},j)\\
&= (1-\alpha+\alpha \Prob_I(i))\Prob_{\tilde{I},J}(i,j)+ \alpha\Prob_I(i)\sum_{\tilde{i}\in\mathcal{I}\setminus\set{i}}\Prob_{\tilde{I},J}(\tilde{i},j)\\
&\geq \alpha \Prob_I(i)\Prob_J(j).\numberthis\label{eq:pij}
\end{align*}
Note that, by our assumption, $\Prob_{I}(i)\Prob_{J}(j)>0$ for each $(i,j)\in\mathcal{I}\times\mathcal{J}$. Further, Eq.~\eqref{eq:pij} implies that $\Prob_{I,J}(i,j)>0$ for each $(i,j)\in\mathcal{I}\times\mathcal{J}$. Thus, $\KL(\Prob_I\Prob_J||\Prob_{I,J})$ and $\KL(\Prob_{I,J}||\Prob_I\Prob_J)$ are finite and well-defined. Now,
\begin{align}
\lefteqn{\KL(\Prob_I\Prob_J||\Prob_{I,J})= \sum_{i\in \mathcal{I},j\in\mathcal{J}}\Prob_I(i)\Prob_{J}(j)\log\frac{\Prob_I(i)\Prob_{J}(j)}{\Prob_{I,J}(i,j)}}\\
	& =  \sum_{i\in\mathcal{I},j\in\mathcal{J}}\left(\Prob_I(i)\Prob_{J}(j)-\Prob_{I,J}(i,j)\right)\log\frac{\Prob_I(i)\Prob_{J}(j)}{\Prob_{I,J}(i,j)}\eqbr +\sum_{i\in\mathcal{I},j\in\mathcal{J}}\Prob_{I,J}(i,j)\log\frac{\Prob_I(i)\Prob_{J}(j)}{\Prob_{I,J}(i,j)}\\
&\overset{(a)}{\leq} ||\Prob_{I,J}-\Prob_{I}\Prob_J||_1\max_{i\in\mathcal{I},j\in\mathcal{J}}\log\frac{\Prob_I(i)\Prob_{J}(j)}{\Prob_{I,J}(i,j)}-\KL(\Prob_{IJ}||\Prob_I\Prob_J)\\
& \overset{(b)}{\leq }\sqrt{2\beta}\log(1/\alpha).
\end{align}
In the above, $(a)$ follows from H\"older's inequality. $(b)$ is obtained by applying the non-negativity of the Kullback-Leibler Divergence, inequality~\eqref{eq:pij}, and by noting that 
\begin{align}
||\Prob_{I,J}-\Prob_I\Prob_J||_1&=\sum_{i\in\mathcal{I},j\in\mathcal{J}}\left|\sum_{\tilde{i}\in\mathcal{I}}\Prob_{I|\tilde{I}}(i|\tilde{i})\left(\Prob_{\tilde{I},J}(\tilde{i},j)-\Prob_{\tilde{I}}(\tilde{i})\Prob_J(j)\right)\right|\\
&\leq\sum_{i\in\mathcal{I},\tilde{i}\in\mathcal{I}j\in\mathcal{J}}\Prob_{I|\tilde{I}}(i|\tilde{i})\left|\Prob_{\tilde{I},J}(\tilde{i},j)-\Prob_{\tilde{I}}(\tilde{i})\Prob_J(j)\right|\\
&=||\Prob_{\tilde{I},J}-\Prob_{\tilde{I}}\Prob_J ||_1 \\
&\overset{(a)}{\leq} \sqrt{2\KL\left(\Prob_{\tilde{I},J}||\Prob_{\tilde{I}}\Prob_J\right)}\\
&\leq  \sqrt{2\beta}.\end{align}
In the above, $(a)$ follows from Pinsker's inequality. This proves that $I$ satisfies the conditions~\ref{cond:identical}-\ref{cond:reverseKL}.\end{proof}

Finally, we give a proof of Corollary~\ref{cor:sideinfo}. 
\begin{proof}[Proof of Corollary~\ref{cor:sideinfo}] The proof follows by starting with a code from Lemma~\ref{lem:sideinfo} and using Lemma~\ref{lem:reversekl} to modify it to achieve the desired properties. Let $\tilde{\cC}$ be a code that satisfies conditions 1-3 of Definition~\ref{def:bccsi}. Let $\tilde{\bS}\in\cS$ denote the confidential message and $\bT\in\cT$ denote the leaked message for this code. Note that the random variables $\tilde{\bS},\bT$, and $\bZ$ satisfy $\MI(\tilde{\bS};\bT,\bZ)<\delta$. Next, apply Lemma~\ref{lem:reversekl} with $\tilde{\bS}$, $(\bT,\bZ)$, $\epsilon$, and $\delta$, in place of $\tilde{I}$, $J$, $\alpha$, and $\beta$, respectively to obtain the random variable $\bS$ (in place of $I$) that is jointly distributed with $\tilde{\bS}$ and $(\bT,\bZ)$ according to a distribution $\Probc_{\bS,\tilde{\bS},(\bT,\bZ)}=\Probc_{\bS|\tilde{\bS}}\Probc_{\tilde{\bS}}\Probc_{(\bT,\bZ)|\tilde{\bS}}$. 

Consider a code $\cC$ that operates as follows. Let $(\bS,\bT)$ be the messages for this code. First, Alice maps the message $\bS$ to a randomly drawn $\tilde{\bS}$ according to the transition probability $\Probc_{\tilde{\bS}|\bS}$.  Next, she encodes $(\tilde{\bS},\bT)$ using the encoder for $\tilde{\cC}$. Upon receving $\bY$, Bob uses the decoder for $\tilde{\cC}$ to output his reconstruction of $(\bS,\bT)$.

By Lemma~\ref{lem:reversekl}, the overall code  satisfies the conditions of Definition~\ref{def:bccsi} with requirement 2 replaced by $$\sum_{(\by,\bs,\bt):{\small \Dec}(\by)\neq(\bs,\bt)}\Probc_{\bY,\bS,\bT}(\by,\bs,\bt)\leq 2\epsilon.$$ In addition, the code also satisfies the following  property  $$\KL(\Probc_{\bS}\Probc_{\bT,\bZ}||\Probc_{\bS,\bT,\bZ})<\sqrt{2\delta}\log(1/\epsilon).$$  
Now, by first choosing $\epsilon$ small enough and subsequently, $\delta$ small enough, both the error probability and the K-L divergence above can be made arbitrarily small. This proves the corollary.
\end{proof}

\section{A continuity property}
\begin{lemma}\label{lem:continuity}
Let $\mathscr{P}$ be a compact subset of the set of probability measures over a finite set $\cB$. Let $\EL:\mathscr{P}\to\bbr^{+}$ and $\EM:\mathscr{P}\to\bbr^+$ be functionals that are continuous with respect to the variational distance such that $\EL^{-1}(\{0\})\neq\phi$ . Then,
\begin{equation}\lim_{\delta\to 0^+}\max_{\Prob\in\mathscr{P}:\EL(\Prob)<\delta}\EM(\Prob)=\max_{\Prob\in\mathscr{P}:\EL(\Prob)=0}\EM(\Prob).\label{eq:continuitylemma}\end{equation}
\end{lemma}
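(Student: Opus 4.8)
The plan is to establish the two inequalities $\leq$ and $\geq$ separately. The easy direction is $\geq$: for every $\delta>0$, the feasible set $\{\Prob\in\mathscr{P}:\EL(\Prob)=0\}$ is contained in $\{\Prob\in\mathscr{P}:\EL(\Prob)<\delta\}$, so the maximum over the larger set is at least the maximum over the smaller one; taking $\delta\to 0^+$ preserves this. (We should also note that, since $\EL$ is continuous and $\mathscr{P}$ is compact, $\EL^{-1}(\{0\})$ is a nonempty compact set, so the right-hand side maximum is attained and finite; likewise each maximum on the left is over a nonempty compact set since $\EL^{-1}(\{0\})\subseteq\{\EL<\delta\}$.)

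For the harder direction $\leq$, first observe that the function $\delta\mapsto \max_{\Prob\in\mathscr{P}:\EL(\Prob)<\delta}\EM(\Prob)$ is nondecreasing in $\delta$ and bounded below by the right-hand side, hence the limit as $\delta\to 0^+$ exists; call it $L$. For each $k\geq 1$ pick $\Prob_k\in\mathscr{P}$ with $\EL(\Prob_k)<1/k$ and $\EM(\Prob_k)\geq \max_{\EL(\Prob)<1/k}\EM(\Prob)-1/k$. By compactness of $\mathscr{P}$ in the variational-distance topology, a subsequence $\Prob_{k_j}$ converges to some $\Prob^\star\in\mathscr{P}$. Continuity of $\EL$ gives $\EL(\Prob^\star)=\lim_j \EL(\Prob_{k_j})=0$, so $\Prob^\star$ is feasible for the right-hand side. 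Continuity of $\EM$ gives $\EM(\Prob^\star)=\lim_j \EM(\Prob_{k_j})$, and along the subsequence $\EM(\Prob_{k_j})\geq \max_{\EL(\Prob)<1/k_j}\EM(\Prob)-1/k_j \to L$. Hence $L\le\EM(\Prob^\star)\leq \max_{\EL(\Prob)=0}\EM(\Prob)$, which combined with the $\geq$ direction yields equality.

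I expect the main (only mildly delicate) point to be bookkeeping around the limit on the left: one must argue that $\lim_{\delta\to 0^+}$ genuinely equals the limit along the particular sequence $\delta=1/k$, which is immediate from monotonicity of the $\max$ as a function of $\delta$. A secondary point worth stating explicitly is that the set of probability measures on a finite set $\cB$ is metrizable (indeed it is a compact subset of $\bbr^{|\cB|}$ and the variational distance induces the Euclidean topology there), so sequential compactness applies and ``continuous with respect to the variational distance'' is the same as continuity in the standard topology on $\mathscr{P}$; this justifies extracting a convergent subsequence and passing continuity through limits. No genuine obstacle arises beyond these routine topological observations.
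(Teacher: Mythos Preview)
Your proposal is correct and follows essentially the same route as the paper: the easy inclusion gives $\geq$, and for $\leq$ you pick near-maximizers with $\EL(\Prob_k)<1/k$, pass to a convergent subsequence by compactness, and use continuity of $\EL$ and $\EM$ to conclude. One small slip: the sets $\{\EL<\delta\}$ are relatively open in $\mathscr{P}$, not compact, so the ``max'' there is really a sup---but your argument already accounts for this by choosing $\Prob_k$ only within $1/k$ of the supremum, so nothing is affected.
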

\begin{proof}
Since $\mathscr{P}$ is compact and $\EM$ is a continuous on $\mathscr{P}$, $\EM$ is bounded. Further, as $\EM(\Prob)\geq 0$ for every $\Prob\in\mathscr{P}$, and $\max_{\Prob\in\mathscr{P}:\EL(\Prob)<\delta}\EM(\Prob)$ is an increasing function of $\delta$, the limit on the left hand side of Eq~\eqref{eq:continuitylemma} exists. Now, for any $\delta>0$,
\begin{align}
\max_{\Prob\in\mathscr{P}:\EL(\Prob)<\delta}\EM(\Prob)&\geq \max_{\Prob\in\mathscr{P}:\EL(\Prob)=0}\EM(\Prob).
\end{align}
 Taking the limit as $\delta$ approaches zero, the left hand side of Eq.~\eqref{eq:continuitylemma} is at least as large as the right hand side. Next, we show that the limit on the left hand side cannot be larger than the right hand side. 

To this end, let $M^*=\lim_{\delta\to 0}\max_{\Prob\in\mathscr{P}:\EL(\Prob)<\delta}\EM(\Prob)$. Thus, there exists a sequence  $\{\Prob^{(i)}\}_{i\in\bbn}$ in $\mathscr{P}$ such that $\EL(\Prob^{(i)})<1/i$ and $\lim_{i\to\infty}\EM(\Prob^{(i)})=M^*$. As $\mathscr{P}$ is a compact set under the variational distance, $\{\Prob^{(i)}\}_{i\in\bbn}$ contains a subsequence $\{\Prob^{(i_j)}\}_{j\in\bbn}$  that converges (in variational distance) to a limiting distribution $\Prob^*$. By continuity of $\EL$, we have $$
0\leq \EL(\Prob_\bB^*)=\lim_{j\to\infty}\EL(\Prob^{(i_j)})\leq\lim_{j\to\infty} 1/i_j=0.$$
Thus, $M^*=\EM(\Prob^*)\leq\max_{\set{\Prob\in \mathscr{P}:\EL(\Prob)=0}}\EM(\Prob)$.\end{proof}

\bibliographystyle{IEEEtran}
\bibliography{references}

\end{document}